\newenvironment{tight-itemize}
{\begin{list}{$\bullet$}
{\itemsep=1pt plus 1pt minus 1pt\parsep=0pt\labelsep=4pt
\parsep=0pt\def\makelabel##1{\hss\llap##1}}}
{\end{list}}
\DeclareMathOperator*{\E}{\mathbb{E}}
\let\Pr\relax
\DeclareMathOperator*{\Pr}{\mathbb{P}}
\DeclareMathOperator{\poly}{poly}
\newcommand{\widebar}{\overline}
\newcommand{\cluster}{W}
\newcommand{\lap}{\mathcal{L}}
\newcommand{\vol}{\mathop{vol}}
\newcommand{\eps}{\varepsilon}
\newcommand{\ceil}[1]{\lceil #1 \rceil}
\newcommand{\oracle}{\mathcal{O}}
\newcommand{\R}{\mathbb{R}}
\newcommand{\projeps}{{\proj{\nicefrac[]{1}{\eps^2}}}}
\newcommand{\projepstwo}[1]{{\proj{\nicefrac[]{1}{#1^2}}}}
\newcommand{\cm}{\textsc{CountMin}\xspace}
\newcommand{\cs}{\textsc{CountSketch}\xspace}
\newcommand{\hcs}{\textsc{Hierarchical CountSketch}\xspace}
\newcommand{\pcs}{\textsc{PartitionCountSketch}\xspace}
\newcommand{\es}{\textsc{ExpanderSketch}\xspace}
\newcommand{\enc}{\mathrm{enc}}
\newcommand{\proj}[1]{[\overline{#1}]}
\renewcommand{\log}{\lg}
\newcommand{\cutandclose}{\textsc{CutGrabClose}\xspace}
\newtheorem{theorem}{Theorem}
\newtheorem{remark}{Remark}
\newtheorem{lemma}{Lemma}
\newtheorem{corollary}{Corollary}
\newtheorem{definition}{Definition}
\newcommand{\proofbelow}{3pt}
\newcommand{\afterproof}{\hfill $\blacksquare$ \par \vspace{\proofbelow}}
\renewenvironment{proof}{\noindent\textbf{Proof.}\,}{\afterproof}
\newcommand{\EquationName}[1]{\label{eq:#1}}
\newcommand{\DefinitionName}[1]{\label{def:#1}}
\newcommand{\LemmaName}[1]{\label{lem:#1}}
\newcommand{\CorollaryName}[1]{\label{cor:#1}}
\newcommand{\SectionName}[1]{\label{sec:#1}}
\newcommand{\TheoremName}[1]{\label{thm:#1}}
\newcommand{\RemarkName}[1]{\label{rem:#1}}
\newcommand{\FigureName}[1]{\label{fig:#1}}
\newcommand{\Equation}[1]{Eq.\:\eqref{eq:#1}}
\newcommand{\Definition}[1]{Definition~\ref{def:#1}}
\newcommand{\Lemma}[1]{Lemma~\ref{lem:#1}}
\newcommand{\Corollary}[1]{Corollary~\ref{cor:#1}}
\newcommand{\Section}[1]{Section~\ref{sec:#1}}
\newcommand{\Theorem}[1]{Theorem~\ref{thm:#1}}
\newcommand{\Remark}[1]{Remark~\ref{rem:#1}}
\newcommand{\Figure}[1]{Figure~\ref{fig:#1}}
\newcommand{\Eqsub}[1]{\eqref{eq:#1}}
 \author{Kasper Green Larsen\thanks{Aarhus University. \texttt{larsen@cs.au.dk}. Supported by Center for Massive Data Algorithmics, a Center of the Danish National Research Foundation, grant DNRF84, a Villum Young Investigator Grant and an AUFF Starting Grant.}
   \and Jelani Nelson\thanks{Harvard University. \texttt{minilek@seas.harvard.edu}. Supported by NSF grant IIS-1447471 and
   CAREER award CCF-1350670, ONR Young Investigator award N00014-15-1-2388, and a Google Faculty Research Award.}
 \and Huy L. Nguy$\tilde{\hat{\mbox{e}}}$n\thanks{Toyota Technological Institute at Chicago. \texttt{hlnguyen@cs.princeton.edu}.}
 \and Mikkel Thorup\thanks{University of
   Copenhagen. \texttt{mikkel2thorup@gmail.com}. Supported in part by
   Advanced Grant DFF-0602-02499B from the Danish Council for
   Independent Research under the Sapere Aude research career programme.}.}
\title{Heavy hitters via cluster-preserving clustering}
\begin{document}

\setcounter{page}{0}

\maketitle

\thispagestyle{empty}

\begin{abstract}
In the turnstile $\ell_p$ heavy hitters problem with parameter $\eps$, one must maintain a high-dimensional vector $x\in\R^n$ subject to updates of the form $\texttt{update}(i,\Delta)$ causing the change $x_i\leftarrow x_i + \Delta$, where $i\in[n]$, $\Delta\in\R$. Upon receiving a query, the goal is to report every ``heavy hitter'' $i\in[n]$ with $|x_i| \ge \eps \|x\|_p$ as part of a list $L\subseteq[n]$ of size $O(1/\eps^p)$, i.e.\ proportional to the maximum possible number of heavy hitters. In fact we solve the stronger tail version, in which $L$ should include every $i$ such that $|x_i| \ge \eps \|x_{\proj{\nicefrac[]{1}{\eps^p}}}\|_p$, where $x_{\proj{k}}$ denotes the vector obtained by zeroing out the largest $k$ entries of $x$ in magnitude.

For any $p\in(0,2]$ the \cs of \cite{CharikarCF04} solves $\ell_p$ heavy hitters using $O(\eps^{-p}\log n)$ words of space with $O(\log n)$ update time, $O(n\log n)$ query time to output $L$, and whose output after any query is correct with high probability (whp) $1 - 1/\poly(n)$ \cite[Section 4.4]{JowhariST11}. This space bound is optimal even in the strict turnstile model \cite{JowhariST11} in which it is promised that $x_i\ge 0$ for all $i\in[n]$ at all points in the stream, but unfortunately the query time is very slow. To remedy this, the work \cite{CormodeM05} proposed the ``dyadic trick'' for the \cm sketch for $p=1$ in the strict turnstile model, which to maintain whp correctness achieves suboptimal space $O(\eps^{-1}\log^2 n)$, worse update time $O(\log^2 n)$, but much better query time $O(\eps^{-1}\poly(\log n))$. An extension to all $p\in(0,2]$ appears in \cite[Theorem 1]{KaneNPW11}, and can be obtained from \cite{Pagh13}.

We show that this tradeoff between space and update time versus query time is unnecessary. We provide a new algorithm, \es, which in the most general turnstile model achieves optimal $O(\eps^{-p}\log n)$ space, $O(\log n)$ update time, and fast $O(\eps^{-p}\poly(\log n))$ query time, providing correctness whp. In fact, a simpler version of our algorithm for $p=1$ in the strict turnstile model answers queries even {\em faster} than the ``dyadic trick'' by roughly a $\log n$ factor, dominating it in all regards. Our main innovation is an efficient reduction from the heavy hitters to a clustering problem in which each heavy hitter is encoded as some form of noisy spectral cluster in a much bigger graph, and the goal is to identify every cluster. Since every heavy hitter must be found, correctness requires that every cluster be found. We thus need a ``cluster-preserving clustering'' algorithm, that partitions the graph into clusters with the promise of not destroying any original cluster. To do this we first apply standard spectral graph partitioning, and then we use some novel combinatorial techniques to modify the cuts obtained so as to make sure that the original clusters are sufficiently preserved.  Our cluster-preserving clustering may be of broader interest much beyond heavy hitters.
\end{abstract}

\newpage

\section{Introduction}\SectionName{intro}
Finding {\em heavy hitters} in a data stream, also known as {\em elephants} or {\em frequent items}, is one of the most practically important and core problems in the study of streaming algorithms. The most basic form of the problem is simple: given a long stream of elements coming from some large universe, the goal is to report all frequent items in the stream using an algorithm with very low memory consumption, much smaller than both the stream length and universe size.

In practice, heavy hitters algorithms have been used to find popular destination addresses and heavy bandwidth users by AT{\&}T \cite{JohnsonCKMSS04}, to find popular search query terms at Google \cite{PikeDGQ05}, and to answer so-called ``iceberg queries'' in databases \cite{FangSGMU98} (to name a few of many applications). A related problem is finding superspreaders~\cite{LiuCG13}, the ``heavy hitters'' in terms of number of distinct connections, not total bandwidth.

In the theoretical study of streaming algorithms, finding heavy hitters has played a key role as a subroutine in solving many other problems. For example, the first algorithm providing near-optimal space complexity for estimating $\ell_p$ norms in a stream for $p > 2$ needed to find heavy hitters as a subroutine \cite{IndykW05}, as did several follow-up works on the same problem \cite{GangulyB09,AndoniKO11,BravermanO13,BravermanKSV14,Ganguly15}. Heavy hitters algorithms are also used as a subroutine in streaming entropy estimation \cite{ChakrabartiCM10,HarveyNO08}, $\ell_p$-sampling \cite{MonemizadehW10}, cascaded norm estimation and finding block heavy hitters \cite{JayramW09}, finding duplicates \cite{GopalanR09,JowhariST11}, fast $\ell_p$ estimation for $0 < p < 2$ \cite{NelsonW10,KaneNPW11}, and for estimating more general classes of functions of vectors updated in a data stream \cite{BravermanO10a,BravermanC15}.

In this work we develop a new heavy hitters algorithm, \es, which reduces the heavy hitters problem to a new problem we define concerning graph clustering, which is an area of interest in its own right. Our special goal is to identify {\em all} clusters of a particular type, even if they are much smaller than the graph we are asked to find them in. While our algorithm as presented is rather theoretical with large constants, the ideas in it are simple with potential practical impact.

The particular formulation we focus on is finding clusters with low external conductance in the graph, and with good connectivity properties internally (a precise definition of these clusters, which we call {\em $\epsilon$-spectral clusters}, will be given soon). Similar models have been used for finding a community within a large network, and algorithms for identifying them have applications in community detection and network analysis. In computer vision, many segmentation algorithms such as~\cite{shi2000normalized,arbelaez2011contour} model images as graphs and use graph partitioning to segment the images. In theory, there has been extensive study on the problem with many approximation algorithms~\cite{leighton1988approximate,arora2009expander}. Our focus in this work is on theoretical results, with constants that may be unacceptable in practice.

\section{Previous work}\SectionName{previous}
In this work we consider the $\ell_p$ heavy hitters problem for $0 < p\le 2$. A vector $x\in\R^n$ is maintained, initialized to the all zero vector. A parameter $\eps\in(0, 1/2)$ is also given. This is a data structural problem with updates and one allowed query that are defined as follows.
\begin{itemize}
\item \texttt{update($i, \Delta$):} Update$x_i\leftarrow x_i + \Delta$, where $i\in[n]$ and $\Delta\in\R$ has finite precision.
\item \texttt{query():} Return a set $L\subseteq[n]$ of size $|L| = O(\eps^{-p})$ containing all $\eps$-heavy hitters $i\in[n]$ under $\ell_p$. Here we say $i$ is an $\eps$-heavy hitter under $\ell_p$ if $|x_i| \ge \eps \|x_{\proj{1/\eps^p}}\|_p$, where $x_{\proj{k}}$ denotes the vector $x$ with the largest $k$ entries (in absolute value) set to zero. Note the number of heavy hitters never exceeds $2/\eps^p$.
\end{itemize}

Unless stated otherwise we consider randomized data structures in which any individual query has some failure probability $\delta$. Note our definition of $\eps$-heavy hitters includes all coordinates satisfying the usual definition requiring $|x_i| \ge \eps \|x\|_p$, and potentially more. One can thus recover all the usual heavy hitters from our result by, in post-processing, filtering $L$ to only contain indices that are deemed large by a separate \cs run in parallel. A nice feature of the more stringent version we solve is that, by a reduction in \cite[Section 4.4]{JowhariST11}, solving the $\ell_2$ version of the problem with error $\eps' = \eps^{p/2}$ implies a solution for the $\ell_p$ $\eps$-heavy hitters for any $p\in (0, 2]$. Hence for the remainder of the assume we discuss $p=2$ unless stated otherwise. We also note that any solution to $\ell_p$ heavy hitters for $p>2$, even for constant $\eps, \delta$, must use polynomial space $\Omega(n^{1-2/p})$ bits \cite{BJKS04}.

Before describing previous work, we first describe, in the terminology of \cite{Muthukrishnan05}, three different streaming models that are frequently considered.

\begin{itemize}
\item \textbf{Cash-register:} This model is also known as the {\em insertion-only} model, and it is characterized by the fact that all updates \texttt{update}$(i,\Delta)$ have $\Delta = 1$.
\item \textbf{Strict turnstile:} Each update $\Delta$ may be an arbitrary positive or negative number, but we are promised that $x_i\ge 0$ for all $i\in[n]$ at all points in the stream.
\item \textbf{General turnstile:} Each update $\Delta$ may be an arbitrary positive or negative number, and there is no promise that $x_i\ge 0$ always. Entries in $x$ may be negative.
\end{itemize}

It should be clear from the definitions that algorithms that work correctly in the general turnstile model are the most general, followed by strict turnstile, then followed by the cash-register model. Similarly, lower bounds proven in the cash-register model are the strongest, and those proven in the general turnstile model are the weakest. The strict turnstile model makes sense in situations when deletions are allowed, but items are never deleted more than they are inserted. The general turnstile model is useful when computing distances or similarity measures between two vectors $z, z'$, e.g.\ treating updates $(i,\Delta)$ to $z$ as $+\Delta$ and to $z'$ as $-\Delta$, so that $x$ represents $z - z'$.

\begin{figure}
\begin{center}
\begin{tabular}{|c|c|c|c|c|c|}
\hline
reference & space & update time & query time & randomized? & norm\\
\hline
\cite{CormodeM05}\textcolor{white}{$^*$} & $\eps^{-1}\lg n$ & $\log n$ & $n\log n$ & Y & $\ell_1$\\
\cline{2-5}
\cite{CormodeM05}$^*$ & $\eps^{-1}\lg^2 n$ & $\lg^2 n$ & $\eps^{-1}\lg^2 n$& Y & $\ell_1$\\
\hline
\cite{NNW14} & $\eps^{-2}\lg n$ & $\eps^{-1}\lg n$ & $n\lg n$ & N & $\ell_1$\\
\hline
\cite{CharikarCF04} & $\eps^{-p}\lg n$ & $\log n$ & $n\log n$ & Y & $\ell_p$, $p\in(0,2]$\\
\hline
\cite{KaneNPW11,Pagh13}$^*$ & $\eps^{-p}\lg^2 n$ & $\log^2 n$ & $\eps^{-p}\log^2 n$ & Y & $\ell_p$, $p\in(0,2]$\\
\hline
\cite{CormodeH08} & $\eps^{-p}\lg n$ & $\log n$ & $\eps^{-p} \cdot n^{\gamma}$ & Y & $\ell_p$, $p\in(0,2]$\\
\hline
\textbf{This work} & $\eps^{-p}\log n$ & $\lg n$ & $\eps^{-p}\poly(\log n)$ & Y & $\ell_p$, $p\in(0,2]$\\
\hline
\end{tabular}
\caption{Comparison of turnstile results from previous work and this work for $\ell_p$ heavy hitters when desired failure probability is $1/\mathrm{poly}(n)$. Space is measured in machine words. For \cite{CormodeH08}, $\gamma>0$ can be an arbitrarily small constant. If a larger failure probability $\delta \gg 1/\poly(n)$ is tolerable, one $\log n$ factor in space, update time, and query time in each row with an asterisk can be replaced with $\log((\log n)/(\eps\delta))$. For \cite{Pagh13}, one $\log n$ in each of those bounds can be replaced with $\log(1/(\eps\delta))$. For \cite{CormodeH08}, the query time can be made $(\eps^{-p}\log n)((\log n)/(\eps\delta))^\gamma$.}\FigureName{prev-table}
\end{center}
\end{figure}

For the heavy hitters problem, the first algorithm with provable guarantees was a deterministic algorithm for finding $\ell_1$ heavy hitters in the cash-register model \cite{MisraG82}, for solving the non-tail version of the problem in which heavy hitters satisfy $|x_i| \ge \eps\|x\|_1$. The space complexity and query time of this algorithm are both $O(1/\eps)$, and the algorithm can be implemented so that the update time is that of insertion into a dynamic dictionary on $O(1/\eps)$ items \cite{DemaineLM02} (and thus expected $O(1)$ time if one allows randomness, via hashing). We measure running time in the word RAM model and space in machine words, where a single word is assumed large enough to hold the maximum $\|x\|_\infty$ over all time as well as any $\log n$ bit integer. The bounds achieved by \cite{MisraG82} are all optimal, and hence $\ell_1$ heavy hitters in the cash-register model is completely resolved. An improved analysis of \cite{BerindeICS10} shows that this algorithm also solves the tail version of heavy hitters. In another version of the problem in which one simultaneously wants the list $L$ of $\eps$ heavy hitters together with additive $\eps'\|x\|_1$ estimates for every $i\in L$, \cite{BhattacharyyaDW16} gives space upper and lower bounds which are simultaneously optimal in terms of $\eps$, $\eps'$, $n$, and the stream length.

For $\ell_2$ heavy hitters in the cash-register model, the \cs achieves $O(\eps^{-2}\lg n)$ space. Recent works gave new algorithms using space $O(\eps^{-2}\log(1/\eps)\lg\lg n)$ \cite{BravermanCIW16} and $O(\eps^{-2}\log(1/\eps))$ \cite{braverman2016bptree}. A folklore lower bound in the cash-register model is $\Omega(1/\eps^2)$ machine words, which follows since simply encoding the names of the up to $1/\eps^2$ heavy hitters requires $\Omega(\lg(\binom{n}{1/\eps^2}))$ bits.

Despite the $\ell_1$ and $\ell_2$ heavy hitters algorithms above in the cash-register model requiring $o(\eps^{-p}\log n)$ machine words, it is known finding $\eps$-heavy hitters under $\ell_p$ for any $0<p\le 2$ in the strict turnstile model requires $\Omega(\eps^{-p}\log n)$ words of memory, even for constant probability of success \cite{JowhariST11}. This optimal space complexity is achieved even in the more general turnstile model by the \cs of \cite{CharikarCF04} for all $p\in(0,2]$ (the original work \cite{CharikarCF04} only analyzed the \cs for $p=2$, but a a very short argument of \cite[Section 4.4]{JowhariST11} shows that finding $\ell_p$ heavy hitters for $p\in(0,2)$ reduces to finding $\ell_2$ heavy hitters by appropriately altering $\eps$).

For any $p\in(0,2]$, the \cs achieves optimal $O(\eps^{-p}\lg n)$ space, the update time is $O(\lg n)$, the success probability of any query is with high probability (whp) $1 - 1/\poly(n)$, but the query time is a very slow $\Theta(n\lg n)$. In \cite{CormodeM05}, for $p=1$ in the strict turnstile model the authors described a modification of the \cm sketch they dubbed the ``dyadic trick'' which maintains whp correctness for queries and significantly improves the query time to $O(\eps^{-1}\lg^2 n)$, but at the cost of worsening the update time to $O(\lg^2 n)$ and space to a suboptimal $O(\eps^{-1}\lg^2 n)$.  An easy modification of the dyadic trick extends the same bounds to the general turnstile model and also to $\eps$-heavy hitters under $\ell_p$ for any $0<p\le 2$ with the same bounds \cite[Theorem 1]{KaneNPW11} (but with $\eps^{-1}$ replaced by $\eps^{-p}$). A different scheme using error-correcting codes in \cite[Theorem 4.1]{Pagh13} achieves the same exact bounds for $\ell_p$ heavy hitters when whp success is desired. This tradeoff in sacrificing space and update time for better query time has been the best known for over a decade for any $0<p\le 2$.

From the perspective of {\em time} lower bounds, \cite{LarsenNN15} showed that any ``non-adaptive'' turnstile algorithm for constant $\eps$ and using $\poly(\log n)$ space must have update time $\Omega(\sqrt{\log n/\log\log n})$. A non-adaptive algorithm is one in which, once the randomness used by the algorithm is fixed (e.g.\ to specify hash functions), the cells of memory probed when processing \texttt{update}($i,\Delta$) depend only on $i$ and not on the history of the algorithm. Note every known turnstile algorithm for a non-promise problem is non-adaptive.

In summary, there are several axes on which to measure the quality of a heavy hitters algorithm: space, update time, query time, and failure probability. The ideal algorithm should achieve optimal space, fast update time ($O(\log n)$ is the best we know), nearly linear query time e.g.\ $O(\eps^{-p}\poly(\log n))$, and $1/\poly(n)$ failure probability. Various previous works were able to achieve various subsets of at most three out of four of these desiderata, but none could achieve all four simultaneously.

\paragraph{Our contribution I:} We show the tradeoffs between space, update time, query time, and failure probability in previous works are unnecessary (see \Figure{prev-table}). Specifically, in the most general turnstile model we provide a new streaming algorithm, \es, which for any $0<p\le 2$ provides whp correctness for queries with $O(\lg n)$ update time, $O(\eps^{-p}\poly(\lg n))$ query time, and optimal $O(\eps^{-p}\log n)$ space. In fact in the strict turnstile model and for $p=1$, we are able to provide a simpler variant of our algorithm with query time $O(\eps^{-1}\log^{1+\gamma} n)$ for any constant $\gamma>0$, answering queries even {\em faster} than the fastest previous known algorithms achieving suboptimal update time and space, by nearly a $\log n$ factor, thus maintaining whp correctness while dominating it in all of the three axes: space, update time, and query time (see \Section{strict}).

\subsection{Cluster-preserving clustering}
Our algorithm \es operates by reducing the heavy hitters problem to a new clustering problem we formulate of finding clusters in a graph, and we then devise a new algorithm \cutandclose which solves that problem. Specifically, the \es first outputs a graph in which each heavy hitter is encoded by a well-connected cluster (which may be much smaller than the size of the total graph), then \cutandclose recovers the clusters, i.e.\ the heavy hitters, from the graph. There have been many works on finding clusters $S$ in a graph such that the conductance of the cut $(S, \bar{S})$ is small. We will only mention some of them with features similar to our setting. Roughly speaking, our goal is to find {\em all} clusters that are low conductance sets in the graph, and furthermore induce subgraphs that satisfy something weaker than having good spectral expansion. It is necessary for us to (1) be able to identify {\em all} clusters in the graph; and (2) have a quality guarantee that does not degrade with the number of clusters nor the relative size of the clusters compared with the size of the whole graph. As a bonus, our algorithm is also (3) able to work without knowing the number of clusters.

In the context of heavy hitters, requirement (1) arises since all heavy hitters need to be returned. The number of heavy hitters is not known and can be large, as is the ratio between the size of the graph and the size of a cluster (both can be roughly the square root of the size of our graph), leading to the requirement (2) above. One line of previous works~\cite{spielman2004nearly,andersen2009finding,gharan2012approximating} gives excellent algorithms for finding small clusters in a much larger graph with runtime proportional to the size of the clusters, but their approximation guarantees depend on the size of the whole graph, violating our requirement (2). In fact, requirement (2) is related to finding non-expanding small sets in a graph, an issue at the forefront of hardness of approximation~\cite{raghavendra2010graph}. As resolving this issue in general is beyond current techniques, many other works including ours attempt to solve the problem in structured special cases.  There are many other excellent works for graph clustering, but whose performance guarantees deterioriate as $k$ increases, also violating (2). The work~\cite{gharan2014partitioning} shows results in the same spirit as ours: if there are $k$ clusters and a multiplicative $\poly(k)$ gap between the $k$th and $(k+1)$st smallest eigenvalues of the Laplacian of a graph, then it is possible to partition the graph into at most $k$ high conductance induced subgraphs. Unfortunately, such a guarantee is unacceptable for our application due to more stringent requirements on the clusters being required as $k$ increases, i.e.\ the $\poly(k)$ gap in eigenvalues --- our application provides no such promise. Another work with a goal similar in spirit to ours is \cite{PengSZ15}, which given the conductances in our graphs deriving from heavy hitters would promise to find $k$ clusters $W$ up to error $\poly(k)\cdot |W|$ symmetric difference each. Unfortunately such a result is also not applicable to our problem, since in our application we could have $k \gg |W|$ so that the guarantee becomes meaningless.

A different line of works~\cite{MakarychevMV12, MakarychevMV14} give algorithms with constant approximation guarantees but without being able to identify the planted clusters, which is not possible in their models in general. In their setting, edges inside each cluster are adversarially chosen but there is randomness in the edges between clusters. Our setting is the opposite: the edges inside each cluster have nice structure while the edges between clusters are adversarial as long as there are few of them.

From the practical point of view, a drawback with several previous approaches is the required knowledge of the number of clusters. While perhaps not the most pressing issue in theory, it is known to cause problems in practice. For instance, in computer vision, if the number of clusters is not chosen correctly, algorithms like $k$-means tend to break up uniform regions in image segmentation~\cite{arbelaez2011contour}. Instead, a preferred approach is hierarchical clustering~\cite{arbelaez2011contour}, which is in the same spirit as our solution.

\begin{definition}\DefinitionName{spec-cluster}
An {\em $\epsilon$-spectral cluster} is a vertex set $W\subseteq V$ of any size satisfying the following two conditions:
First, only an $\epsilon$-fraction of the edges incident to $W$ leave $W$, that is, $|\partial(W)|\leq \epsilon \vol(W)$, where $\vol(W)$ is the sum of edge degrees of vertices inside $W$.
Second, given any subset $A$ of $W$, let 
$r=\vol(A)/\vol(W)$ and $B=W\setminus A$. Then
\[|E(A,B)|\geq(r(1-r)-\epsilon)\vol(W).\]
Note $r(1-r)\vol(W)$ is the number of edges one would expect to see between $A$ and $B$ had $W$ been a random graph with a prescribed degree distribution.
\end{definition}

\paragraph{Our contribution II:} We give a hierarchical graph partitioning algorithm combining traditional spectral methods and novel local search moves that can identify {\em all} $\epsilon$-spectral clusters in a potentially much larger graph for $\epsilon$ below some universal constant, independent of the number of clusters or the ratio between the graph size and cluster sizes. More precisely, consider a graph $G=(V,E)$.  In polynomial time, our algorithm can partition the vertices in a graph $G=(V,E)$ into subsets $V_1,\ldots V_\ell$ so that {\em every} $\epsilon$-spectral cluster $W$ matches some subset $V_i$
up to an $O(\epsilon\vol(W))$ symmetric difference. The algorithm is described in \Section{cluster}, with guarantee given by \Theorem{cluster} below.

\bigskip

\begin{theorem}\label{thm:cluster} For any given $\epsilon\leq 1/2000000$ and graph $G=(V,E)$, in polynomial time and linear space, we can find a family of 
disjoint vertex sets $U_1,\ldots U_\ell$ so that every
{\em $\epsilon$-spectral cluster\/} $\cluster$ of $G$ matches some set $U_i$ in the
sense that
\begin{itemize}
\item $\vol(\cluster\setminus U_i)\leq 3\,\epsilon\vol(\cluster)$.
\item $\vol(U_i\setminus \cluster)\leq 2250000\,\epsilon\vol(\cluster)$
\end{itemize}
Moreover, if $v\in U_i$, then $>4/9$ of its neighbors are in $U_i$.
\end{theorem}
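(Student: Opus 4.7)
The plan is to build \cutandclose as a recursive procedure that alternates a sparse-cut subroutine with a combinatorial closing step. On the current vertex set I invoke an approximation algorithm for minimum conductance (Cheeger-based suffices); if it returns a cut of conductance $O(\sqrt{\epsilon})$ I recurse on both sides, otherwise I keep the set as a candidate $U_i$. After the recursion I apply \local to each returned part: while some vertex has $\le 4/9$ of its neighbors inside its current part, move it out. The potential $\sum_i |E(U_i,U_i)|$ strictly increases with every move and is bounded by $|E|$, so \local runs in polynomial time, uses linear space, and produces the final $>4/9$ majority property by construction.

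The correctness argument rests on the interplay of the two defining properties of an $\epsilon$-spectral cluster $W$: the external bound $|\partial(W)|\le \epsilon\vol(W)$ and the internal expansion $|E(A,W\setminus A)|\ge (r(1-r)-\epsilon)\vol(W)$ for every partition of $W$. Any candidate cut $(S,\bar S)$ of conductance $O(\sqrt{\epsilon})$ that splits $W$ into $A=W\cap S$ and $B=W\cap\bar S$ with both $\vol(A),\vol(B)=\Omega(\vol(W))$ would force $\Omega(\vol(W))$ internal crossing edges into an $O(\sqrt{\epsilon})\cdot\vol(S)$ conductance budget, a contradiction. Hence every cut misplaces only $O(\epsilon)\vol(W)$ of $W$. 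Similarly, when \local moves a vertex $v\in W$ out of $U_i$, at most $4/9$ of its neighbors lie in $U_i$, while internal expansion forces most of $v$'s neighbors to lie in $W$; so many $W$-internal edges must already cross the current $U_i$-boundary, and I charge the loss either to $|\partial(W)|$ or to volume of $W$ previously displaced from $U_i$. A telescoping amortisation converts these charges into the stated $3\epsilon\vol(W)$ and $2{,}250{,}000\,\epsilon\vol(W)$ symmetric-difference bounds.

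The main obstacle I anticipate is the amortisation itself. A naive count, in which every recursion level and every local move burns a fresh $O(\epsilon)\vol(W)$ chunk, would degrade to $\Omega(\epsilon\log n)\vol(W)$ after $O(\log n)$ depth, which is too weak. To obtain a bound independent of depth I maintain a single global budget for each cluster $W$, namely its external boundary plus its internal expansion slack, both of size $\Theta(\epsilon\vol(W))$, and argue that every cut and every local move bills a disjoint portion of that budget; in particular, each \local move either widens $|\partial(W)|$ in the original graph (already bounded by $\epsilon\vol(W)$) or arises because a constant fraction of $v$'s $W$-neighbors has previously left $U_i$, a quantity I can pay against the already-accumulated $\vol(W\setminus U_i)$. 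Tuning the Cheeger constant, the $4/9$ majority threshold, and the two-sided symmetric-difference budgets against this single accounting is what produces the specific constants in the theorem and where I expect the bulk of the technical effort to lie.
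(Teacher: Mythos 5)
Your claimed contradiction in the correctness argument has a fatal gap. You assert that a cut $(S,\bar S)$ of conductance $O(\sqrt{\epsilon})$ cannot split a cluster $W$ into two $\Omega(\vol(W))$-volume pieces because that would ``force $\Omega(\vol(W))$ internal crossing edges into an $O(\sqrt{\epsilon})\cdot\vol(S)$ conductance budget.'' But this is only a contradiction when $\vol(S)=\Theta(\vol(W))$. The whole difficulty, which the paper emphasizes in its Figure~\ref{fig:badcut} discussion, is precisely the regime $\vol(S)\gg\vol(W)$: when the graph contains many clusters (or a lot of non-cluster mass), a low-conductance cut can happily bisect a small cluster while still having conductance $o(1)$, because the $\Omega(\vol(W))$ crossing edges are negligible compared to $O(\sqrt{\epsilon})\cdot\vol(S)$. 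Your argument cannot rule this out, so your recursion can lose half of $W$ on each side and never recover it.

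The paper fixes exactly this case with the \emph{grab} operation, which is entirely absent from your proposal. After closing both sides, the paper (Lemma~\ref{lem:extreme-or-mid}) shows only three possibilities for how much of $W$ lands in $S$: almost none, almost all, or \emph{roughly balanced} (between $1/3-3\epsilon$ and $2/3$ of $\vol(W)$). Your local-improvement step alone does not escape the balanced case --- a vertex with half its neighbors on each side has no incentive to move under a $4/9$/$5/9$ threshold rule. The grab (Algorithm~\ref{alg:grab}, pulling all vertices with $\geq 1/6$ of their neighbors in $S$) coupled with the internal-expansion property of $W$ is what drives $r=\vol(W\setminus S)/\vol(W)$ down by a constant $(1/5-2\epsilon)$ per round (Lemma~\ref{lem:grab}), which after two rounds pushes $S$ past the $2/3$ threshold where closure implies dominance (Lemma~\ref{lem:closure-property}). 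Without this, the invariant ``some branch of the recursion dominates $W$'' cannot be established, and your amortisation scheme has nothing to amortise against: the loss isn't a small per-level leakage that can be charged to $\partial(W)$; it's a one-time catastrophic bisection. Relatedly, you propose to apply \textsc{LocalImprovements} only \emph{after} the recursion finishes, but the paper must do the local improvements and grabs \emph{between} cutting and recursing, since the invariant must hold going into each recursive call; a post-hoc clean-up cannot reassemble a cluster that was already split across two recursive subtrees.

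Finally, a minor issue: your potential $\sum_i |E(U_i,U_i)|$ does not obviously increase when a vertex is moved ``out'' with no destination. If the move is to the other side of a two-way cut it does increase (this is Lemma~\ref{lem:improve}), but moving to a shared unassigned pool just deletes internal edges. The paper's \textsc{Main} cleaning step is careful to only remove vertices from a final $U_i$, and proves (Lemma~\ref{lem:cleaning}) that this preserves dominance because it is equivalent to local improvements of the complement of $U_i$, which Lemma~\ref{lem:local-preservation} shows is safe.
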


An interesting feature of our solution is that in optimization in practice, it is usually expected to perform local search to finish off any optimization algorithm but it is rarely the case that one can prove the benefit of the local search step. In contrast, for our algorithm, it is exactly our local search moves which guarantee that every cluster is found. 
\section{Preliminaries}\SectionName{prelim}
The letter $C$ denotes a positive constant that may change from line to line, and $[n] = \{1,\ldots,n\}$. We use ``with high probability'' (whp) to denote probability $1 - 1 /\poly(n)$.

All graphs $G = (V, E)$ discussed are simple and undirected. We say that a graph is a {\em $\lambda$-spectral expander} if the second largest eigenvalue, in magnitude, of its unnormalized adjacency matrix is at most $\lambda$. Given two vertex sets $S,T\subseteq V$, we define $E(S,T)=E\cap (S\times T)$. For $S$, we define the {\em volume} $\vol(S) = \sum_{v\in S} \deg(v)$, and {\em boundary} $\partial S=E(S,\widebar S)$ where $\widebar{S}=V\setminus S$. We specify $\vol_G$ and $\partial_G$ when the graph may not be clear from context. We define the {\em conductance} of the cut $(S, \widebar{S})$ as
$$
\phi(G, S) =  \frac{|\partial S|}{\min\{\vol(S), \vol(\widebar{S})\}}. 
$$
If $S$ or $\widebar S$ is empty, the conductance is $\infty$. We define the conductance of the entire graph $G$ by
$$
\phi(G) = \min_{\emptyset\subsetneq S\subsetneq  V} \phi(G, S) .
$$

\section{Overview of approach}\SectionName{overview}

We now provide an overview of our algorithm, \es. In \Section{reduction} we show turnstile heavy hitters with arbitrary $\eps$ reduces to several heavy hitters problems for which $\eps$ is ``large''. Specifically, after the reduction whp we are guaranteed each of the $\eps$-heavy hitters $i$ in the original stream now appears in some substream updating a vector $x'\in\R^n$, and $|x'_i| \ge (1/\sqrt{C\log n})\|x'_{\proj{C\log n}}\|_2$. That is, $i$ is an $\Omega(1/\sqrt{\log n})$-heavy hitter in $x'$. One then finds all the $\Omega(1/\sqrt{\log n})$ heavy hitters in each substream then outputs their union as the final query result. For the remainder of this overview we thus focus on solving $\eps$-heavy hitters for $\eps > 1/ \sqrt{C \log n}$, so there are at most $2/\eps^2 = O(\log n)$ heavy hitters. Our goal is to achieve failure probability $1/\poly(n)$ with space $O(\eps^{-2}\log n)$, update time $O(\log n)$, and query time $\eps^{-2}\poly(\log n) = \poly(\log n)$.

There is a solution, which we call the ``$b$-tree'' (based on \cite{CormodeH08}), which is almost ideal (see \Section{btree} and \Corollary{btree}). It achieves $O(\eps^{-2}\log n)$ space and $O(\log n)$ update time, but query time inverse polynomial in the failure probability $\delta$, which for us is $\delta = 1/\poly(n)$. Our main idea is to reduce to the case of much larger failure probability (specifically $1/\poly(\log n)$) so that using the $b$-tree then becomes viable for fast query. We accomplish this reduction while maintaining $O(\log n)$ update time and optimal space by reducing our heavy hitter problem into $m = \Theta(\log n/\log\log n)$ separate {\em partition heavy hitters problems}, which we now describe. 

In the partition $\eps$-heavy hitters problem, there is some parameter $\eps\in(0,1/2)$. There is also some partition $\mathcal{P} = \{S_1,\ldots,S_N\}$ of $[n]$, and it is presented in the form of an oracle $\oracle:[n]\rightarrow[N]$ such that for any $i\in[n]$, $\oracle(i)$ gives the $j\in[N]$ such that $i\in S_j$. In what follows the partitions will be random, with $\oracle_j$ depending on some random hash functions (in the case of adapting the $b$-tree to partition heavy hitters the $\oracle_j$ are deterministic; see \Section{btree}). Define a vector $y\in\R^N$ such that for each $j\in[N]$, $y_j = \|x_{S_j}\|_2$, where $x_S$ is the projection of $x$ onto a subset $S$ of coordinates. The goal then is to solve the $\eps$-heavy hitters problem on $y$ subject to streaming updates to $x$: we should output a list $L\subset[N]$, $|L| = O(1/\eps^2)$, containing all the $\eps$-heavy hitters of $y$. See \Section{partition} for more details.

Now we explain how we make use of partition heavy hitters. For each index $i\in[n]$ we can view $i$ as a length $\log n$ bitstring. Let $\enc(i)$ be an encoding of $i$ into $T = O(\log n)$ bits by an error-correcting code with constant rate that can correct an $\Omega(1)$-fraction of errors. Such codes exist with linear time encoding and decoding \cite{Spielman96}. We partition $\enc(i)$ into $m$ contiguous bitstrings each of length $t = T/m = \Theta(\log\log n)$. We let $\enc(i)_j$ for $j\in[m]$ denote the $j$th bitstring of length $t$ when partitioning $\enc(i)$ in this way. For our data structure, we instantiate $m$ separate partition heavy hitter data structures from \Section{btree} (a partition heavy hitter variant of the $b$-tree), $P_1,\ldots,P_m$, each with failure probability $1/\poly(\log n)$. What remains is to describe the $m$ partitions $\mathcal{P}_j$. We now describe these partitions, where each will have $N = 2^{O(t)} = \poly(\log n)$ sets.

\begin{figure}
\begin{center}
\scalebox{0.55}{\includegraphics{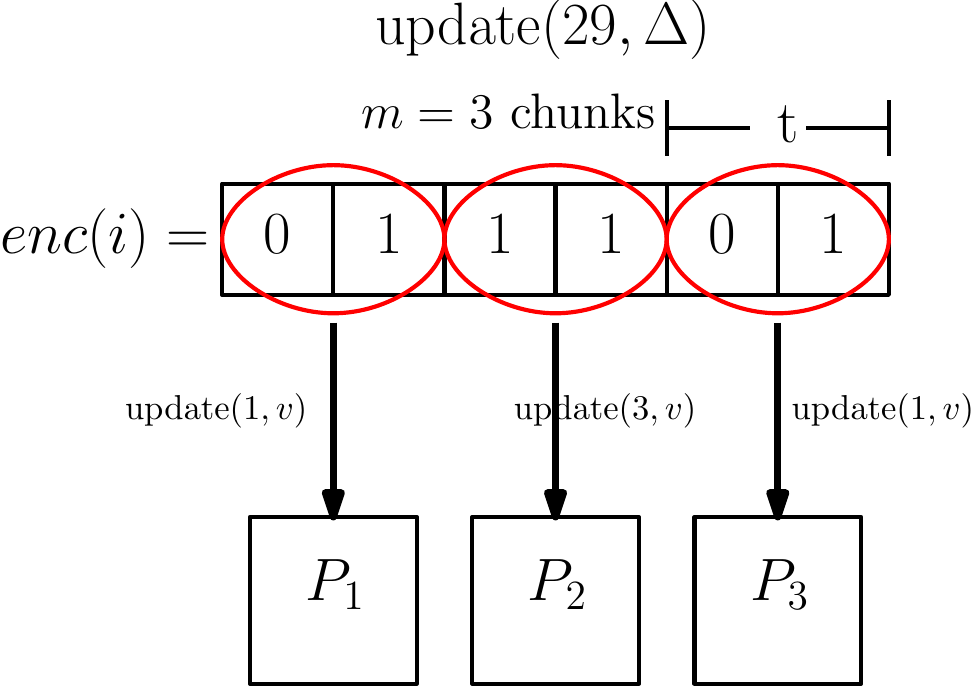}}
\end{center}
\caption{Simplified version of final data structure. The update is $x_{29} \leftarrow x_{29} + \Delta$ with $m = 3$, $t = 2$ in this example. Each $P_j$ is a $b$-tree operating on a partition of size $2^t$.}\FigureName{basic}
\end{figure}

Now, how we would {\em like} to define the partition is as follows, which unfortunately does not quite work.  For each $j\in[m]$, we let the $j$th partition $\mathcal{P}_j$ have oracle $\oracle_j(i) = \enc(i)_j$. That is, we partition the universe according to the $j$th chunks of their codewords. See \Figure{basic} for an illustration, where upon an update $(i,\Delta)$ we write for each $P_j$ the name of the partition that is being updated. The key insight is that, by definition of the partition heavy hitters problem, any partition containing a heavy hitter $i\in[n]$ is itself a heavy partition in $\mathcal{P}_j$. Then, during a query, we would query each $P_j$ separately to obtain lists $L_j \subseteq[2^t]$. Thus the $P_j$ which succeed produce $L_j$ that contain the $j$th chunks of encodings of heavy hitters (plus potentially other $t$-bit chunks). Now, let us perform some wishful thinking: suppose there was only one heavy hitter $i^*$, and furthermore each $L_j$ contained exactly $\enc(i^*)_j$ and nothing else. Then we could obtain $i^*$ simply by concatenating the elements of each $L_j$ then decoding. In fact one can show that whp no more than an arbitrarily small fraction (say $1\%$) of the $P_j$ fail, implying we would still be able to obtain $99\%$ of the chunks of $\enc(i^*)$, which is sufficient to decode. The main complication is that, in general, there may be more than one heavy hitter (there may be up to $\Theta(\log n)$ of them). Thus, even if we performed wishful thinking and pretended that every $P_j$ succeeded, and furthermore that every $L_j$ contained exactly the $j$th chunks of the encodings of heavy hitters and nothing else, it is not clear how to perform the concatenation. For example, if there are two heavy hitters with encoded indices $1100$ and $0110$ with $m = t = 2$, suppose the $P_j$ return $L_1 = \{11, 01\}$ and $L_2 = \{00, 10\}$ (i.e.\ the first and second chunks of the encodings of the heavy hitters). How would we then know which chunks matched with which for concatenation? That is, are the heavy hitter encodings $1100, 0110$, or are they $1110, 0100$? Brute force trying all possibilities is too slow, since $m = \Theta(\log n/\log\log n)$ and each $|L_j|$ could be as big as $\Theta(\log n)$, yielding $(C\log n)^m = \poly(n)$ possibilities. In fact this question is quite related to the problem of {\em list-recoverable codes} (see \Section{list-recovery}), but since no explicit codes are known to exist with the efficiency guarantees we desire, we proceed in a different direction.

\newlength\Colsep
\setlength\Colsep{10pt}

\begin{figure}
\begin{center}
\noindent\begin{minipage}{3.3in}
\begin{minipage}[c][6cm][c]{\dimexpr0.25\textwidth-0.5\Colsep\relax}
\scalebox{0.25}{\includegraphics{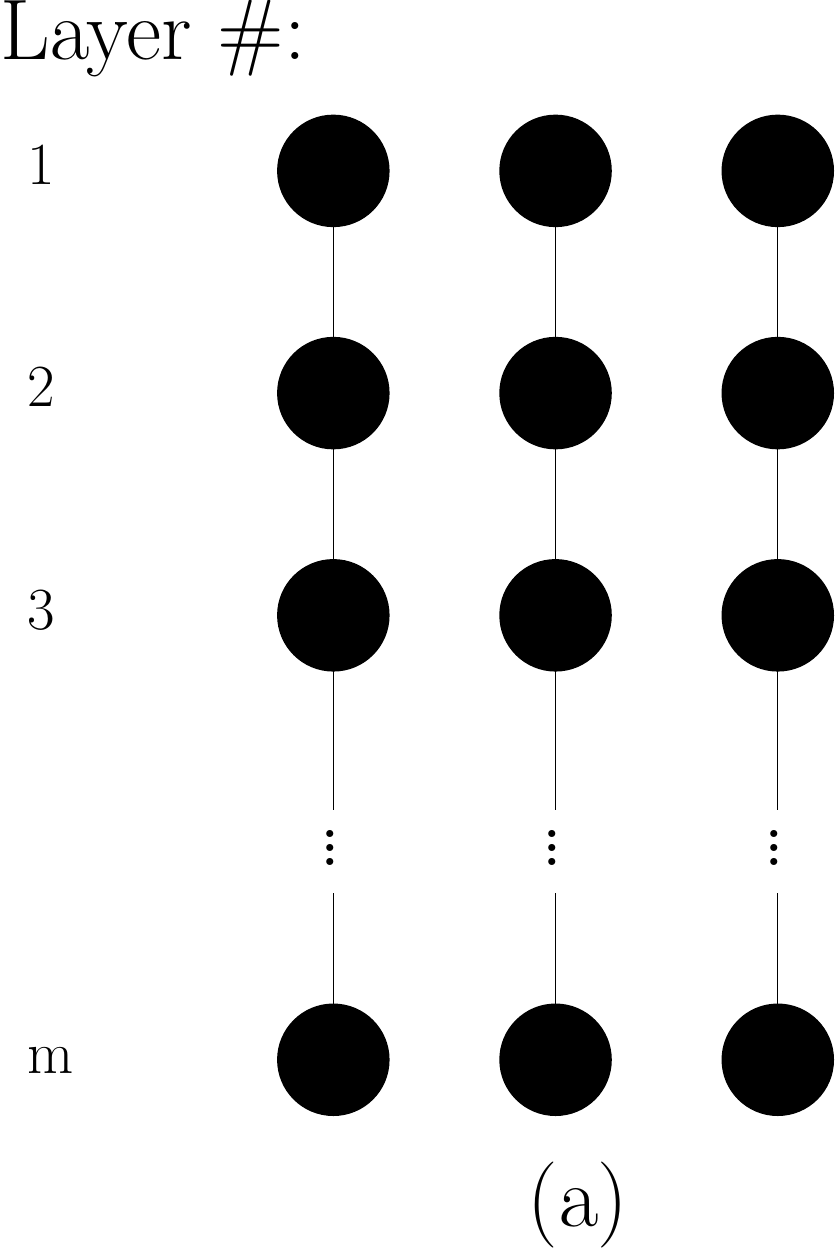}}
\end{minipage}\hfill
\hspace{.2in}\begin{minipage}[c][6cm][c]{\dimexpr0.25\textwidth-0.5\Colsep\relax}
\scalebox{0.25}{\includegraphics{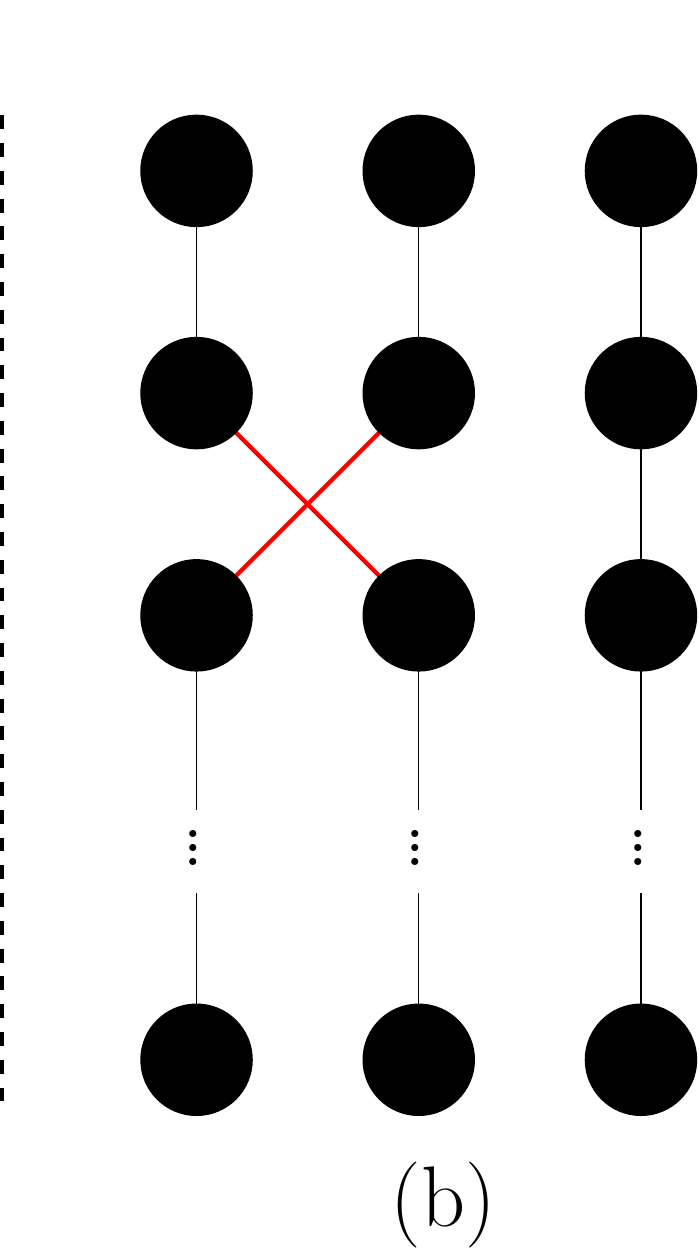}}
\end{minipage}\hfill
\begin{minipage}[c][6cm][c]{\dimexpr0.25\textwidth-0.5\Colsep\relax}
\scalebox{0.25}{\includegraphics{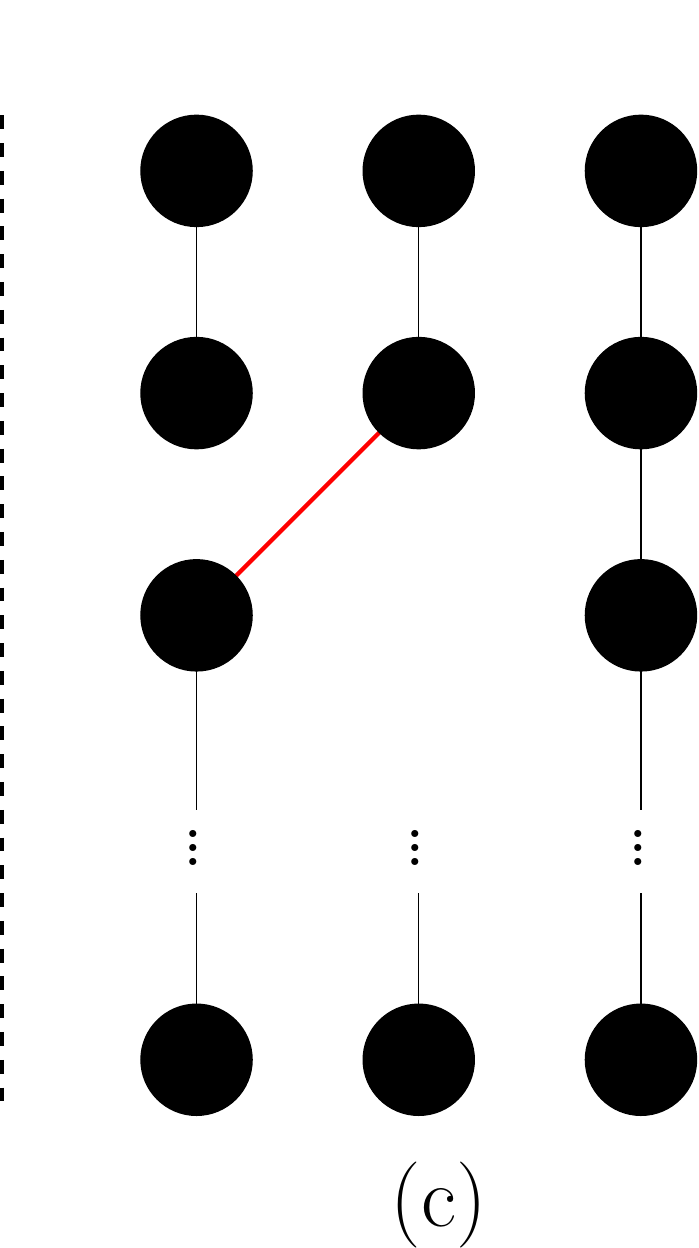}}
\end{minipage}\hfill
\begin{minipage}[c][6cm][c]{\dimexpr0.25\textwidth-0.5\Colsep\relax}
\scalebox{0.25}{\includegraphics{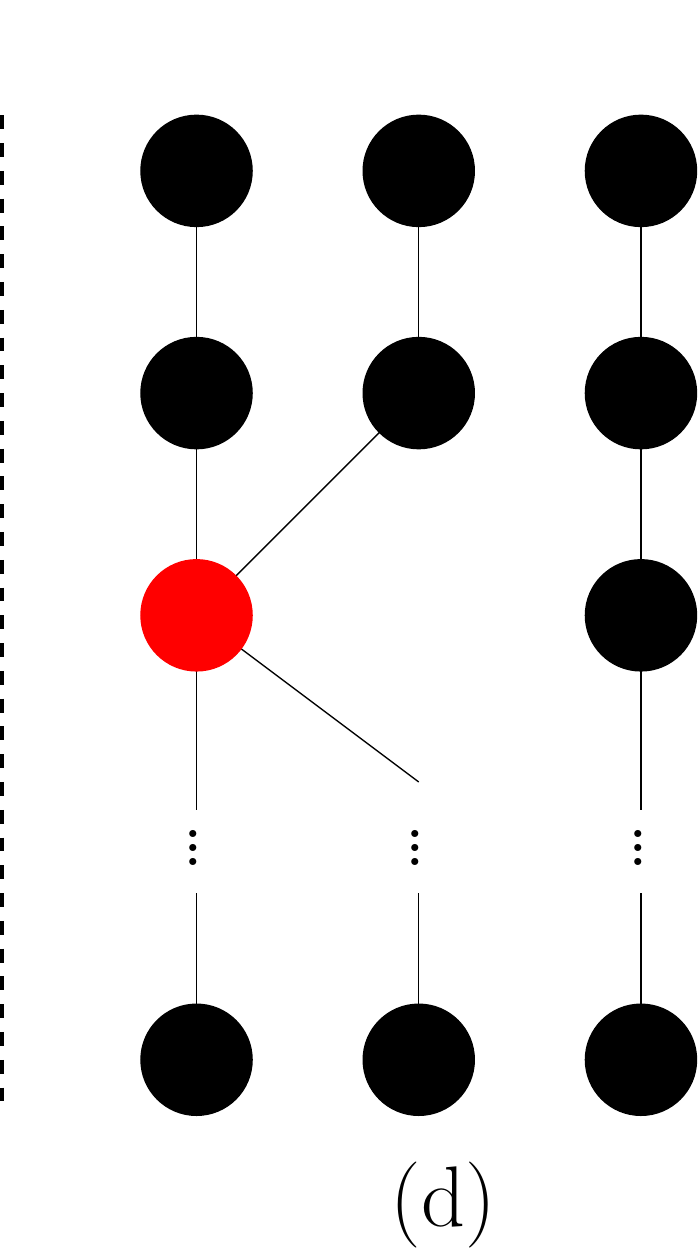}}
\end{minipage}
\end{minipage}
\end{center}
\vspace{-.5in}\caption{Each vertex in row $j$ corresponds to an element of $L_j$, i.e.\ the heavy hitter chunks output by $P_j$. When indices in $\mathcal{P}_j$ are partitioned by $h_j(i)\circ \enc(i)_j\circ h_{j+1}(i)$, we connect chunks along paths. Case (a) is the ideal case, when all $j$ are good. In (b) $P_2$ failed, producing a wrong output that triggered incorrect edge insertions. In (c) both $P_2$ and $P_3$ failed, triggering an incorrect edge and a missing vertex, respectively. In (d) two heavy hitters collided under $h_3$, causing their vertices to have the same name thereby giving the appearance of a merged vertex. Alternatively, light items masking as a heavy hitter might have appeared in $L_3$ with the same $h_3$ evaluation as a heavy hitter but different $h_4$ evaluation, causing the red vertex to have two outgoing edges to level $4$.}\FigureName{badpath}
\end{figure}

To aid us in knowing which chunks to concatenate with which across the $L_j$, a first attempt (which also does not quite work) is as follows. Define $m$ pairwise independent hash functions $h_1,\ldots,h_m:[n]\rightarrow[\poly(\log n)]$. Since there are $O(\log n)$ heavy hitters, any given $h_j$ perfectly hashes them with decent probability. Now rather than partitioning according to $\oracle_j(i) = \enc(i)_j$, we imagine setting $\oracle_j(i) = h_j(i)\circ \enc(i)_j\circ h_{j+1}(i)$ where $\circ$ denotes concatenation of bitstrings. Define an index $j\in[m]$ to be {\em good} if (a) $P_j$ succeeds, (b) $h_j$ perfectly hashes all heavy hitters $i\in[n]$, and (c) for each heavy hitter $i$, the total $\ell_2$ weight from non-heavy hitters hashing to $h_j(i)$ is $o((1/\sqrt{\log n})\|x_{\projeps}\|_2$. A simple argument shows that whp a $1-\epsilon$ fraction of the $j\in[m]$ are good, where $\epsilon$ can be made an arbitrarily small positive constant. Now let us perform some wishful thinking: if {\em all} $j\in[m]$ are good, and furthermore no non-heavy elements appear in $L_j$ with the same $h_j$ but different $h_{j+1}$ evaluation as an actual heavy hitter, then the indices in $L_j$ tell us which chunks to concatenate within $L_{j+1}$, so we can concatenate, decode, then be done. Unfortunately a small constant fraction of the $j\in[m]$ are not good, which prevents this scheme from working (see \Figure{badpath}). Indeed, in order to succeed in a query, for each heavy hitter we must correctly identify a large connected component of the vertices corresponding to that heavy hitter's path --- that would correspond to containing a large fraction of the chunks, which would allow for decoding. Unfortunately, paths are not robust in having large connected component subgraphs remaining even for $O(1)$ bad levels.

The above consideration motivates our final scheme, which uses an expander-based idea first proposed in \cite{GilbertLPS14} in the context of ``for all'' $\ell_1/\ell_1$ sparse recovery, a problem in compressed sensing. Although our precise motivation for the next step is slightly different than in \cite{GilbertLPS14}, and our query algorithm and our definition of ``robustness'' for a graph  will be completely different, the idea of connecting chunks using expander graphs is very similar to an ingredient in that work. The idea is to replace the path in the last paragraph by a graph which is robust to a small fraction of edge insertions and deletions, still allowing the identification of a large connected component in such scenarios. Expander graphs will allow us to accomplish this. For us, ``robust'' will mean that over the randomness in our algorithm, whp each corrupted expander is still a spectral cluster as defined in \Section{cluster}. For \cite{GilbertLPS14}, robustness meant that each corrupted expander still contains an induced small-diameter subgraph (in fact an expander) on a small but constant fraction of the vertices, which allowed them a recovery procedure based on a shallow breadth-first search. They then feed the output of this breadth-first search into a recovery algorithm for an existing list-recoverable code (namely Parvaresh-Vardy codes). Due to suboptimality of known list-recoverable codes, such an approach would not allow us to obtain our optimal results.

\paragraph{An expander-based approach:} Let $F$ be an arbitrary $d$-regular connected graph on the vertex set $[m]$ for some $d = O(1)$. For $j\in [m]$, let $\Gamma(j)\subset [m]$ be the set of neighbors of vertex $j$. We partition $[n]$ according to $\oracle_j(i) = z(i)_j = h_j(i)\circ \enc(i)_j \circ h_{\Gamma(j)_1}\circ\cdots \circ h_{\Gamma(j)_d}$ where $\Gamma(j)_k$ is the $k$th neighbor of $j$ in $F$. Given some such $z$, we say its {\em name} is the first $s = O(\log\log n)$ bits comprising the $h_j$ portion of the concatenation. Now, we can imagine a graph $G$ on the layered vertex set $V = [m]\times [2^s]$ with $m$ layers. If $L_j$ is the output of a heavy hitter query on $P_j$, we can view each element $z$ of $L_j$ as suggesting $d$ edges to add to $G$, where each such $z$ connects $d$ vertices in various layers of $V$ to the vertex in layer $j$ corresponding to the name of $z$. The way we actually insert the edges is as follows. First, for each $j\in[m]$ we instantiate a partition point query structure $Q_j$ as per \Lemma{pcs} with partition $\mathcal{P}_j$, failure probability $1/\poly(\log n)$, and error parameter $c\eps$ for a small constant $c$. We modify the definition of a level $j\in[m]$ being ``good'' earlier to say that $Q_j$ must also succeed on queries to every $z\in L_j$. We point query every partition $z\in L_j$ to obtain an estimate $\tilde{y}_z$ approximating $y_z$ (for the exact form of approximation, see \Equation{pcs-condition} from \Lemma{pcs}). We then group all $z\in L_j$ by name, and within each group we remove all $z$ from $L_j$ except for the one with the largest $\tilde{y}_z$, breaking ties arbitrarily. This filtering step guarantees that the vertices in layer $j$ have unique names, and furthermore, when $j$ is good all vertices corresponding to heavy hitters appear in $L_j$ and none of them are thrown out by this filtering. We then let $G$ be the graph created by including the at most $(d/2)\cdot\sum_j |L_j|$ edges suggested by the $z$'s across all $L_j$ (we only include an edge if both endpoints suggest it). Note $G$ will have many isolated vertices since only $m\cdot \max_j |L_j| = O(\log^2 n /\log\log n)$ edges are added, but the number of vertices in each layer is $2^s$, which may be a large power of $\log n$. We let $G$ be its restriction to the union of non-isolated vertices and vertices whose names match the hash value of a heavy hitter at the $m$ different levels. This ensures $G$ has $O(\log^2 n/\log\log n)$ vertices and edges. We call this $G$ the {\em chunk graph}. 

Now, the intuitive picture is that $G$ {\em should} be the vertex-disjoint union of several copies of the expander $F$, one for each heavy hitter, plus other junk edges and vertices coming from other non-heavy hitters in the $L_j$. Due to certain bad levels $j$ however, some expanders might be missing a small constant $\epsilon$-fraction of their edges, and also the $\epsilon m$ bad levels may cause spurious edges to connect these expanders to the rest of the graph. The key insight is as follows. Let $W$ be the vertices of $G$ corresponding to some particular heavy hitter, so that in the ideal case $W$ would be a single connected component whose induced graph is $F$. What one can say, even with $\epsilon m$ bad levels, is that every heavy hitter's such vertices $W$ forms an {\em $O(\epsilon)$-spectral cluster} as per \Definition{spec-cluster}. Roughly this means that (a) the cut separating $W$ from the rest of $G$ has $O(\epsilon)$ conductance (i.e.\ it is a very sparse cut), and (b) for any cut $(A,W\backslash A)$ within $W$, the number of edges crossing the cut is what is guaranteed from a spectral expander, minus $O(\epsilon)\cdot \vol(W)$.  Our task then reduces to finding all $\epsilon$-spectral clusters in a given graph. We show in \Section{cluster} that for each such cluster $W$, we are able to find a $(1-O(\epsilon))$-fraction of its volume with at most $O(\epsilon)\cdot \vol(W)$ erroneous volume from outside $W$. This suffices for decoding for $\epsilon$ a sufficiently small constant, since this means we find most vertices, i.e.\ chunks of the encoding, of the heavy hitter. 

For the special case of $\ell_1$ heavy hitters in the strict turnstile model, we are able to devise a much simpler query algorithm that works; see \Section{strict} for details. For this special case we also give in \Section{expected-time} a space-optimal algorithm with $O(\log n)$ update time, whp success, and {\em expected} query time $O(\eps^{-1}\log n)$ (though unfortunately the variance of the query time may be quite high).

\subsection{Cluster-preserving clustering}\SectionName{clustering-overview}
An $\epsilon$-spectral cluster is a subset $W$ of the vertices in a graph $G = (V, E)$ such that (1) $|\partial W| \le \epsilon\vol(W)$, and (2) for any $A\subsetneq W$ with $\vol(A)/\vol(W) = r$, $|E(A, W\backslash A)| \ge (r(1-r) - \epsilon)\vol(W)$. Item (2) means the number of edges crossing a cut within $W$ is what you would expect from a random graph, up to $\epsilon \vol(W)$. Our goal is to, given $G$, find a partition of $V$ such that every $\epsilon$-spectral cluster $W$ in $G$ matches some partition up to $\epsilon\vol(W)$ symmetric difference. 

\begin{figure}
  \begin{minipage}{5.0in}
\begin{minipage}[c][6cm][c]{\dimexpr0.25\textwidth-0.5\Colsep\relax}
\textcolor{white}{move to the right hi there again}\scalebox{0.4}{\includegraphics{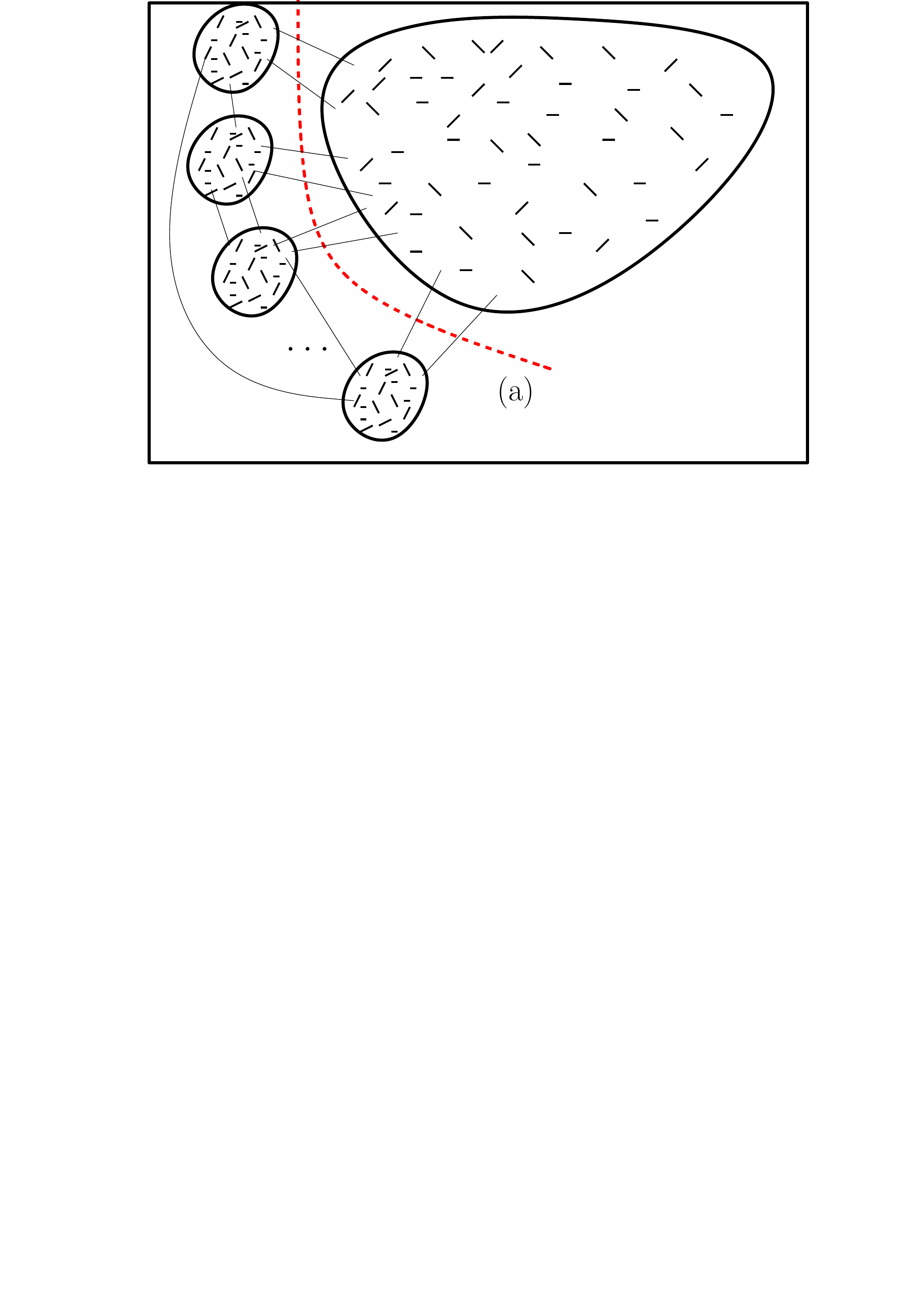}}
\end{minipage}\hfill
\begin{minipage}[c][6cm][c]{\dimexpr0.25\textwidth-0.5\Colsep\relax}
\textcolor{white}{move to the right hi there again}\scalebox{0.4}{\includegraphics{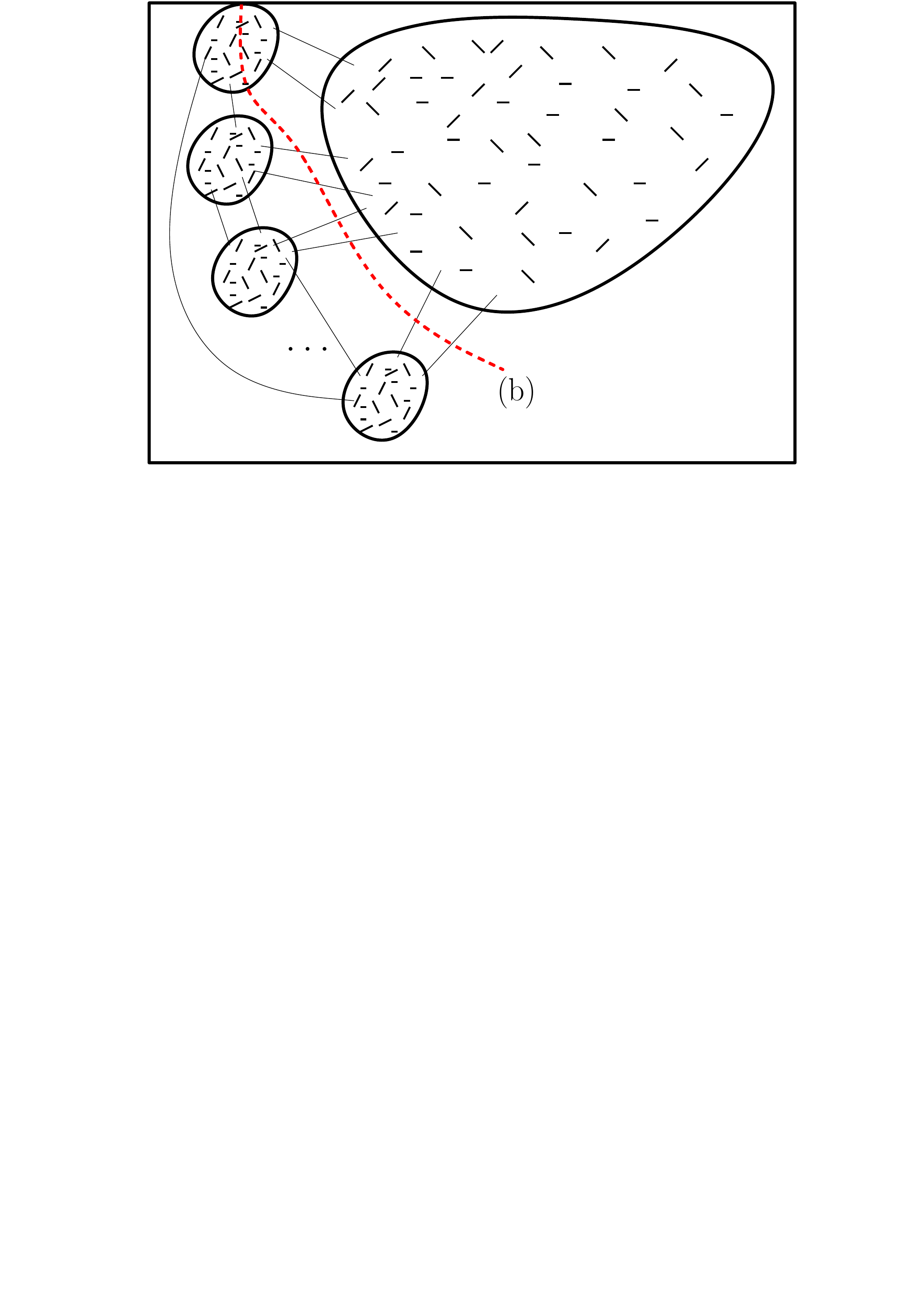}}
\end{minipage}
\end{minipage}
\vspace{-.1in}\caption{Each small oval is a spectral cluster. They are well-connected internally, with sparse cuts to the outside. The large oval is the rest of the graph, which can look like anything. Cut (a) represents a good low-conductance cut, which makes much progress (cutting the graph in roughly half) while not losing any mass from any cluster. Cut (b) is also a low-conductance cut as long as the number of small ovals is large, since then cutting one cluster in half has negligible effect on the cut's conductance. However, (b) is problematic since recursing on both sides loses half of one cluster forever.}\FigureName{badcut}
\end{figure}

Our algorithm \cutandclose is somewhat similar to the spectral clustering algorithm of \cite[Section 4]{KannanVV04}, but with local search. That algorithm is quite simple: find a low-conductance cut (e.g.\ a Fiedler cut) to split $G$ into two pieces, then recurse on both pieces. Details aside, Fiedler cuts are guaranteed by Cheeger's inequality to find a cut of conductance $O(\sqrt{\gamma})$ as long as a cut of conductance at most $\gamma$ exists in the graph. The problem with this basic recursive approach is shown in \Figure{badcut} (in particular cut (b)). Note that a cluster can be completely segmented after a few levels of recursion, so that a large portion of the cluster is never found.

Our approach is as follows. Like the above, we find a low-conductance cut then recurse on both sides. However, before recursing on both sides we make certain ``improvements'' to the cut. We say $A\subset V$ is {\em closed} in $G$ if there is no vertex $v\in G\backslash A$ with at least $5/9$ths of its neighbors in $A$. Our algorithm maintains that all recursive subcalls are to closed subsets in $G$ as follows. Suppose we are calling \cutandclose on some set $A$. We first try to find a low-conductance cut within $A$. If we do not find one, we terminate and let $A$ be one of the sets in the partition. Otherwise, if we cut $A$ into $(S,\bar{S})$, then we close both $S, \bar{S}$ by finding vertices violating closure and simply moving them. It can be shown that if the $(S, \bar{S})$ cut had sufficiently low conductance, then these local moves can only improve conductance further. Now both $S$ and $\bar{S}$ are closed in $A$ (which by a transitivity lemma we show, implies they are closed in $G$ as well). We then show that if (1) some set $S$ is closed, and (2) $S$ has much more than half the volume of some spectral cluster $W$ (e.g.\ a $2/3$rds fraction), then in fact $S$ contains a $(1-O(\epsilon))$-fraction of $W$. Thus after closing both $S, \bar{S}$, we have that $S$ either: (a) has almost none of $W$, (b) has almost all of $W$, or (c) has roughly half of $W$ (between $1/3$ and $2/3$, say). To fix the latter case, we then ``grab'' all vertices in $\bar{S}$ with some $\Omega(1)$-fraction, e.g. $1/6$th, of their neighbors in $S$ and simply move them all to $S$. Doing this some constant number of times implies $S$ has much more than $2/3$rds of $W$ (and if $S$ was in case (a), then we show it still has almost none of $W$). Then by doing another round of closure moves, one can ensure that both $S, \bar{S}$ are closed, and each of them has either an $O(\epsilon)$-fraction of $W$ or a $(1-O(\epsilon))$-fraction. The details are in \Section{cluster}. It is worth noting that our algorithm can make use of {\em any} spectral cutting algorithm as a black box and not just Fiedler cuts, followed by our grab and closure steps. For example, algorithms from \cite{OrecchiaV11,OrecchiaSV12} run in nearly linear time and either (1) report that no $\gamma$-conductance cut exists (in which case we could terminate), (2) find a {\em balanced} cut of conductance $O(\sqrt{\gamma})$ (where both sides have nearly equal volume), or (3) find an $O(\sqrt{\gamma})$-conductance cut in which every $W\subset G$ with $\vol(W) \le (1/2)\vol(G)$ and $\phi(W) \le O(\gamma)$ has more than half its volume on the smaller side of the cut. Item (2), if it always occurred, would give a divide-and-conquer recurrence to yield nearly linear time for finding all clusters. It turns out item (3) though is even better! If the small side of the cut has half of every cluster $W$, then by grabs and closure moves we could ensure it is still small and has almost all of $W$, so we could recurse just on the smaller side. It appears an $O(|E|\log|V|)$-space implementation of such a cutting algorithm achieving this guarantee would lead to $O(\eps^{-2}\log^{2+o(1)} n)$ query time for whp heavy hitters (with $O(\eps^{-2}\log^{1+o(1)} n)$ expected query time), but in this version of our work we simply focus on achieving $O(\eps^{-2}\poly(\log n))$.
\section{General turnstile updates}\SectionName{turnstile}
In this section we analyze our algorithm \es described in \Section{overview}. Recall that our final algorithm, including the reduction in \Section{reduction}, is as follows. First we pick a hash function $h:[n]\rightarrow[q]$ from a $\Theta(\log n)$-wise independent hash family for $q = \ceil{1/(\eps^2\log n)}$. Then we initialize $q$ data structures $D^1,\ldots,D^q$, where each $D^k$ is an $\eps'$-heavy hitters data structure as described in \Section{overview}, for $\eps'= \max\{\eps, 1/\sqrt{C\log n}\}$. We also during intialization construct a $d$-regular $\lambda_0$-spectral expander $F$ on $m = \Theta(\log n / \log\log n)$ vertices for some $d = O(1)$, where $\lambda_0 = \epsilon d$ for some (small) constant $\epsilon>0$ to be specified later. Such an $F$ can be constructed in time $\poly(\log n)$ with $d = \poly(1/\epsilon)$ deterministically \cite{ReingoldVW02}, then stored in adjacency list representation consuming $O(\log n / \log\log n)$ words of memory.

The $D^k$ are independent, and for each $D^k$ we pick hash functions $h_1^k,\ldots,h_m^k:[n]\rightarrow[\poly(\log n)]$ independently from a pairwise independent family and instantiate partition $\eps'$-heavy hitter data structures $P_1^k,\ldots,P_m^k$ (as $b$-trees; see \Section{btree}) with $\oracle_j^k:[n]\rightarrow [2^{O(t)}]$ for $j\in[m]$ defined by $\oracle_j^k(i) = h_j^k(i)\circ \enc(i)_j \circ h^k_{\Gamma(j)_1}\circ\cdots \circ h^k_{\Gamma(j)_d}$. Each $P_j^k$ has failure probability $1/\poly(\log n)$. We also instantiate partition $c\eps$-point query data structures $Q_1^k,\ldots,Q_m^k$ with small constant $c$ and failure probability $1/\poly(\log n)$ as per \Lemma{pcs}, with the same $\oracle_j^k$. Here $\enc$ is an encoding as in \Section{overview} mapping into $T = O(\log n)$ bits, and $t = T/m = \Theta(\log\log n)$. $\Gamma(j)_k$ is the $k$th neighbor of $j$ in $F$. To answer a query to our overall data structure, we query each $D^k$ separately then output the union of their results. To answer a query to $D^k$, we form a chunk graph $G^k$ from the outputs of the $P^k_j$ as in \Section{overview}. We then find all $\epsilon_0$-spectral clusters using \Theorem{cluster} for a sufficiently small constant $\epsilon_0$, throw away all clusters of size less than $m/2$, then from each remaining cluster $W'$ we: (1) remove all vertices of degree $\le d/2$, (2) remove $v\in W'$ coming from the same layer $j\in[m]$ as some other $v'\in W'$, then (3) form a (partially corrupted) codeword using the bits associated with vertices left in $W'$. We then decode to obtain a set $B^k$ containing all heavy hitter indices $i\in h^{-1}(k)$ with high probability. We then output $L = \cup_{k=1}^q B^k$ as our final result.

Henceforth we condition on the event $\mathcal{E}$ that every heavy hitter $i\in[n]$ is $\eps'$-heavy in $D^{h(i)}$, which happens whp by \Theorem{reduce-threshold}. Our final analysis makes use of the following.

\begin{lemma}\LemmaName{bad-levels}
Suppose $\mathcal{E}$ occurs. Focus on a particular $k\in[q]$. Let $x' = x_{h^{-1}(k)}$ be the projection of $x$ onto vertices hashing to $k$. As in \Section{overview}, we say an index $j\in[m]$ is {\em good} if (a) $P^k_j$ succeeded, (b) $h_j^k$ perfectly hashes all $\eps'$-heavy hitters in $x'$, (c) for each $\eps'$-heavy hitter $i$, the total $\ell_2$ weight from non-heavy hitters in $h^{-1}(k)$ hashing to $h_j(i)$ is under $h_j$ is $o((1/\sqrt{\log n}))\|x_{\projepstwo{\eps'}}\|_2$, and (d) $Q^k_j$ succeeded on every $z\in L^k_j$. Call $j$ {\em bad} otherwise. Then if the failure probability of each $P^k_j, Q^k_j$ is $1/\log^{C+1} n$, the range of each $h_j^k$ is $[\log^{C+3} n]$, and $m = C\log n/\log\log n$ for some constant $C>0$, then with probability $1 - 1/n^c$ the number of bad levels is at most $\beta m$, where $c$ can be made an arbitrarily large constant and $\beta$ an arbitrarily small constant by increasing $C$.
\end{lemma}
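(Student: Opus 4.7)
The plan is to bound the probability $p$ that any single level is bad, and then apply a Chernoff-type bound to the sum of the $m = C\log n/\log\log n$ bad-level indicators. The subtlety is that the bad events for different $j$ are not jointly independent: the partition oracle $\oracle_j^k$ depends not only on $h_j^k$ but also on the neighboring hash functions $h^k_{\Gamma(j)_\ell}$ used in $F$. I would handle this by splitting the bad-level indicator into (i) a ``hash-determined'' part comprising conditions (b) and (c), which depends only on $h_j^k$, and (ii) an ``internally random'' part comprising (a) and (d), whose failure probability conditional on any fixed hash functions is $\le 2/\log^{C+1} n$ by construction.

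For the per-level bound: (a) and (d) fail with probability $\le 1/\log^{C+1} n$ each by assumption. For (b), there are at most $2/(\eps')^2 = O(\log n)$ heavy hitters in $x'$, so pairwise independence of $h_j^k$ into a range of size $\log^{C+3} n$ gives collision probability $O(\log^2 n/\log^{C+3} n)$. For (c), I would first observe $\|x'_{\text{light}}\|_2^2 \le 2\|x'_{\projepstwo{\eps'}}\|_2^2$, where ``light'' means non-heavy-hitter: any such coordinate has magnitude less than $\eps'\|x'_{\projepstwo{\eps'}}\|_2$, and at most $1/(\eps')^2$ of them can be among the coordinates zeroed to produce the tail. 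Then for any fixed heavy hitter $i$, pairwise independence gives expected squared mass of light coordinates in bucket $h_j^k(i)$ at most $\|x'_{\text{light}}\|_2^2/\log^{C+3} n$, and Markov combined with a union bound over the $O(\log n)$ heavy hitters gives probability $O(1/\log^{C+1} n)$ of violating the threshold. Altogether, $p = O(1/\log^{C+1} n)$.

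For concentration, let $X_j = \mathbf{1}[\text{(b) or (c) fails at } j]$ and $Y_j = \mathbf{1}[\text{(a) or (d) fails at } j]$. The $X_j$ depend only on $h_j^k$, hence are jointly independent across $j$. Conditional on all $\{h_\ell^k\}$, the $Y_j$ depend only on the independent internal randomness of $P_j^k, Q_j^k$, so they are conditionally independent across $j$ with conditional means $\le 2/\log^{C+1} n$. Applying the upper-tail Chernoff bound $\Pr[\sum_j Z_j \ge \beta m/2] \le (2ep/\beta)^{\beta m/2}$ to each of $\sum X_j$ and (conditionally) $\sum Y_j$, and plugging in $m = C\log n/\log\log n$ and $p \le O(1/\log^{C+1} n)$, gives a bound of $n^{-\Omega(\beta C^2)}$ for each. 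This is $\le n^{-c}$ for any desired constant $c$ once $C$ is chosen large enough depending on $c$ and $\beta$. A union bound over the two tails then yields $\sum_j (X_j \vee Y_j) \le \beta m$ with the required probability.

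The main obstacle is the non-independence flagged at the start: without the $(X_j, Y_j)$ decomposition one cannot directly invoke Chernoff, because $Y_j$ depends on neighboring $h^k_\ell$'s through $\oracle_j^k$. Once the decomposition is in place, the remaining work is a routine per-level union bound and two Chernoff applications, with condition (c) needing only the short comparison $\|x'_{\text{light}}\|_2 = O(\|x'_{\projepstwo{\eps'}}\|_2)$.
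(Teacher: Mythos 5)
Your proposal is correct and follows the same high-level structure as the paper's proof: a per-level union bound over (a)--(d) to get a single-level bad probability $p$, followed by a concentration bound to get at most $\beta m$ bad levels whp. The paper's concentration step is the bare counting bound $\binom{m}{\beta m}p^{\beta m}$, which implicitly requires that for any fixed set $T$ of levels the probability all of them are bad is at most $p^{|T|}$; you make this explicit by decomposing bad$_j$ into a hash-determined indicator $X_j$ (depending only on $h_j^k$, hence independent across $j$) and an internal-randomness indicator $Y_j$ (conditionally independent given all hash functions, with conditional mean uniformly bounded), and applying Chernoff to each sum. This is a genuine gap you fill: the paper never discusses why the events can be treated as independent even though $\oracle_j^k$ couples levels through $h^k_{\Gamma(j)_\ell}$, and your $(X_j, Y_j)$ split is exactly what justifies the binomial bound (one can check $\Pr[\bigcap_{j\in T}(X_j \vee Y_j)] \le (p_X + p_Y)^{|T|}$ by first conditioning on the hash functions). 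You also supply the comparison $\|x'_{\text{light}}\|_2^2 \le 2\|x'_{\projepstwo{\eps'}}\|_2^2$, which the paper uses silently. One small arithmetic slip: condition (d) is that $Q_j^k$ succeeds on \emph{every} $z \in L_j^k$, and since $|L_j^k| = O(1/\eps'^2) = O(\log n)$ while each point query fails with probability $1/\log^{C+1} n$, a union bound gives failure probability $O(1/\log^C n)$ for (d), not $1/\log^{C+1} n$; combined with (b), (c) this yields $p = O(1/\log^C n)$ rather than your stated $O(1/\log^{C+1} n)$. This matches the paper and does not affect your conclusion, since the concentration bound still evaluates to $n^{-\Omega(\beta C^2)}$.
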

\begin{proof}
The probability of (b), that $h_j^k$ perfectly hashes all $O(\log n)$ $\eps'$-heavy hitters, is $1 - O(1/\log^{C+1} n)$. For (c), focus on a particular heavy hitter $i$. The expected squared $\ell_2$ mass from non-$\eps'$ heavy hitters colliding $i$ under $h_j$ is $(1/\log^{C+3} n)\|x_{\projepstwo{\eps'}}\|_2^2$ by pairwise independence of $h_j$. Thus by Markov, the probability that more than $(1/\log^2 n)\|x_{\projepstwo{\eps'}}\|_2^2$ squared $\ell_2$ mass collided with $i$ from light items is at most $1/\log^{C+1} n$. Since there are only $O(\log n)$ $\eps'$-heavy hitters, by a union bound we have that (c) holds, i.e. that no heavy hitter collides with more than $(1/\log n)\|x_{\projepstwo{\eps'}}\|_2 = o(1/\sqrt{\log n})\|x_{\projepstwo{\eps'}}\|_2$ $\ell_2$ mass from light items under $\ell_j$, with probability $1 - O(1/\log^C n)$. For $(d)$, we perform $O(1/\eps'^2) = O(\log n)$ point queries on $Q_j^k$, and each point query has failure probability $1/\log^{C+1} n$, so the probability that any of the point query fails is at most $1/\log^C n$. Thus $j$ is bad with probability $O(1/\log^C n)$. Now also choose $m = C\log n/\log\log n$. Then the probability that there are more than $\beta m$ bad levels is at most, for some constant $C'$ independent of $C$,
$$
\binom{m}{\beta m} \left(\frac {C'}{\log^C n}\right)^{\beta m} \le \left(\frac{C'e}{\beta\log^C n}\right)^{\beta C\log n/\log\log n} = \left(\frac{C'e}{\beta}\right)^{\beta C\log n/\log\log n}\cdot \frac 1{n^{\beta C^2}} = \frac 1{n^{\beta C^2 - o_n(1)}},
$$
using $\binom{a}{b} \le (ea/b)^b$. Thus the lemma holds for $\beta = 1/C$, and $c = C/2$ for sufficiently large $n$.
\end{proof}

Our correctness analysis also makes use of the following lemma \cite[Lemma 2.3]{AlonC06} to show that, with high probability, each heavy hitter is represented by an $\epsilon_0$-spectral cluster. The lemma states that for spectral expanders, sets expand with small sets $S$ satisfying even better expansion properties. Some version of this lemma is known in the literature as the expander mixing lemma.

\begin{lemma}{{\cite{AlonC06}}}\LemmaName{alonc}
Let $A$ be the adjacency matrix of a $d$-regular graph with vertex set $V$. Suppose the second largest eigenvalue of $A$ in magnitude is $\lambda>0$. Then for any $S\subseteq V$, writing $|S| = r |V|$, $|\partial S| \ge (d - \lambda)(1-r)|S|$.
\end{lemma}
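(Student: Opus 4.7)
The plan is to prove this via the standard expander mixing lemma argument: I will relate the number of edges inside $S$ to a quadratic form $\chi_S^T A \chi_S$, split $\chi_S$ into its component along the top eigenvector of $A$ and the orthogonal remainder, and use the spectral bound $\lambda$ on the orthogonal part. The boundary bound will then follow from the regularity identity $|\partial S|=d|S|-2|E(S,S)|$.

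First I would write $\chi_S\in\R^V$ for the indicator vector of $S$ and decompose it as $\chi_S=\alpha\mathbf{1}+v$ with $v\perp\mathbf{1}$. Since $\langle\chi_S,\mathbf{1}\rangle=|S|=r|V|$ and $\|\mathbf{1}\|_2^2=|V|$, we get $\alpha=r$, and from $\|\chi_S\|_2^2=|S|$ we obtain $\|v\|_2^2=|S|-r^2|V|=r(1-r)|V|$. Because the graph is $d$-regular, $\mathbf{1}$ is an eigenvector of $A$ with eigenvalue $d$, and all eigenvectors orthogonal to $\mathbf{1}$ have eigenvalue at most $\lambda$ in absolute value, so
\[
\chi_S^T A\chi_S \;=\; \alpha^2\cdot d\cdot|V| + v^T A v \;\le\; r^2 d\,|V| + \lambda\,\|v\|_2^2 \;=\; rd\,|S| + \lambda\,(1-r)\,|S|.
\]

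Next I would use the identity $\chi_S^T A\chi_S=2|E(S,S)|$ together with $2|E(S,S)|+|\partial S|=\sum_{u\in S}\deg(u)=d|S|$ to obtain
\[
|\partial S| \;=\; d|S| - \chi_S^T A\chi_S \;\ge\; d|S| - rd|S| - \lambda(1-r)|S| \;=\; (d-\lambda)(1-r)|S|,
\]
which is exactly the claimed inequality.

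I do not expect any real obstacle here: this is the standard one-sided expander mixing bound, and the only subtlety is using $v^T A v\le\lambda\|v\|_2^2$ rather than $|v^T A v|\le\lambda\|v\|_2^2$ — the upper bound is all we need since the hypothesis controls the magnitude of every non-top eigenvalue, so in particular bounds the Rayleigh quotient of $v$ from above by $\lambda$.
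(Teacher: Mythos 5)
Your proof is correct. The paper itself does not give a proof of this lemma — it is cited from \cite{AlonC06} (Lemma 2.3 there) — so there is no internal argument to compare against. Your derivation is the standard one-sided spectral argument: decompose $\chi_S$ into its $\mathbf 1$-component (using $d$-regularity so that $\mathbf 1$ is the top eigenvector with eigenvalue $d$) and the orthogonal remainder $v$, bound $v^T A v \le \lambda \|v\|_2^2$, and translate $\chi_S^T A \chi_S = 2|E(S,S)|$ into the boundary count via $d|S| = 2|E(S,S)| + |\partial S|$. The arithmetic ($\alpha = r$, $\|v\|_2^2 = r(1-r)|V|$, hence $\lambda\|v\|_2^2 = \lambda(1-r)|S|$) checks out, and the final inequality $(d-\lambda)(1-r)|S|$ follows. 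One minor point worth being aware of (though it does not affect correctness): if $G$ is disconnected or bipartite, some eigenvector for eigenvalue $\pm d$ lies in $\mathbf 1^\perp$, but in exactly those cases $\lambda = d$ and the bound is vacuous, so the Rayleigh quotient estimate $v^TAv \le \lambda\|v\|_2^2$ still holds.
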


\begin{theorem}\TheoremName{clustering-reduction}
Assume $n$ is larger than some constant. Let $\mathcal{A}$ be the algorithm from \Theorem{cluster} which finds all $\epsilon_0$-spectral clusters in a graph on $O(\eps'^{-2}\log n / \log\log n)$ vertices and edges whp in time $\mathcal{T}$ and space $\mathcal{S}$. For any $0<\eps < 1/2$, there is an algorithm solving the $\eps$-heavy hitters problem in general turnstile streams on vectors of length $n$ whp with space and update time $O(\eps^{-2}\log n)$ and $O(\log n)$, respectively. Answering a query uses an additional $O(\mathcal{S}) = o(\eps^{-2}\log n)$ space and runs in time $O(\eps^{-2}\log^{1+\gamma} n + q\mathcal{T}) = O(\eps^{-2}\poly(\log n))$, by $q$ successive calls to $\mathcal{A}$. Here $\gamma>0$ can be chosen as an arbitrarily small constant.
\end{theorem}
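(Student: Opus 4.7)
The plan is to verify, in order: (i) the space and update-time bounds, (ii) correctness by reducing to cluster recovery in the chunk graph via \Theorem{cluster}, and (iii) the query-time bound. For (i), since $q = \lceil 1/(\eps^2\log n)\rceil$ and $\eps' = \Theta(1/\sqrt{\log n})$, each $D^k$ houses $m = \Theta(\log n/\log\log n)$ partition $b$-tree structures $P_j^k$ and partition point-query structures $Q_j^k$, each with partition size $2^{O(t)} = \poly(\log n)$ and per-structure failure $1/\poly(\log n)$. Plugging these parameters into the space/update bounds from \Section{btree} and \Lemma{pcs} and summing over $j$ and $k$ yields total space $O(\eps^{-2}\log n)$ and update time $O(\log n)$; the per-update cost is dominated by one evaluation of the $\Theta(\log n)$-wise independent hash $h$, the linear-time encoding $\enc(i)$, and $O(m)$ constant-sized oracle evaluations $\oracle_j^k(i)$, since only $D^{h(i)}$ is touched.

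For correctness I would first condition on the high-probability event $\mathcal{E}$ of \Theorem{reduce-threshold} that every $\eps$-heavy hitter $i$ is an $\eps'$-heavy hitter of $x^k := x_{h^{-1}(k)}$ with $k = h(i)$, so at most $O(\log n)$ heavy hitters fall into each substream. Fix such a $k$ and a small constant $\beta$; by \Lemma{bad-levels}, whp at most $\beta m$ levels $j\in[m]$ are bad. For each heavy hitter $i$ in this substream, let $W_i = \{(j, h_j^k(i)) : j\in [m]\}$ be the ideal vertex set in the chunk graph $G^k$. The key step is showing $W_i$ is an $\epsilon_0$-spectral cluster for some $\epsilon_0 \le 1/2000000$. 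At every good level $j$, conditions (a)--(d) of \Lemma{bad-levels} together with the dedup-by-$\tilde y_z$ filtering via $Q_j^k$ ensure that $(j,h_j^k(i))$ has no spurious edge leaving $W_i$ and that every $F$-edge between two good levels is actually inserted (an edge enters $G^k$ only when both endpoints propose it). Hence the bad levels contribute at most $d\beta m$ missing internal edges and at most $d\beta m$ boundary edges, so $|\partial W_i| \le O(\beta)\vol(W_i)$. For the internal-cut condition, applying the expander mixing \Lemma{alonc} to $F$ with $\lambda_0 = \epsilon d$ gives $|E_F(A, W_i\setminus A)| \ge (r(1-r) - \epsilon)d\,|W_i|$ for any split with $|A| = r|W_i|$, and the bad levels can destroy at most $d\beta m = O(\beta)\vol(W_i)$ additional edges. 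Choosing $\beta$ and $\epsilon$ sufficiently small then makes $W_i$ satisfy \Definition{spec-cluster} with $\epsilon_0 \le 1/2000000$.

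I would then invoke \Theorem{cluster} on $G^k$ to obtain disjoint $U_1,\ldots,U_\ell$ matching each $W_i$ as some $U_{t_i}$ with $\vol(W_i \triangle U_{t_i}) \le O(\epsilon_0)\vol(W_i)$ and with every $v\in U_{t_i}$ having more than $4/9$ of its neighbors in $U_{t_i}$. Since $\vol(W_i) = \Theta(dm)$, after the $\ge d/2$ degree filter and the at-most-one-vertex-per-layer deduplication, a $(1-O(\epsilon_0))$-fraction of the surviving vertices genuinely come from $W_i$; concatenating their $\enc(\cdot)_j$ chunks yields a string agreeing with $\enc(i)$ on a $(1-O(\epsilon_0))$-fraction of its positions, and the constant-rate, constant-distance code of \cite{Spielman96} decodes $i$ in linear time. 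Clusters of size $<m/2$ cannot host a heavy hitter and are discarded, keeping each list of size $O(\eps'^{-2})$ per $k$ and $O(\eps^{-2})$ after union. A union bound over the $q\le n$ substreams and $O(\log n)$ heavy hitters per substream preserves whp correctness. The query time per $k$ is $O(\eps'^{-2}\log^{1+\gamma} n)$ for the $b$-tree queries (by \Corollary{btree}), $\poly(\log n)$ for the point queries and construction of $G^k$ (which has $O(\eps'^{-2}m) = O(\log^2 n/\log\log n)$ vertices and edges), $\mathcal{T}$ for $\mathcal{A}$, and linear time to decode; summing yields total $O(\eps^{-2}\log^{1+\gamma} n + q\mathcal{T})$, with transient query-time space dominated by $\mathcal{S}$.

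The hard part will be the $\epsilon_0$-spectral-cluster claim for $W_i$: both the boundary and every internal cut must be controlled against the $\beta m$ bad levels while only the underlying expander $F$ carries a spectral guarantee, and $\beta$, $\lambda_0/d$, and the resulting $\epsilon_0$ must be calibrated to fall strictly below the $1/2000000$ threshold required by \Theorem{cluster}. Everything else is a careful but routine accounting of the reductions, parameters, and union bounds.
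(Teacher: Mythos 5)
Your proposal is correct and follows essentially the same route as the paper's proof: condition on $\mathcal{E}$ from \Theorem{reduce-threshold}, bound bad levels via \Lemma{bad-levels}, use the expander mixing bound of \Lemma{alonc} to certify each heavy hitter's vertex set $W_i$ as an $\epsilon_0$-spectral cluster (first in the all-good case, then subtracting $O(\beta)\vol(W_i)$ for bad levels), invoke \Theorem{cluster} to recover $W_i$ up to $O(\epsilon_0)\vol(W_i)$ symmetric difference, filter by degree and layer, and decode the error-correcting code; the space/update/query accounting proceeds exactly as you outline. The only cosmetic differences are that the paper's mixing bound is stated with the slightly tighter constant $\lambda_0/(4d)$ rather than your $\epsilon = \lambda_0/d$, and the paper explicitly includes the $m\cdot O(t) = O(\log n)$ $b$-tree update cost in the per-update accounting, neither of which affects the conclusion.
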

\begin{proof}
We first analyze space and running times. We store $F$, taking $O(\log n/\log\log n)$ space, which can be ignored as it is dominated by other parts of the algorithm. Storing $h$ requires $O(\log n)$ words of memory. Let us now focus on a specific $D^k$. It stores $m$ hash functions drawn independently from a pairwise family mapping $[n]$ to $[\poly(\log n)]$, consuming $O(m) = O(\log n/\log\log n)$ space. The $m$ $b$-trees combined $P_j^k$ consume space $m\cdot O(\eps'^{-2} t) = O(\eps'^{-2}\log n)$, and the same for the $Q_j^k$. Thus the total space per $D^k$ is $O(\eps'^{-2}\log n)$, and thus the total space across the entire algorithm is $O(q\eps'^{-2}\log n) = O(\eps^{-2}\log n)$. We also need an additional $O(\mathcal{S})$ space to run $\mathcal{A}$ repeatedly during a query. For the update time, we first hash using $h$, taking $O(\log n)$ time. Then within a $D^k$ we perform $m$ hash evaluations $h_j^k$, taking total time $O(m)$. We then compute the encoding $\enc(i)$ of $i$, taking $O(\log n)$ time, then perform $m$ $b$-tree updates taking total time $m\cdot O(t) = O(\log n)$. Thus the total update time is $O(\log n)$. For the query time, for a given $D^k$ we have to query $m$ $b$-trees, taking time $m\cdot O(\eps'^{-2}t \log^{\gamma}n) = O(\eps'^{-2}\log^{1+\gamma}n)$. We also run $\mathcal{A}$ on $G^k$, taking time $\mathcal{T}$. Thus the total query time is $O( q(\eps'^{-2}\log^{1+\gamma}n + \mathcal{T})) = O(\eps^{-2}\log^{1+\gamma}n + q\mathcal{T})$.

Now it only remains to argue correctness. Each $L^k$ has size $O(1/\eps'^2)$ with probability $1$, which is a guarantee of the $b$-tree. Then when we form the chunk graph $G^k$, it has $O(\eps'^{-2}\log n/\log\log n)$ vertices. We only insert a decoded spectral cluster into $L$ if its size is at least $m/2$, so there can be at most $O(1/\eps'^2)$ such clusters found. Thus $|L| \le O(q/\eps'^2) = O(1/\eps^2)$ with probability $1$.

It now only remains to show that each heavy hitter $i\in[n]$ is represented by some spectral cluster of size at least $m/2$ in $G^{h(i)}$. We assume event $\mathcal{E}$, which happens whp.  Now let us focus on a particular $G^k$. Define $x' = x_{h^{-1}(k)}$. We will show that each $\eps'$-heavy hitter in $x'$ is represented by such a cluster in $G^k$. Focus on a particular $\eps'$-heavy hitter $i$. Let $W$ be the set of $m$ vertices $\{(j, h_j^k(i))\}_{j=1}^m$. Suppose that all levels $j\in[m]$ were good. We claim that in this case, $W$ would be an isolated connected component in $G^k$, and furthermore the induced graph on $W$ would be $F$. To see this, observe that before filtering by $Q^k_j$, $z^k(i)_j$ appears in $L^k_j$ (see \Section{overview} for the definitions of $L_j$ and $z(i)_j$; $L^k_j$, $z^k(i)_j$ are then the natural modifications corresponding to $k\in[q]$). Once we group the $z\in L^k_j$ by name and filter according to $Q^k_j$, let $z\neq z^k(i)_j$ be such that its name also equals $h_j^k(i)$. Since level $j$ is good, none of the mass of $z$ is from a heavy hitter. Thus by condition (c) in the definition of goodness, the $\ell_2$ weight $y_z$ of partition $z$ is at most $o(1/\sqrt{\log n})\|x_{\projepstwo{\eps'}}\|_2$. Thus by the condition of \Equation{pcs-condition}, since $Q^k_j$ succeeds we will will have $\tilde{y}_{z^k(i)_j} > \tilde{y}_z$ and not remove $z^k(i)_j$ from $L^k_j$. Thus $z^k(i)_j$ remains in $L^k_j$ even after filtering. Since this is true across all $j\in[m]$, we will add all edges of $F$ to $W$. Furthermore no vertices outside $W$ will connect to $W$, since edge insertion requires mutual suggestion of the edge by both endpoints. Thus, in the case of all good levels, by \Lemma{alonc} for any subset $A$ of $W$ with $|A| = r|W| = rm$, 
$$
|E(A, W\backslash A)| \ge (d - \lambda_0)r(1-r)|W| = (r(1-r) - r(1-r)\frac{\lambda_0}d)d|W|\ge (r(1-r) - \frac{\lambda_0}{4d})dm.
$$

By \Lemma{bad-levels}, the number of bad levels is at most $\beta m$ with high probability, for $\beta$ an arbitrarily small constant. Let us now understand the possible effects of a bad level on $W$. Suppose $j$ is bad. Then the $L^k_j$ obtained after filtering by $Q^k_j$ may not include $z^k(i)_j$. Thus $W$ may lose at most $d$ edges from $F$ (those corresponding to edges incident upon vertex $j$ in $F$). Second, $L^k_j$ after filtering may instead contain some $z$ whose name is $h^k_j(i)$, but whose edge suggestions differ from $z^k(i)_j$, possibly causing edges to be inserted within $W$ which do not agree with $F$, or even inserting edges that cross the cut $(W, G\backslash W)$. At most $d$ such edges are inserted in this way. Thus, across all bad levels, the total number of $F$-edges deleted within $W$ is at most $\beta dm$, and the total number of edges crossing the cut $(W, G\backslash W)$ is also at most $\beta dm$. Also, the volume $\vol(W)$ is always at most $dm$ and at least $(1-\beta)dm$. Thus after considering bad levels, for any subset $A$ of $W$ as above,
$$
|E(A, W\backslash A)| \ge (r(1-r) - \frac{\lambda_0}{4d} - \beta)dm \ge (r(1-r) - \epsilon_0)\vol(W).
$$
for $\epsilon_0 \ge \beta + \lambda_0/(4d)$. Furthermore, the number of edges leaving $W$ to $G\backslash W$ is
$$|\partial(W)| \le \beta dm \le \frac{\beta}{1-\beta} \vol(W) \le \epsilon_0 \vol(W)$$ 
for $\epsilon_0 \ge \beta/(1-\beta)$. Thus $W$ is an $\epsilon_0$-spectral cluster in $G^k$ representing $i$ for $\epsilon_0 = \max\{\beta + \lambda_0/(4d), \beta/(1-\beta)\} = \max\{\beta + \epsilon/4, \beta/(1-\beta)\}$. Thus by \Theorem{cluster}, $\mathcal{A}$ recovers a $W'$ missing at most $3\epsilon_0 \vol(W)$ volume from $W$, and containing at most $2250000\epsilon_0$ volume from $G\backslash W$. Note we remove any vertex from $W'$ of degree $\le d/2$, so $W'$ contains at most $5500000\epsilon_0 m$ vertices from outside $W$. Furthermore, since there are at most $\beta dm$ edges lost from $W$ due to bad levels, at most $2\beta m$ vertices in $W$ had their degrees reduced to $\le d/2$, and thus removing low-degree vertices removed at most $2\beta m$ additional vertices from $W$. Also, since the max degree in $G^k$ is $d$ and at most a $2\beta$ fraction of vertices in $W$ have $\le d/2$ degree, since $W'$ is missing at most $3\epsilon_0 \vol(W)$ vertices from $W$, this corresponds to at most $2\beta m + 6\epsilon_0 m$ vertices missing from $W$. We then form a (corrupted) codeword $C'\in \{0,1\}^{mt}$ by concatenating encoding chunks specified by the vertices in $W'$. Since then at most a $(5500006\epsilon_0 + 2\beta)$-fraction of entries in $\enc(i)$ and $C'$ differ, for $\epsilon, \beta$ sufficiently small constants we successfully decode and obtain the binary encoding of $i$, which is included in $L$.
\end{proof}

\begin{remark}\RemarkName{recurse}
\textup{
It is possible to slightly modify our algorithm so that the $\eps^{-2}\log^{1+\gamma} n$ summand in the query time becomes $\eps^{-2}\log^{1+o(1)} n$. We state the modification here without providing the full details. Specifically, rather than implement the $P_j^k$ as $b$-trees, one could instead implement them as recursive versions of the \es on a smaller universe. Then, after a single level of recursion down,  one switches back to using $b$-trees. Then the $b$-trees at the second level of recursion have query time which includes an arbitrarily small power of $\log\log n$, which is $\log^{o(1)} n$. One may be tempted to avoid $b$-trees altogether and simply continue the recursion to the leaves, but due to the blowup in bitlength of the partition sizes stemming from concatenating $h_j$ values, it seems such an approach would result in the space complexity being multiplied by an additional $2^{O(\log^* n)}$ factor.
}
\end{remark}

\renewcommand{\eps}{\epsilon}
\section{Cluster-preserving clustering}\SectionName{cluster}
In this section we present our cluster-preserving clustering, partitioning
the graph into clusters with the promise to preserve and
identify all $\epsilon$-spectral clusters as per \Definition{spec-cluster}.

The main goal of this section is to prove Theorem \ref{thm:cluster}. To prove Theorem \ref{thm:cluster}, we shall use cuts based
on Cheeger's inequality. Given an undirected graph $G = (V, E)$, let its Laplacian matrix be $\lap_G$ (normalized so that $(\lap_G)_{i,i} = 1$ for all $i$). It is known that that $\lap_G$ is positive semidefinite with eigenvalues $0 = \lambda_1 \le \lambda_2 \le \ldots \le \lambda_{|V|}$ with $\lambda_k = 0$ iff $G$ has at least $k$ connected components. A fundamental result in spectral graph theory is the following, which is a robust form of this fact when $k=2$.

\begin{theorem}[Cheeger's inequality for graphs {\cite{AlonM85,Alon86,SinclairJ89}}]\TheoremName{cheeger}
For $G = (V,E)$ an undirected graph, let $\lambda_2$ be the second
smallest eigenvalue of $\lap_G$. Then a cut $(S,\widebar{S})$ satisfying
\begin{equation}
\frac{\lambda_2}2 \leq \phi(G)\le \phi(G, S) \le \sqrt{2\lambda_2}\le 2\sqrt{\phi(G)}
\EquationName{cheeger}
\end{equation}
can be found in time $\poly(|V|)$ and space $O(|V|+|E|)$.
\end{theorem}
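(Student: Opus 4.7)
The plan is to establish the two nontrivial inequalities $\lambda_2/2 \le \phi(G)$ and $\phi(G,S) \le \sqrt{2\lambda_2}$ for the specific cut we produce; the middle inequality $\phi(G) \le \phi(G,S)$ is then immediate from the definition of $\phi(G)$, and the rightmost bound $\sqrt{2\lambda_2} \le 2\sqrt{\phi(G)}$ is just the square root of the leftmost. The key tool throughout is the Courant--Fischer characterization: because $\lap_G$ is the normalized Laplacian, its nullspace is spanned by $D^{1/2}\mathbf{1}$, and the substitution $g = D^{-1/2}f$ turns the second-eigenvalue variational formula into
\[
\lambda_2 \;=\; \min_{g \neq 0,\ \sum_v d_v g_v = 0} R(g), \qquad R(g) \;:=\; \frac{\sum_{(u,v) \in E}(g_u - g_v)^2}{\sum_v d_v g_v^2}.
\]
For the easy direction, given any nonempty $S$ with $\vol(S) \le \vol(\widebar{S})$, I would plug in the test vector $g_v = 1/\vol(S)$ on $S$ and $g_v = -1/\vol(\widebar{S})$ on $\widebar{S}$; it is weighted-orthogonal to $\mathbf{1}$, and a direct computation gives $R(g) = |\partial S|(1/\vol(S)+1/\vol(\widebar{S})) \le 2|\partial S|/\vol(S) = 2\phi(G,S)$, so $\lambda_2 \le 2\phi(G)$.

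For the hard direction, I would take $f$ to be the Fiedler eigenvector of $\lap_G$ and set $g = D^{-1/2}f$, so that $R(g) = \lambda_2$. Shift $g$ by the scalar $c$ equal to the volume-weighted median of its entries, so that both $\text{supp}(g^+)$ and $\text{supp}(g^-)$ have volume at most $\vol(V)/2$, where $g^+ = \max(g-c\mathbf{1}, 0)$ and $g^- = -\min(g-c\mathbf{1}, 0)$; the shift leaves the numerator of $R$ unchanged while only increasing the denominator (since $g \perp D\mathbf{1}$ makes $0$ the minimizer of the shifted denominator), so $R(g-c\mathbf{1}) \le \lambda_2$. A splitting lemma, proved by writing $g-c\mathbf{1} = g^+ - g^-$, observing that the denominators add exactly, and using $(g^+_u + g^-_v)^2 \ge (g^+_u)^2 + (g^-_v)^2$ on straddling edges, shows the numerator of $R(g-c\mathbf{1})$ dominates the sum of the numerators of $R(g^+)$ and $R(g^-)$; taking the better of the two sides gives $\min(R(g^+), R(g^-)) \le \lambda_2$. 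Let $h$ be whichever of $g^\pm$ realizes this minimum, so $R(h) \le \lambda_2$ and $\vol(\text{supp}(h)) \le \vol(V)/2$. Sort $\text{supp}(h)$ as $h_{v_1} \ge h_{v_2} \ge \cdots > 0$ and consider the sweep cuts $S_i = \{v_1, \ldots, v_i\}$. I then apply the Cauchy--Schwarz trick
\[
\sum_{(u,v) \in E}|h_u^2 - h_v^2| \;=\; \sum_{(u,v) \in E}|h_u - h_v|(h_u + h_v) \;\le\; \sqrt{\textstyle\sum (h_u-h_v)^2}\cdot\sqrt{\textstyle\sum(h_u+h_v)^2} \;\le\; \sqrt{2\lambda_2}\sum_v d_v h_v^2,
\]
using $R(h) \le \lambda_2$ on the first factor and $(h_u+h_v)^2 \le 2(h_u^2+h_v^2)$ on the second. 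Combined with the layer-cake identities $\sum_{(u,v)\in E}|h_u^2 - h_v^2| = \sum_i |\partial S_i|(h_{v_i}^2 - h_{v_{i+1}}^2)$ and $\sum_v d_v h_v^2 = \sum_i \vol(S_i)(h_{v_i}^2 - h_{v_{i+1}}^2)$, an averaging argument forces some $S_i$ to satisfy $|\partial S_i|/\vol(S_i) \le \sqrt{2\lambda_2}$; since $\vol(S_i) \le \vol(V)/2$, this ratio is exactly $\phi(G, S_i)$.

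For the algorithmic claim, $\lambda_2$ and an approximate Fiedler vector of sufficient accuracy can be computed in $\poly(|V|)$ time and $O(|V|+|E|)$ space by any standard eigensolver on the sparse matrix $\lap_G$ (e.g.\ inverse iteration on a $\poly(|V|)$-bit discretization), the sort of vertices by $h_v$ is $O(|V|\log|V|)$, and the conductances of all sweep cuts $S_i$ can then be maintained incrementally in $O(|V|+|E|)$ total time by updating $|\partial S_i|$ and $\vol(S_i)$ as each successive vertex is added; we output whichever $S_i$ minimizes $\phi(G, S_i)$. The step I expect to be the main obstacle is the positive/negative support-splitting: one must argue that $\min(R(g^+), R(g^-))$, not merely some average, is bounded by $\lambda_2$, and this conclusion depends sensitively on the nonnegativity inequality used across straddling edges and on the volume-weighted median shift. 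Everything else is linear-algebraic bookkeeping or a straightforward application of Cauchy--Schwarz.
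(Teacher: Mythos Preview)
The paper does not actually prove this theorem: it is stated as a citation to \cite{AlonM85,Alon86,SinclairJ89} and used as a black box in \cutandclose, so there is no ``paper's own proof'' to compare against. Your write-up is the standard proof of the graph Cheeger inequality (variational characterization of $\lambda_2$, shift by the volume-weighted median, split into $g^+,g^-$, Cauchy--Schwarz plus a sweep over threshold cuts) and is correct as a mathematical argument; the mediant step giving $\min(R(g^+),R(g^-))\le R(g-c\mathbf 1)$ is exactly the point you flagged, and you handled it properly via the cross-term inequality $(g^+_u+g^-_v)^2\ge (g^+_u)^2+(g^-_v)^2$.

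One small caveat on the algorithmic side: your proof uses the \emph{exact} Fiedler vector to get $R(g)=\lambda_2$, but then appeals to an approximate eigensolver. With only an approximate $g$ you get $R(g)\le \lambda_2+\eta$ and hence $\phi(G,S)\le\sqrt{2(\lambda_2+\eta)}$, not $\sqrt{2\lambda_2}$ on the nose. For the purposes of this paper that slack is harmless (the algorithm only tests whether $\phi(G,S)$ is below the fixed threshold $1/500$, so any inverse-polynomial $\eta$ suffices), but if you want the inequality exactly as stated you should either say a word about driving $\eta$ below the resolution of the finitely many sweep-cut conductances, or simply note, as the paper implicitly does by citing the result, that the exact inequality is classical and you are only supplying the constructive content.
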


\begin{algorithm}[H]
\caption{Top level of cluster preserving clustering of $G_0$ as stated in
Theorem \ref{thm:cluster}.}\label{alg:main}
\begin{algorithmic}[1]
\Function{Main}{$G_0$}
\State $\{U_1\ldots,U_\ell\}\gets$ \Call{\cutandclose}{$G_0$}
\State For each $U_i$, recursively remove $v\in U_i$ with
$\geq 5/9$ths neigbors outside $U_i$.\Comment Cleaning
\State \Return $\{U_1\ldots,U_\ell\}$
\EndFunction
\end{algorithmic}
\end{algorithm}
\begin{algorithm}[H]
\caption{Cluster preserving clustering of $G=G_0|V$, approximately 
isolating clusters $\cluster$ of $G_0$.}\label{alg:cluster}
\begin{algorithmic}[1]
\Function{\cutandclose}{$G = G_0|V$}
\State Use \Theorem{cheeger} to find a cut $(S,\widebar{S})$ satisfying \Equation{cheeger}; say $|S| \le |\widebar{S}|$.
\If {$\phi(G, S)\geq 1/500$} \Return $\{V\}$ \Comment We've identified a potential cluster.
\Else
   \State \Call{LocalImprovements}{$G,S,V$} \Comment Local improvements of both $S,\widebar{S}$.
   \State \Call{Grab}{$G,S$}
  \State \Call{LocalImprovements}{$G,S,\widebar{S}$} \Comment Local improvements of $S$.    \State \Call{Grab}{$G,S$}
   \State \Call{LocalImprovements}{$G,S,\widebar{S}$} \Comment Local improvements of $S$.
   \State \Call{LocalImprovements}{$G,\widebar S, S$} \Comment Local improvements of $\widebar{S}$.
    \State $\mathcal{C} \gets $\Call{\cutandclose}{$G|S$} $\cup$ \Call{\cutandclose}{$G|\widebar{S}$}
    \State \Return $\mathcal{C}$
\EndIf
\EndFunction
\end{algorithmic}
\end{algorithm}
\begin{algorithm}
\caption{Local improvements across cut $(S,\widebar{S})$ only moving vertices from $T$.}\label{alg:improve}
\begin{algorithmic}[1]
\Function{LocalImprovements}{$G,S,T$}\Comment $(S,\widebar{S})$ is a cut in $G = (V,E)$
    \While {$\exists v\in T$ with at least $5/9$ths of its edges crossing the cut}\Comment Local improvement.
        \State Move $v$ to the other side of the cut.
    \EndWhile
\EndFunction
\end{algorithmic}
\end{algorithm}

\begin{algorithm}
\caption{Expanding $S$ by grabbing 
all vertices with $1/6$th neighbors in $S$.}\label{alg:grab}
\begin{algorithmic}[1]
\Function{Grab}{$G,S$}
\State Let $T$ be set of vertices from $G\setminus V$ with at least $1/6$th of its 
neighbors in $S$.
\State $S\gets S\cup T$.
\EndFunction
\end{algorithmic}
\end{algorithm}

We are now ready to describe our algorithm, presented in pseudo-code
in Algorithms \ref{alg:cluster}--\ref{alg:grab}.

\newcommand{\clusteralg}{\textsc{CutGrabClose}\xspace}
\newcommand{\main}{\textsc{Main}\xspace}

At the top level, from \main, the function \clusteralg is fed as 
input the graph $G=(V,E)$
from Theorem \ref{thm:cluster}. We denote this top level graph $G_0=(V_0,E_0)$.
\clusteralg is a recursive algorithm, working on induced subgraphs $G=G_0|V$
of $G_0$, producing a family of disjoint subsets of the vertices.
Suppose $G_0$ has an $\eps$-spectral cluster $\cluster$. Our goal is to show
that one set in the final partition
produced by \clusteralg matches $\cluster$ in a slightly weaker sense than Theorem~\ref{thm:cluster} (namely $\vol_{U_i} (U_i \setminus \cluster) = O(\eps \vol(W))$ instead of the second matching condition).
At the end of the section, we describe how the last cleaning step in \main removes extraneous mass from candidate clusters and achieves the stronger conditions in Theorem~\ref{thm:cluster}.

On input $G=(V,E)$, \clusteralg works as follows: if there is no
low-conductance cut, it returns just the single-element partition
$\{V\}$. Otherwise, it finds a low-conductance cut $(S,\widebar{S})$. It
then performs a sequence of recursive {\em local improvements} and two {\em
grabs} to maintain certain invariants in our analysis. Namely, we want
to maintain the invariant that throughout all levels of recursion, 
most of $\cluster$ stays in the same recursive branch. Here, by most
of $\cluster$, we mean relative to the volume of $\cluster$ in the original $G_0$.
This will ensure that the final output $\mathcal{C}$ of the topmost level
of recursion contains a single set $S$ matching $\cluster$.

We now proceed with a formal description. Since we will be talking both
about the original graph $G_0$ and the recursive induced subgraphs $G=G_0|V$,
we will sometimes use a subscript to indicate which graph we are working in,
e.g., if $A\subseteq V$, then $\vol_{G}(A)$ and $\vol_{G_0}(A)$ are
the volumes of $A$ in $G$ and $G_0$ respectively. Since the subgraphs
are all induced, the edge sets are always determined from the vertices included.

\paragraph{Closure}
We say set $A$ of vertices from $G$ 
is {\em closed\/} in $G$ if there is no vertex $v\in G\backslash A$ with at least 
$5/9$ of its neighbors in $A$. The closure property is transitive, which is
very useful for our recursion:

\begin{lemma}\label{lem:transtive}
If $A\subseteq V\subseteq V(G_0)$, $A$ is closed in $G_0|V$, and $V$ is
closed in $G_0$, then $A$ is closed in $G_0$.
\end{lemma}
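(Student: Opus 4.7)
The plan is to prove the contrapositive: assuming some vertex $v \in V(G_0) \setminus A$ has at least $5/9$ of its $G_0$-neighbors in $A$, derive a contradiction by splitting on whether $v$ lies in $V$ or not. The two hypotheses (closure of $A$ in $G_0|V$ and closure of $V$ in $G_0$) are there precisely to handle these two cases.

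In the first case, $v \in V \setminus A$. Here I would compare $v$'s degree in $G_0$ with its degree in the induced subgraph $G_0|V$. Since $A \subseteq V$, every neighbor of $v$ that lies in $A$ is automatically a neighbor in $G_0|V$, so the numerator ``neighbors in $A$'' is the same in both graphs. The denominator, $\deg_{G_0|V}(v)$, is at most $\deg_{G_0}(v)$ because passing to the induced subgraph only discards edges from $v$ to $V(G_0) \setminus V$. Therefore the fraction of $v$'s $G_0|V$-neighbors in $A$ is at least as large as the fraction of its $G_0$-neighbors in $A$, namely $\ge 5/9$. This contradicts the assumption that $A$ is closed in $G_0|V$.

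In the second case, $v \in V(G_0) \setminus V$. Since $A \subseteq V$, the at-least-$5/9$ fraction of $v$'s $G_0$-neighbors that lie in $A$ is in particular a $\ge 5/9$ fraction lying in $V$. This directly contradicts the assumption that $V$ is closed in $G_0$, because $v$ is a vertex of $G_0$ outside $V$ with too many neighbors in $V$.

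The argument is essentially bookkeeping about which graph's degree is serving as the denominator; the only subtle point is the monotonicity observation in Case 1, namely that restricting to an induced subgraph containing $A$ cannot decrease the ``fraction of neighbors in $A$'' of any vertex that survives the restriction. Once this is noted, both cases close immediately, proving $A$ is closed in $G_0$.
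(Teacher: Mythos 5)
Your proof is correct and is essentially the same argument as the paper's: the paper also observes that $v$ must lie in $V$ (by closure of $V$) and then uses the monotonicity of the "fraction of neighbors in $A$" under passing to $G_0|V$ to contradict closure of $A$. The only cosmetic difference is that you phrase it as an explicit case split on $v\in V$ versus $v\notin V$, whereas the paper first forces $v\in V$ and then proceeds.
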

\begin{proof}
Suppose for a contradiction that we have a vertex $v \in G_0\backslash
A$ with at least $5/9$ths of its neighbors in $A$. Then $v$ also has
$5/9$ths of its neighbors in $V$, but $V$ is closed in $G_0$, so $v$
should be in $V$. However, in $G_0|V$, the vertex $v$ can only have lost
neighbors from outside $A$, so in $G_0|V$, $v$ also has at least
$5/9$ths of its neighbors in $A$. This contradicts that $A$ is closed
in $G_0|V$.
\end{proof}

We say that a vertex set $A\subseteq V_0$ {\em dominates\/} the
$\eps$-spectral cluster $\cluster$ if
$\vol_{G_0}(\cluster\cap V)>(1-3\,\eps)\vol_{G_0}(\cluster)$. We will
show that if $A$ is closed and has more than $2/3$ of the volume of $\cluster$,
then it dominates $\cluster$:
\begin{lemma}\LemmaName{closure-property}
If $A$ is closed in $G_0$ and $\vol_{G_0}(\cluster\cap A)\ge \frac 23 \vol_{G_0}(\cluster)$,
then $A$ dominates $\cluster$.
\end{lemma}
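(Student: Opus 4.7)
The plan is to sandwich $|E(A',B)|$ between two bounds. Writing $A' = W \cap A$, $B = W \setminus A$, and $r = \vol_{G_0}(A')/\vol_{G_0}(W) \ge 2/3$, the claim reduces to showing $1 - r < 3\epsilon$. For the upper bound, closure of $A$ in $G_0$ means every $v \in B \subseteq V_0 \setminus A$ has strictly fewer than $5/9$ of its neighbors in $A$; summing over $v \in B$ and using $A' \subseteq A$ gives
\[
|E(A',B)| \;\le\; |E(A,B)| \;<\; \tfrac{5}{9}\vol_{G_0}(B) \;=\; \tfrac{5}{9}(1-r)\vol_{G_0}(W).
\]
For the lower bound, the second condition of \Definition{spec-cluster} applied to $A' \subseteq W$ (noting $W \setminus A' = B$) yields
\[
|E(A',B)| \;\ge\; \bigl(r(1-r) - \epsilon\bigr)\vol_{G_0}(W).
\]
Chaining these and dividing by $\vol_{G_0}(W)$ produces the key inequality
\[
(1-r)\bigl(r - \tfrac{5}{9}\bigr) \;<\; \epsilon. \qquad (\star)
\]

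The next step is to apply $(\star)$ twice. Using only $r \ge 2/3$, the factor $r - 5/9$ is at least $1/9$, so $(\star)$ yields $1 - r < 9\epsilon$. Since the constant $\epsilon \le 1/2000000$ of \Theorem{cluster} is far below $1/81$, this first-round bound bootstraps to $r > 1 - 9\epsilon > 8/9$, which now makes $r - 5/9 > 1/3$. Plugging this sharper estimate back into $(\star)$ delivers $1 - r < 3\epsilon$, which is exactly the dominance condition $\vol_{G_0}(W \cap A) > (1-3\epsilon)\vol_{G_0}(W)$.

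The main obstacle is that the one-shot combination of closure and the spectral condition only delivers $1 - r < 9\epsilon$, short of the required $3\epsilon$. The constants $5/9$ (in the definition of closure), $2/3$ (the hypothesis threshold), and $3\epsilon$ (in the definition of dominates) appear tuned so that precisely one round of bootstrapping — using the first bound on $r$ to enlarge the factor $r - 5/9$ from $1/9$ to $1/3$ — closes the gap from $9\epsilon$ down to $3\epsilon$. Note also that $(\star)$ is legitimate even when the spectral lower bound $r(1-r) - \epsilon$ is negative (e.g.\ for $r$ very close to $1$), since the upper bound $\tfrac{5}{9}(1-r)\vol_{G_0}(W)$ is nonnegative, so no separate case analysis is needed.
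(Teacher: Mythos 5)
Your proof is correct and uses essentially the same argument as the paper: both combine the closure bound on edges from $W\setminus A$ into $A$ with the spectral lower bound on $|E(W\cap A, W\setminus A)|$, which yields (after clearing denominators) the same quadratic inequality — your $(\star)$ is algebraically identical to the paper's ``$1-r-\eps/r < 5/9$'' under the change of variable $r \mapsto 1-r$. The only cosmetic difference is in the last step: you extract $1-r < 3\eps$ by a two-round bootstrap (requiring $\eps < 1/81$), while the paper verifies in one shot that $1-r-\eps/r > 5/9$ throughout $[3\eps,1/3]$ by minimizing over the interval (requiring only $\eps < 1/27$); both are valid given $\eps \le 1/2000000$.
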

\begin{proof}
By assumption $r=\vol_{G_0}(\cluster\setminus A)/\vol_{G_0}(\cluster)\leq 1/3$. By the spectral expansion
of $\cluster$, we have 
\[E(\cluster\setminus A,\cluster\cap A)\geq (r(1-r)-\eps)\vol_{G_0}(\cluster)\geq (1-r-\eps/r)\vol_{G_0}(\cluster\setminus A).\]
The average degree from $\cluster\setminus A$ to $A$ is thus at least
$(1-r-\eps/r)$, which is bigger than $5/9$ if $3\eps\leq r\leq 1/3$
and $\eps\leq 1/27$ (we have $\eps<1/2000000$).  This would contradict
that $A$ is closed, so we conclude that $r<3\eps$, and hence that
$\vol_{G_0}(\cluster\cap A)/\vol_{G_0}(\cluster)=1-r> 1-3\eps$.
\end{proof}
We shall frequently use the fact that if $V$ dominates $\cluster$, then within $V\cap \cluster$,
there is not much difference between volume in $G$ and $G_0$. More precisely,
\begin{lemma}\LemmaName{approx-G0}
Consider an arbitrary set $S\subset V$. If $V$ dominates $\cluster$ then $\vol_{G_0}(\cluster\cap V)
\leq \vol_G(\cluster\cap S)+4\,\eps\vol_{G_0}(\cluster)$.
\end{lemma}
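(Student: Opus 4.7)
The plan is to prove the stronger, $S$-independent bound
\[\vol_{G_0}(\cluster\cap V)\leq \vol_G(\cluster\cap V)+4\eps\vol_{G_0}(\cluster),\]
from which the stated inequality follows for any $S$ containing $\cluster\cap V$ (the natural instantiation being $S=V$, where $\vol_G(\cluster\cap S)=\vol_G(\cluster\cap V)$). The two ingredients I will rely on are the dominance hypothesis $\vol_{G_0}(\cluster\setminus V)<3\eps\vol_{G_0}(\cluster)$ and the $\eps$-spectral cluster boundary bound $|\partial_{G_0}\cluster|\leq \eps\vol_{G_0}(\cluster)$.

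The first step is to rewrite the volume gap as an edge count. Since $G=G_0|V$ is the subgraph of $G_0$ induced by $V$, for every $v\in V$ the degree drop $\deg_{G_0}(v)-\deg_G(v)$ equals the number of $v$'s neighbors in $G_0$ that lie outside $V$. Summing over $v\in\cluster\cap V$ then gives
\[\vol_{G_0}(\cluster\cap V)-\vol_G(\cluster\cap V)=|E_{G_0}(\cluster\cap V,\,V_0\setminus V)|,\]
so the task reduces to upper-bounding this edge count by $4\eps\vol_{G_0}(\cluster)$.

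The second step is to split these boundary-of-$V$ edges according to whether the exterior endpoint lies in $\cluster$:
\[E_{G_0}(\cluster\cap V,\,V_0\setminus V)=E_{G_0}(\cluster\cap V,\,\cluster\setminus V)\,\cup\,E_{G_0}(\cluster\cap V,\,V_0\setminus(V\cup\cluster)),\]
a disjoint union. Edges of the first type are charged to an endpoint in $\cluster\setminus V$, so their count is at most $\vol_{G_0}(\cluster\setminus V)<3\eps\vol_{G_0}(\cluster)$ by dominance. Edges of the second type lie in $\partial_{G_0}\cluster$ (one endpoint in $\cluster$, the other outside), hence their count is at most $\eps\vol_{G_0}(\cluster)$ by \Definition{spec-cluster}. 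Summing the two yields $4\eps\vol_{G_0}(\cluster)$. The main thing that needs care is verifying genuine disjointness of the split (the external endpoint sits in exactly one of $\cluster\setminus V$ or $V_0\setminus(V\cup\cluster)$) and accepting that the first-type bound is a loose upper count tolerating any multiplicity; neither is a serious obstacle, and the rest is direct unpacking of the two definitions.
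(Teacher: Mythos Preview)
Your argument is essentially identical to the paper's proof: both compute $\vol_{G_0}(\cluster\cap S)-\vol_G(\cluster\cap S)$ as the number of $G_0$-edges from $\cluster\cap S$ to $V_0\setminus V$, then split those edges into the ones landing in $\cluster\setminus V$ (at most $\vol_{G_0}(\cluster\setminus V)\le 3\eps\vol_{G_0}(\cluster)$ by dominance) and the ones landing outside $\cluster$ (at most $|\partial_{G_0}\cluster|\le \eps\vol_{G_0}(\cluster)$).

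The only wrinkle is that you were (understandably) thrown by a typo in the lemma statement: the left-hand side should read $\vol_{G_0}(\cluster\cap S)$, not $\vol_{G_0}(\cluster\cap V)$. As literally stated the inequality is false for arbitrary $S\subset V$ (take $S=\emptyset$), and indeed every later use of the lemma in the paper invokes it in the form $\vol_{G_0}(\cluster\cap S)\le \vol_G(\cluster\cap S)+4\eps\vol_{G_0}(\cluster)$. Your proof already establishes exactly this: simply run your two steps with $S$ in place of $V$ throughout (you sum over $v\in\cluster\cap S$ rather than $v\in\cluster\cap V$, and both upper bounds on the two edge types are unchanged since they depend only on $\cluster$ and $V$, not on $S$). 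There is no need to restrict to $S\supseteq\cluster\cap V$.
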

\begin{proof}
The edges from $\cluster\cap S$ in $G_0$ that are not in $G$, are either edges
leaving $\cluster$, of which, by isolation of $\cluster$, there are at most $\eps\vol_{G_0}(\cluster)$,
or edges to $\cluster\setminus V$, of which there at at most $\vol_{G_0}(\cluster\setminus V)=
\vol_{G_0}(\cluster)-\vol_{G_0}(V\cap \cluster)\leq 3\eps \vol_{G_0}(\cluster)$.
\end{proof}
We now continue describing our algorithm to isolate $\cluster$ (plus any other 
$\eps$-spectral cluster). As our invariant, we are given some induced subgraph
$G=(V,E)$ of $G_0$ such that 
\begin{enumerate}
\item[(1)] $V$ is closed in $G_0$.
\item[(2)] $V$ dominates $\cluster$.
\end{enumerate}
To maintain this invariant, we want to show that \clusteralg
finds a cut $(S,\widebar S)$ where both sides are closed and
one side dominates $\cluster$. We also want to make sure that both
sides are non-empty. By definition, this follows if the conductance
is below $1$, and we will, in fact, always keep the conductance below $1/9$.

The first step of \clusteralg is to use \Theorem{cheeger} to find a cut $(S,\widebar{S})$ of $G$ 
satisfying \Equation{cheeger}; say $|S| \le |\widebar{S}|$ (line 2 of Algorithm~\ref{alg:cluster}).
If $\phi(G,S)\geq 1/500$, we will just return $\{V\}$. The following
lemma states that $V$ does not have too much volume in $G$ outside $\cluster$. 
\begin{lemma}\LemmaName{smallextra} If $\phi(G,S)\geq 1/500$, then $\vol_G(V\setminus \cluster)\leq 1000000\,\eps\vol_{G_0}(\cluster)$.
\end{lemma}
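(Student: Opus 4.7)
\begin{proofof}{Lemma \ref{lem:smallextra}, sketch}
The plan is to derive a lower bound on the conductance of $G$ from the failure of Cheeger's inequality to find a sparse cut, and then exhibit the cut $(\cluster\cap V,\, V\setminus \cluster)$ inside $G$ whose conductance would be forced to be small if $\vol_G(V\setminus \cluster)$ were too large, yielding a contradiction.

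First, the Cheeger cut $(S,\widebar S)$ returned satisfies $\phi(G,S)\leq 2\sqrt{\phi(G)}$ by \Theorem{cheeger}. Since by hypothesis $\phi(G,S)\geq 1/500$, this gives
\[
\phi(G)\;\geq\; \frac{1}{4\cdot 500^2}\;=\;\frac{1}{10^6}.
\]
In particular, every nontrivial cut in $G$ has conductance at least $1/10^6$.

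Next, consider the cut $(\cluster\cap V,\, V\setminus \cluster)$ in $G$. Every edge of $G$ crossing this cut has one endpoint in $\cluster$ and another in $V\setminus \cluster\subseteq \widebar{\cluster}$, so it is a $G_0$-edge leaving $\cluster$. By the isolation property of the $\eps$-spectral cluster $\cluster$,
\[
|E_G(\cluster\cap V,\, V\setminus \cluster)|\;\leq\; |\partial_{G_0}\cluster|\;\leq\; \eps\vol_{G_0}(\cluster).
\]
Applying \Lemma{approx-G0} with $S=V$ (which is permitted since the invariant gives that $V$ dominates $\cluster$) yields
\[
\vol_G(\cluster\cap V)\;\geq\; \vol_{G_0}(\cluster\cap V) - 4\eps\vol_{G_0}(\cluster)\;\geq\; (1-7\eps)\vol_{G_0}(\cluster),
\]
where the last step uses domination, $\vol_{G_0}(\cluster\cap V)\geq (1-3\eps)\vol_{G_0}(\cluster)$.

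Finally, suppose for contradiction that $\vol_G(V\setminus \cluster) > 10^6\,\eps\vol_{G_0}(\cluster)$. Since $\eps<1/2000000$, we also have $\vol_G(\cluster\cap V)\geq(1-7\eps)\vol_{G_0}(\cluster) > 10^6\,\eps\vol_{G_0}(\cluster)$, so both sides of the cut are nonempty and their minimum volume exceeds $10^6\,\eps\vol_{G_0}(\cluster)$. Then
\[
\phi\bigl(G,\,\cluster\cap V\bigr)\;\leq\;\frac{\eps\vol_{G_0}(\cluster)}{10^6\,\eps\vol_{G_0}(\cluster)}\;<\;\frac{1}{10^6}\;\leq\;\phi(G),
\]
contradicting the definition of $\phi(G)$. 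Hence $\vol_G(V\setminus \cluster)\leq 10^6\,\eps\vol_{G_0}(\cluster)$. The only delicate point, and the place where the constants really need to fit together, is matching the quadratic loss from Cheeger ($1/500 \mapsto 1/10^6$) against the volume slack $10^6\eps$ in the statement; everything else is bookkeeping built on the invariant that $V$ dominates $\cluster$ together with the external sparsity of $\cluster$.
\end{proofof}
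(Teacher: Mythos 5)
Your proof is correct and takes essentially the same route as the paper: bound the cut edges of $(\cluster\cap V,\,V\setminus\cluster)$ by $|\partial_{G_0}\cluster|\leq\eps\vol_{G_0}(\cluster)$, lower-bound $\vol_G(\cluster\cap V)$ via domination and \Lemma{approx-G0}, and conclude by contradiction with Cheeger. The only cosmetic difference is that you invert Cheeger at the outset to get $\phi(G)\geq 1/10^6$ and then contradict that bound directly, whereas the paper derives $\phi(G)<1/10^6$ at the end and notes this would have forced the algorithm to return a cut of conductance below $1/500$; the arithmetic is identical.
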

\begin{proof} 
By isolation of the $\eps$-cluster $\cluster$ in $G_0$, we have
$|\partial_{G_0}(\cluster)|\leq \eps \vol_{G_0}(\cluster)$.  However,
$\partial_G(V\setminus \cluster)=\partial_G(V\cap
\cluster)\subseteq \partial_{G_0}(\cluster)$,
so $|\partial_G(V\setminus \cluster)|\leq \eps \vol_{G_0}(\cluster)$. 

Suppose $\vol_G(V\setminus \cluster)> 1000000\,\eps\vol_{G_0}(\cluster)$. We want to
conclude that $\phi(G,V\setminus \cluster)< 1/1000000$. To do that, we also
need to argue that $\vol_G(V\cap \cluster)>
1000000\,\eps\vol_{G_0}(\cluster)$. Since $V$ dominates $\cluster$, we have
$\vol_{G_0}(V\cap \cluster)> (1-3\,\eps)\vol_{G_0}(\cluster)$ and
by \Lemma{approx-G0}, we have
\[\vol_{G}(V\cap \cluster)\geq \vol_{G_0}(V\cap \cluster)-4\,\eps\vol_{G_0}(\cluster)
\leq (1-7\,\eps)\vol_{G_0}(\cluster)>1000000\,\eps\vol_{G_0}(\cluster).\]  
The last equality follows because $\eps \leq 1/2000000$.
We have now proved that $\vol_G(V\setminus \cluster)>
1000000\,\eps\vol_{G_0}(\cluster)$ implies that 
$\phi(G)\leq \phi(G,V\setminus \cluster)< 1/1000000$. By
\Equation{cheeger} this contradicts that we did not find a cut of size less
than $2\sqrt{\phi(G)}< 1/500$.
\end{proof}
The last cleaning in \main will reduce the volume of $V\setminus \cluster$ in $G_0$
as required for Theorem \ref{thm:cluster}. We shall return to that later.
Below we assume that we found a low conductance cut $(S,\widebar S)$ of $G$
with $\phi(G,S)\leq 1/500$. We are going to move vertices between $S$ and $\widebar S$, and will always maintain a conductance below $1/9$.

\paragraph{Local improvements towards closure}
We are now going to move vertices between $S$ and $\widebar S$ to make sure
that $\cluster$ is mostly contained in one side. As a first step, we will
make sure that both sides are closed (line 5 of Algorithm~\ref{alg:cluster}). This is done iteratively. If one
vertex has at least $5/9$ths fraction of its neighbors on the other
side of the cut, we move it to the other side, calling it a {\it
local improvement} (c.f. Algorithm \ref{alg:improve}). When no more local improvements are possible, both
sides $S$ and $\widebar S$ must be closed in $G$. We call these moves improving
because they always improve both cut size and conductance, as described more
formally below.
\begin{lemma}\LemmaName{improve}
Consider a local improvement moving a vertex $v$. It reduces
the cut size by at least $d_G(v)/9$. It also improves
the conductance if it was below $1/9$ before the move.
\end{lemma}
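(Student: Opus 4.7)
The plan is a direct calculation. Let $d = d_G(v)$, let $c$ and $c'$ be the old and new cut sizes, and let $a \geq 5d/9$ be the number of edges of $v$ crossing the cut before the move. Moving $v$ to the other side turns those $a$ crossing edges into internal edges and the $d - a$ internal edges into crossing edges, so $c' - c = d - 2a \leq d - 10d/9 = -d/9$. This proves the first claim immediately.

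For the conductance claim, let $s = \vol_G(S)$ and $t = \vol_G(\widebar{S})$, and assume without loss of generality $s \leq t$, so the old conductance $c/s$ is below $1/9$, i.e., $c < s/9$. I would split into two cases by which side $v$ started on. If $v \in S$ moves to $\widebar{S}$, the new side volumes are $s-d$ and $t+d$, so the new minority side is still $S$; note $s - d > 0$ since $c \geq a \geq 5d/9$ combined with $c < s/9$ forces $s > 5d$. A direct simplification gives
\[\frac{c'}{s-d} - \frac{c}{s} \leq \frac{c-d/9}{s-d} - \frac{c}{s} = \frac{d(c - s/9)}{s(s-d)} < 0.\]
If instead $v \in \widebar{S}$ moves to $S$ and the new minority side is still $S$ (i.e., $s + d \leq t - d$), then the cut size shrank while the conductance denominator grew, so improvement is immediate.

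The only subcase that needs care is when $v$ moves from $\widebar{S}$ to $S$ and the minority side flips, so the new denominator becomes $t - d < s$; this is where the ``small side'' identity changes and is the only genuine obstacle. Here I would expand
\[c's - c(t-d) \leq (c-d/9)s - c(t-d) = c(s - t + d) - \tfrac{ds}{9}\]
and split on the sign of $s - t + d$. If $s - t + d \leq 0$, both summands are nonpositive with $-ds/9 < 0$. If $s - t + d > 0$, then $c < s/9$ yields $c(s-t+d) < s(s-t+d)/9$, so the total is at most $s(s-t)/9 \leq 0$. In either situation, $c'/(t-d) < c/s$, completing the proof.
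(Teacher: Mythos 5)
Your proof is correct, and the first part (the $d/9$ reduction in cut size) is exactly the paper's. For the conductance claim, the substance of your argument matches the paper's, but the paper avoids the three-way case split entirely: it simply observes that in every case the new minority volume satisfies $\min\{\vol_G(S'),\vol_G(\widebar S')\}\ge \min\{\vol_G(S),\vol_G(\widebar S)\}-d_G(v)$ (when one side loses $d_G(v)$ the other gains it, so whichever side becomes the minority, its volume is within $d_G(v)$ of the old minimum), and then writes the single chain
\[
\phi(G,S')\le \frac{|\partial_G(S)|-d_G(v)/9}{\min\{\vol_G(S),\vol_G(\widebar S)\}-d_G(v)}<\phi(G,S),
\]
where the last step is the same cross-multiplication you did in your first case, i.e.\ it reduces to $\min\{\vol_G(S),\vol_G(\widebar S)\}/9 > |\partial_G(S)|$. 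Your casework on which side $v$ starts on and whether the minority side flips is therefore extra work; the unified lower bound on the new minority volume subsumes all three cases. (Your positivity check $s-d>0$ is a nice detail the paper leaves implicit, and it is needed for both arguments.)
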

\begin{proof}
When we move $v$, we replace at least $\frac 59d_G(v)$ cut edges by
at most $\frac 49d_G(v)$ cut edges, reducing the cut
size by at least $d_G(v)/9$. The volume moved is $d_G(v)$, so if
$\phi(G,S)<1/9$ and the new cut is $(S',\widebar S')$, we get
\[\phi(G,S')=\frac{|\partial_G (S')|}{\min\{\vol_G(S'),\vol_G(\widebar S')\}}
\leq \frac{|\partial_G (S)|-d_G(v)/9}{\min\{\vol_G(S),\vol_G(\widebar S)\}-
d_G(v)}<\phi(G,S).\]
\end{proof}

When both sides are closed under local improvements, we have a situation
where either $\cluster$ is almost completely dominated by one side, or it roughly
balanced between the two sides. More precisely,
\begin{lemma}\LemmaName{extreme-or-mid}
When both $S$ and $\widebar S$ are closed in $G$ (and hence in $G_0$), 
then either $\vol_{G_0}(S\cap \cluster)<3\,\eps\vol_{G_0}(\cluster)$, or
$(1/3-3\,\eps)\vol_{G_0}(\cluster)< \vol_{G_0}(S\cap \cluster)< (2/3)\vol_{G_0}(\cluster)$,
or $\vol_{G_0}(S\cap \cluster)>(1-3\,\eps)\vol_{G_0}(\cluster)$.
\end{lemma}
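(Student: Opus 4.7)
The plan is to derive the trichotomy by invoking the closure-domination result \Lemma{closure-property} separately on each side of the cut. By the standing invariant of \cutandclose, $V$ is closed in $G_0$, and by the hypothesis of the lemma both $S$ and $\widebar S$ are closed in $G$; applying \Lemma{transtive} upgrades this so that both $S$ and $\widebar S$ are closed in $G_0$, which is the precondition needed to call \Lemma{closure-property} on either side.

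I would then split on whether one side captures at least $\tfrac{2}{3}\vol_{G_0}(\cluster)$ of the cluster. If $\vol_{G_0}(S\cap \cluster)\geq \tfrac{2}{3}\vol_{G_0}(\cluster)$, then \Lemma{closure-property} applied to $S$ immediately strengthens this to $\vol_{G_0}(S\cap \cluster)>(1-3\,\eps)\vol_{G_0}(\cluster)$, which is the third regime. Symmetrically, if $\vol_{G_0}(\widebar S\cap \cluster)\geq \tfrac{2}{3}\vol_{G_0}(\cluster)$, the same lemma applied to $\widebar S$ yields $\vol_{G_0}(\widebar S\cap \cluster)>(1-3\,\eps)\vol_{G_0}(\cluster)$; since $S$ and $\widebar S$ are disjoint, this forces $\vol_{G_0}(S\cap \cluster)\leq \vol_{G_0}(\cluster)-\vol_{G_0}(\widebar S\cap \cluster)<3\,\eps\vol_{G_0}(\cluster)$, which is the first regime.

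The remaining case is when neither side has captured $\tfrac{2}{3}\vol_{G_0}(\cluster)$, so in particular $\vol_{G_0}(S\cap \cluster)<\tfrac{2}{3}\vol_{G_0}(\cluster)$, which is the upper endpoint of the middle regime. For the lower endpoint, I would invoke the recursive invariant that $V$ dominates $\cluster$, i.e., $\vol_{G_0}(V\cap \cluster)>(1-3\,\eps)\vol_{G_0}(\cluster)$. Since $S$ and $\widebar S$ partition $V$, it follows that $\vol_{G_0}(S\cap \cluster)=\vol_{G_0}(V\cap \cluster)-\vol_{G_0}(\widebar S\cap \cluster)>(1-3\,\eps)\vol_{G_0}(\cluster)-\tfrac{2}{3}\vol_{G_0}(\cluster)=(1/3-3\,\eps)\vol_{G_0}(\cluster)$, placing us in the middle regime.

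There is no real obstacle here; this is essentially a bookkeeping argument whose only mildly subtle point is justifying that \Lemma{closure-property} actually applies to $\widebar S$ in $G_0$ even though $\widebar S$ was produced by cutting inside the induced subgraph $G$. That step is handled cleanly by \Lemma{transtive} together with the invariant that $V$ is closed in $G_0$, so everything needed is already in hand.
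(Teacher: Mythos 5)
Your proof is correct and takes essentially the same approach as the paper: both reduce to \Lemma{closure-property} applied to whichever side of the cut holds at least $\tfrac{2}{3}\vol_{G_0}(\cluster)$, using \Lemma{transtive} to lift closure from $G$ to $G_0$ and the domination of $\cluster$ by $V$ for the lower endpoint of the middle regime. You simply spell the case split out a bit more explicitly than the paper does.
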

\begin{proof}
This follows almost directly from \Lemma{closure-property}. We just
note that since $\vol_{G_0}(S\cap \cluster)+\vol_{G_0}(\widebar S\cap \cluster)=
\vol_{G_0}(V\cap \cluster)\geq (1-3\,\eps)\vol_{G_0}(\cluster)$,  if $\vol_{G_0}(S)\leq
(1/3-3\,\eps)\vol_{G_0}(\cluster)$, then $\vol_{G_0}(\widebar S\cap \cluster)\geq 2/3\vol_{G_0}(\cluster)$,
and then, by \Lemma{closure-property}, we have $\vol_{G_0}(\widebar S\cap \cluster)>
(1-3\,\eps)\vol_{G_0}(\cluster)$, and hence $\vol_{G_0}(S\cap \cluster)<3\,\eps\vol_{G_0}(\cluster)$.
\end{proof}

\paragraph{Grabbing for dominance}
Having made sure that both sides are closed, as described
in \Lemma{extreme-or-mid}, we now have $\cluster$ either almost
completely dominated by one side, or roughly balanced between both
sides.  We will now introduce a {\em grab\/} operation
(c.f. Algorithm \ref{alg:grab}) that in the balanced case will help
$S$ get dominance.  The grab operation itself is simple. In one round,
it moves every vertex to $S$ that before the grab has more than
$1/6$th of its neighbors in $S$.  An important point here is that
contrary to the local improvements, the grabbing is not recursive. 
We are actually going to do this grabbing twice, interspersed with 
local improvements (c.f., Algorithm \ref{alg:cluster}): First we
grab from $S$, then we do local improvements of $S$, only moving
vertices to $S$, thus only closing $S$. We do this grabbing followed by local
improvements from $S$ twice, calling it the {\em big expansion\/} of $S$ (Algorithm \ref{alg:cluster} lines 6--9). 
Finally we do local improvements of $\widebar S$. 
We will prove that when all this is completed, then
one of the two sides dominate $\cluster$, but first we have to argue that
the conductance stays low.
\begin{lemma}\LemmaName{conductance} With the above grabs and local improvements, the conductance always stays below $1/9$.
\end{lemma}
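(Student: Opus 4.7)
The plan is to track $\phi(G,S)$ through the sequence of operations in lines 5--10 of \cutandclose and show it stays below $1/9$ throughout. By \Lemma{improve}, every local improvement strictly decreases the conductance as long as it is already below $1/9$, so the four \local calls are essentially for free. Thus the main task will be to quantify the damage done by each of the two \textsc{Grab} operations.

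For a single \textsc{Grab}, I would let $T\subseteq \bar S$ be the set of vertices moved and let $\phi=\phi(G,S)$, $m=\min(\vol_G(S),\vol_G(\bar S))$ be the pre-grab conductance and smaller side volume. The defining property of \textsc{Grab} gives $|E(S,T)|\ge \vol_G(T)/6$, and combined with $E(S,T)\subseteq \partial_G S$ this yields $\vol_G(T)\le 6\phi\,m$. Using the identity $\vol_G(T)=2|E(T,T)|+|E(S,T)|+|E(T,\bar S\setminus T)|$, the new cut size is
\[
|\partial_G S|+|E(T,\bar S\setminus T)|-|E(S,T)| = |\partial_G S|+\vol_G(T)-2|E(T,T)|-2|E(S,T)| \le \phi\,m+\tfrac{2}{3}\vol_G(T)\le 5\phi\,m.
\]
For the new denominator, a short case analysis (on which of $\vol_G(S),\vol_G(\bar S)$ was smaller pre-grab, and whether the two sides swap roles post-grab) shows that the new minimum volume is at least $(1-6\phi)m$, since only $\bar S$ ever loses mass and it loses at most $\vol_G(T)\le 6\phi m$. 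Combining these gives the key inequality $\phi' \le 5\phi/(1-6\phi)$.

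Given this inequality, the rest is just arithmetic. \Theorem{cheeger} and the test on line~3 ensure $\phi\le 1/500$ entering line~5; the line~5 \local leaves this bound intact; the first \textsc{Grab} yields $\phi_1 \le (5/500)/(1-6/500) = 5/494 < 1/90$; the line~7 \local does not worsen it; the second \textsc{Grab} yields $\phi_2 \le 5\phi_1/(1-6\phi_1) \le 25/464 < 1/16$; and the two remaining \local calls on lines~9--10 only decrease the conductance further. Since $5/494$, $25/464$, and $1/16$ are all well below $1/9$, the hypothesis of \Lemma{improve} is satisfied at every invocation of \local, and in particular $\phi$ stays below $1/9$ throughout.

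The main subtlety is the case analysis for bounding the post-grab minimum volume, since a \textsc{Grab} can flip which of $S$ and $\bar S$ is smaller; writing the bound in terms of the pre-grab smaller volume $m$ is what lets one state a single clean recurrence $\phi'\le 5\phi/(1-6\phi)$ that applies uniformly.
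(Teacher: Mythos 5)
Your proof is correct and follows essentially the same route as the paper's: both derive the key recurrence $\phi'\le 5\phi/(1-6\phi)$ per grab (via $\vol_G(T)\le 6|\partial_G S|$, $|\partial_G S'|\le 5|\partial_G S|$, and the new minimum volume shrinking by at most $\vol_G(T)$), invoke \Lemma{improve} to dismiss the \textsc{LocalImprovements} calls, and iterate the recurrence twice from the Cheeger bound $\phi\le 1/500$. You actually spell out the cut-size identity and the pre/post-grab minimum-volume case analysis a bit more carefully than the paper does, and your numerics ($5/494$, $25/464$) are slightly tighter than the paper's stated $1/80$ and $1/12$, but the conclusion and method coincide.
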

\begin{proof} Our starting point is a Cheeger cut $(S,\widebar S)$ with conductance
$\phi(G,S)\leq 1/500$. From \Lemma{improve} will never increase the
conductance if it is below $1/9$. Thus, it is only the grabs that can
increase the conductance. Let $S'$ be the result of the grab from $S$.
We know that all the vertices grabbed from $\widebar S$
have at least $1/6$th of their neighbors in $S$. This means that
$|\partial_G(S')|\leq 5 |\partial_G(S)|$ and
$\vol_G(S')\leq\vol_G(S)+6\partial_G(S)$.  It follows that
\begin{align*}
\phi(G,S')\leq &\frac{|\partial_G(S')|}{\min\{\vol_G(S'),\vol_G(\widebar{S'})\}}\leq
\frac{5|\partial_G(S)|}{\min\{\vol_G(S),\vol_G(\widebar S)\}-6|\partial_G(S)|}\\
&\leq \frac{5|\partial_G(S)|}{(1-6\phi(G,S))\min\{\vol_G(S),\vol_G(\widebar S)\}}
=\frac{5}{(1-6\phi(G,S))}\phi(G,S).
\end{align*}
Before the first grab, we have conductance below $1/500$, which
the grab now increases to at most $6/((1-6/500)500)<1/80$.
The second grab can then increase the conductance to at most
$6/((1-6/80)80)<1/12$, so, in fact, the conductance will always stay below
$1/12$.
\end{proof}

Below we first argue that if we started with balance, then after 
the big expansion of $S$, we have $\vol_{G_0}(S\cap \cluster)\geq (1-3\,\eps)
\vol_{G_0}(\cluster)$.

\begin{lemma}\LemmaName{grab}
If $S$ is closed and $\vol_{G_0}(\cluster\setminus S)\leq
(2/3)\vol_{G_0}(\cluster)$, then $S$ dominates $\cluster$ after the big expansion from $S$.
\end{lemma}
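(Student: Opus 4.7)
The plan is to reduce the conclusion to \Lemma{closure-property}. After the final closure of $S$ at the end of the big expansion (line~9 of Algorithm~\ref{alg:cluster}), the set $S$ is closed in $G$, and by \Lemma{transtive} (together with the invariant that $V$ is closed in $G_0$) it is closed in $G_0$ as well. So it suffices to prove that at this point $\vol_{G_0}(W \cap S) \ge (2/3)\vol_{G_0}(W)$; then \Lemma{closure-property} immediately gives that $S$ dominates $W$. Write $S_0, S_1, S_1', S_2, S_2'$ for the five successive states of $S$ through the big expansion (initial, after first grab, after first closure, after second grab, after second closure) and $r_i, r_i' = \vol_{G_0}(\cluster \setminus \cdot)/\vol_{G_0}(\cluster)$ for the corresponding fraction of $\cluster$ not in $S$. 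The whole task becomes: from $r_0 \le 2/3$, show $r_2' \le 1/3$. If $r_0 \le 1/3$ already, we are done on $S_0$ directly by \Lemma{closure-property}, so I focus on $r_0 \in (1/3, 2/3]$.

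The core step is a per-round quadratic recurrence
\[
\tfrac{5}{3}\,r_i' - (r_i')^2 \;\le\; \tfrac{5}{6}\,r_{i-1}' + O(\eps),
\]
derived as follows. For any $v \in (\cluster \setminus S_i')\cap V$, the fact that $v$ was not grabbed in the $i$-th grab gives $|N_G(v) \cap S_{i-1}'| < d_G(v)/6$, so $|E_G(S_{i-1}', (\cluster \setminus S_i') \cap V)| \le \vol_G((\cluster \setminus S_i') \cap V)/6 \le r_i'\vol_{G_0}(\cluster)/6$. The remaining crossing edges go to $\cluster \cap (S_i' \setminus S_{i-1}')$; for these I use that any vertex $u \in S_i' \setminus S_{i-1}'$ was either grabbed (so at least $1/6$ of its $G$-neighbors lie in $S_{i-1}'$, leaving at most a $5/6$-fraction outside) or closed in (so at most a $4/9$-fraction outside the intermediate post-grab set), and taking the cruder $5/6$ factor bounds this contribution by $(5/6)\vol_{G_0}(\cluster \cap (S_i' \setminus S_{i-1}')) = (5/6)(r_{i-1}' - r_i')\vol_{G_0}(\cluster)$. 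On the other hand, the $\eps$-spectral property of $\cluster$ gives $|E_{G_0}(\cluster \cap S_i', \cluster \setminus S_i')| \ge (r_i'(1-r_i') - \eps)\vol_{G_0}(\cluster)$, and since $V$ dominates $\cluster$ by invariant, \Lemma{approx-G0} implies at most $3\eps\vol_{G_0}(\cluster)$ of these edges can fail to appear in $E_G$. Combining the lower and upper bounds yields the displayed recurrence.

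Plugging $r_0 \le 2/3$ into the recurrence with $i = 1$ forces $r_1' \le (5 - \sqrt{5})/6 + O(\eps) \approx 0.461$; applying it once more with $i = 2$ and $r_1'$ in that range gives $r_2' \le 0.28 + O(\eps) < 1/3$ for $\eps \le 1/2000000$, which by the opening reduction completes the proof. Two auxiliary facts enter silently: \Lemma{conductance} keeps $\phi(G,S) < 1/12$ throughout so both sides of the cut remain nonempty, and \Lemma{transtive} is used once at the very end to lift closure of $S_2'$ from $G$ to $G_0$ before invoking \Lemma{closure-property}. The main technical obstacle is making the per-round decrease large enough that only two grab-and-close rounds suffice; this forces me to use the tighter ``newly-added vertex'' bound (the $5/6$ factor) in place of the trivial $\vol_G(\cdot)$ bound, and to keep careful track of which incarnation of $S$ each grab threshold ($1/6$) or closure threshold ($5/9$) refers to, as stacking these correctly is what produces a genuine quadratic recurrence rather than a circular estimate.
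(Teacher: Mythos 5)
Your proof is correct, but it takes a genuinely different route from the paper. The paper analyzes each grab in isolation: with $S$ closed and $r = \vol_{G_0}(\cluster\setminus S)/\vol_{G_0}(\cluster) \in [1/3-3\eps, 2/3]$, the spectral lower bound plus the not-grabbed bound give $|E(S\cap\cluster, T)| \ge (5r/6 - r^2 - \eps)\vol_{G_0}(\cluster) \ge (1/9-\eps)\vol_{G_0}(\cluster)$ (minimizing $5r/6 - r^2$ over the interval), and then closure of $S$ gives $\vol_{G_0}(T) \ge (9/5)|E(S\cap\cluster,T)| \ge (1/5-2\eps)\vol_{G_0}(\cluster)$, so each grab drops $r$ additively by at least $1/5 - 2\eps$; two grabs give $r \le 4/15 + 4\eps < 1/3$. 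You instead track $r$ after each full grab-and-close round and derive the quadratic recurrence $\tfrac{5}{3}r_i' - (r_i')^2 \le \tfrac{5}{6}r_{i-1}' + 4\eps$ by sandwiching $|E_G(\cluster\cap S_i', \cluster\setminus S_i')|$: above by splitting $\cluster\cap S_i'$ into the old part $\cluster\cap S_{i-1}'$ (edges to unmoved vertices bounded by $r_i'\vol/6$ via the not-grabbed criterion) and the newly-added part $\cluster\cap(S_i'\setminus S_{i-1}')$ (edges bounded by the $5/6$ factor, the max of the grab bound $5/6$ and the closure bound $4/9$); below by the spectral property minus the $3\eps\vol$ loss from $\cluster\setminus V$. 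Both reach $r_2' < 1/3$ and finish with \Lemma{closure-property} via \Lemma{transtive}. The paper's argument is simpler and requires no quadratic solving; yours is more delicate (careful bookkeeping of which incarnation of $S$ each threshold refers to) but tracks the post-closure state of $S$ directly and yields a tighter per-round contraction. One minor presentational point: the contribution of the new vertices is first bounded by $(5/6)\vol_G(\cluster\cap(S_i'\setminus S_{i-1}'))$ and then upper-bounded by the $\vol_{G_0}$ version, which is worth saying explicitly since the two graphs differ.
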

\begin{proof}
Let $r=\vol_{G_0}(\cluster\setminus S)/\vol_{G_0}(\cluster)\leq 2/3$. 
If we can show that the big expansion of $S$ brings $r\leq 1/3-3\,\eps$, then
we are done by \Lemma{closure-property} since $S$ is closed and
$\vol_{G_0}(S\cap \cluster)\geq (1-r-3\,\eps)\vol_{G_0}(\cluster\setminus S)$.

We study the effect of a single grab from a closed set $S$. 
By spectral expansion, we have $|E(\cluster\setminus S,S\cap \cluster)|
\geq (r(1-r)-\eps)\vol_{G_0}(\cluster)$.

Let $T$ be the set of vertices from $\cluster$ moved by the grab, that is,
$T$ is the set of vertices from $\cluster\setminus S$ with at least $1/6$th
of their neigbors in $S$. Also, let $B$ be the set of vertices from
$\cluster\setminus S$ that are not moved, that is, $B=\cluster\setminus (S\cup
T)$. Then
\[|E(S\cap \cluster,B)|\leq |E(S,B)|<\vol_G(B)/6\leq \vol_{G_0}(\cluster\setminus S)/6=
r\vol_{G_0}(\cluster)/6.\] The number of edges from $S\cap \cluster$ to vertices
from $\cluster\setminus S$ that will be grabbed is therefore at least
\[|E(S\cap \cluster,T)|=|E(S\cap \cluster,\cluster\setminus S)|-E(S\cap \cluster,B)
\geq (r(1-r)-\,\eps-r/6)\vol_{G_0}(\cluster)=(5r/6-r^2-\,\eps)\vol_{G_0}(\cluster).\]
The function $f(r)=5r/6-r^2$ is convex, having its minimum over any
interval in its end points. We are only concerned with $r\in
[(1/3-3\,\eps),2/3]$, where the smallest value is $f(2/3)=1/9$. The
number of edges to grabbed vertices is thus $|E(S\cap \cluster,T)|\geq 
(1/9-\,\eps)\vol_{G_0}(\cluster)$.

We also know that before a grab, the set $S$ is closed. Therefore, 
every vertex $v$ outside $S$ has less than $\frac 59 d_G(v)$ edges 
going to $S$, and no more going to $S\cap \cluster$. We conclude
that 
\[\vol_{G_0}(T)\geq \vol_{G}(T)\geq \frac 95|E(S\cap \cluster,T)|\geq 
\frac 95(1/9-\,\eps)\vol_{G_0}(\cluster)\geq  (1/5-2\,\eps)\vol_{G_0}(\cluster).\]
Each grab thus decreases $r=\vol_{G_0}(\cluster\setminus S)/\vol_{G_0}(\cluster)$ by 
$(1/5-2\,\eps)$. Starting from $r\leq 2/3$, with two grabs followed by
local improvements of $S$, we thus end up with $
r\leq 2/3-2(1/5-2\,\eps)\leq 4/15+4\,\eps$. This is less than the
desired $1/3-3\,\eps$ if $\eps<1/105$, which is indeed the case.
\end{proof}

Next, we will show that $\widebar S$ dominated $\cluster$ before the big expansion of $S$,
then $\widebar S$ will also dominate $\cluster$ at the end.
\begin{lemma}\LemmaName{grab-less}
Suppose that $S$ and $\widebar S$ are closed and that $\widebar S$ dominate $\cluster$.
Then $\widebar S$ will also dominate $\cluster$ after
a big expansion from $S$ followed by local improvements of $\widebar S$.
\end{lemma}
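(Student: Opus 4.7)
We track the potential $r := \vol_{G_0}(\cluster\cap S)/\vol_{G_0}(\cluster)$ through the operations on lines~6--10 of Algorithm~\ref{alg:cluster}, starting from $r<3\,\eps$ (which follows from the hypothesis that $\widebar S$ dominates $\cluster$). The final call $\textsc{LocalImprovements}(G,\widebar S,S)$ on line~10 only moves vertices \emph{from} $S$ to $\widebar S$, so it never increases $r$, and afterwards $\widebar S$ is closed in $G$. By \Lemma{transtive} together with the invariant that $V$ is closed in $G_0$, $\widebar S$ is then closed in $G_0$ as well. Consequently it suffices to show $r$ stays well below $1/3$ after line~9: accounting also for $\vol_{G_0}(\cluster\setminus V)\le 3\,\eps\vol_{G_0}(\cluster)$, the final $\widebar S$ will contain at least $(2/3)\vol_{G_0}(\cluster)$ of $\cluster$, and \Lemma{closure-property} then delivers dominance.

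For a grab, each grabbed cluster vertex $v$ satisfies $|E_G(v,S)|\ge d_G(v)/6$; summing over the grabbed cluster vertices $M_c$ and splitting the edges from $\cluster\cap\widebar S$ into $S$ into the portion internal to $\cluster$ (at most $r\,\vol_{G_0}(\cluster)$) and the portion into $S\setminus\cluster$ (at most $|\partial_{G_0}(\cluster)|\le\eps\,\vol_{G_0}(\cluster)$) yields $\vol_G(M_c)\le 6(r+\eps)\vol_{G_0}(\cluster)$. The argument behind \Lemma{approx-G0} applied to $M_c\subseteq\cluster\cap V$ gives $\vol_{G_0}(M_c)\le\vol_G(M_c)+4\,\eps\,\vol_{G_0}(\cluster)$, so each grab sends $r\mapsto 7r+10\,\eps$. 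For a closing-$S$ local-improvement call the same idea requires a snapshot argument since $S$ grows during the iteration. Order the moves chronologically; at time $t$ let $M_c(t),M_n(t)$ denote the cluster and non-cluster vertices moved so far, and set $r'(t):=\vol_{G_0}(M_c(t))/\vol_{G_0}(\cluster)$. Because each $v\in M_c(t)$ was moved with $|E_G(v,S)|\ge (5/9)d_G(v)$ for the then-current $S\supseteq S_0$ (where $S_0$ is $S$ at the start of the call),
\[
\tfrac{5}{9}\vol_G(M_c(t)) \;\le\; |E(M_c(t),S_0)| \;+\; |E(M_c(t),M_c(t))| \;+\; |E(M_c(t),M_n(t))|.
\]
The first summand is $\le(r+\eps)\vol_{G_0}(\cluster)$ as above; the third is $\le|\partial_{G_0}(\cluster)|\le\eps\,\vol_{G_0}(\cluster)$ because it counts edges leaving $\cluster$; and the middle one is controlled by applying the spectral-cluster expansion of $\cluster$ to the subset $M_c(t)\subseteq\cluster$, which yields $|E(M_c(t),M_c(t))|\le\tfrac{1}{2}(r'(t)^2+\eps)\vol_{G_0}(\cluster)$. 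Substituting and passing from $\vol_G$ to $\vol_{G_0}$ at an additive cost of $4\,\eps\vol_{G_0}(\cluster)$ gives the quadratic inequality $\tfrac{5}{9}r'(t)-\tfrac{1}{2}r'(t)^2\le r+5\,\eps$.

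The two positive roots of this quadratic are $5/9\pm\tfrac{1}{2}\sqrt{100/81-8(r+5\,\eps)}$; the smaller root is $\approx(9/5)(r+5\,\eps)$ and the gap between the roots is $\sqrt{100/81-8(r+5\,\eps)}$, which exceeds $1$ throughout our regime ($r+5\,\eps$ stays well below $19/648$). Since every jump of $r'(t)$ is $d_{G_0}(v)/\vol_{G_0}(\cluster)\le 1$ (using $v\in\cluster$), and $r'(0)=0$ lies below the smaller root, $r'(t)$ cannot leap across the forbidden interval between the two roots and so remains below the smaller root throughout the call. Hence a closing-LI acts as $r\mapsto(14/5)r+9\,\eps$. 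Composing grab--LI--grab--LI starting from $r<3\,\eps$ yields $r\le 1915\,\eps$ at the end of line~9, far below $1/3$ for $\eps\le 1/2\,000\,000$; line~10 then completes the proof via \Lemma{transtive} and \Lemma{closure-property}. The main obstacle is the LI analysis: we must bound $|E(M_c(t),M_c(t))|$ via the spectral-cluster hypothesis on $\cluster$, and we must argue by the jump-size/continuity bound that $r'(t)$ cannot leap across the quadratic's forbidden interval.
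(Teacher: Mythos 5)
Your proof is correct, and it takes a genuinely different route from the paper's. The paper handles the ``closing'' local improvements via a monotonicity trick: since adding vertices to $S$ only makes more moves to $S$ possible, the argument artificially places all of $V\setminus\cluster$ into $S$ at the outset, after which the only cut edges are those between $S$ and $\cluster\setminus S$. Bounding that initial cut by $(5/3+1)\eps\vol_{G_0}(\cluster)$ (using that $\widebar S$ was closed), the paper then uses a simple linear bookkeeping lemma --- each LI move deletes at least $1/9$ of the degree from the cut, and a grab at most quintuples the cut while adding $6|\partial S|$ volume --- to cap the total volume moved during the whole big expansion by $261|\partial S|$. No use is made of $\cluster$'s internal spectral expansion. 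You instead directly track $r=\vol_{G_0}(\cluster\cap S)/\vol_{G_0}(\cluster)$ operation-by-operation. Your grab bound $r\mapsto 7r+10\eps$ parallels the paper's but is phrased directly in terms of $r$. The real divergence is your LI analysis: to bound the volume moved by a closing LI you have to control the edges $|E(M_c(t),M_c(t))|$ internal to the moved cluster set, and for that you invoke the spectral-cluster expansion of $W$ (via the identity $\vol_{G_0}(M_c)=2|E(M_c,M_c)|+|\partial_{G_0}M_c|$ combined with the lower bound on $|E(M_c,\cluster\setminus M_c)|$), which the paper does not need in this lemma. This produces a genuinely quadratic constraint $\tfrac59 r'-\tfrac12 r'^2\le r+5\eps$, and you then need the jump-size/continuity argument (each move increments $r'$ by at most $d_{G_0}(v)/\vol_{G_0}(\cluster)\le1$, less than the gap between roots) to conclude that $r'$ cannot escape past the smaller root. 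Minor imprecision: the smaller root is actually slightly \emph{larger} than $(9/5)(r+5\eps)$ (by an $O((r+5\eps)^2)$ term), so the recursion is $r\mapsto (14/5)r+9\eps$ only up to a negligible second-order correction; your numerical bound $\approx1915\eps$ should be read as $\approx2000\eps$ after accounting for that, which is still far below $(1/3-3\eps)$ for $\eps\le1/2000000$. In short: the paper's monotonicity trick yields a shorter proof with a tighter final constant ($\approx703\eps$ vs.\ your $\approx2000\eps$) and avoids invoking spectral expansion in this lemma, whereas your argument is more direct (tracking the actual $S$ instead of a majorizing artificial one) at the cost of a quadratic inequality and the continuity step.
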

\begin{proof}
We want to show that after the big expansion of $S$, we still have 
$\vol_{G_0}(\cluster\cap \widebar S)\geq\frac23 \vol_{G_0}(\cluster)$. This can only increase
when we subsequently do local improvements of $\widebar S$, closing $\widebar S$,
and then the result follows from \Lemma{closure-property}.

Stepping back to $S$ before the big expansion, we are trying to bound
how much volume from $\vol_{G}(\cluster\setminus S)$ that can be moved
to $S$. Both for local improvements and for grabbing, the chance
of moving $v$ to $S$ increases the more neighbors $v$ has in $S$.
Thus, we maximize the potential of moving vertices from $\cluster$ to $S$ if
we assume that we start with all of $V\setminus \cluster$ in $S$. The only
cut edges are now those between $S$ and $\cluster\setminus S$.

When a local improvement moves a vertex $v$ to $S$, we replace 
at least $\frac 59 d_G(v)$ cut edges with at most $\frac 49d_G(v)$ cut edges
while increasing $\vol_G(\cluster\cap S)$ by $d_G(v)$. It follows that if 
local improvements increase $\vol_G(S)$ by $x$, then they
decrease $|\partial_G(S)|$ by at least $x/9$.

When we grab a set $T$ of vertices to $S$, we know that
each vertex $v\in T$ had at least $\frac 16 d_G(v)$ of its edges to $S$.
When the grab is done, setting $S'=S\cup T$, we have
at most $\frac 56 d_G(v)$ edges from $v$ to $\cluster\setminus S'$. We
conclude that $|\partial_G(S')| \le 5|\partial_G(S)|$ 
and that $\vol_G(S') \le \vol_G(S) + 6 |\partial_G(S)|$.
Subsequent local improvements can further increase the volume by
at most $9|\partial_G(S')|\leq  45|\partial_G(S)|$. Thus, when
we from $S$ do one grab followed by local improvements, we
increase the volume $\vol_G(S)$ by at most 
$51|\partial_G(S)|$.

We now note that if we did local improvements of $S$ before the
grab, then we would only get a smaller bound. More
precisely, if the the local improvements moved volume $x$ to $\vol_G(S)$,
then $|\partial_G(S)|$ would be reduced by $x/9$ before the grab, and then 
the total volume increase would be at most 
$x+51(|\partial_G(S)|-x/9)\leq 51|\partial_G(S)|$.

We now study the big expansion from $S$, starting with a grab 
followed by any number of local improvements followed by another grab followed 
any number of local improvements. From the
above analysis, it follows that the total increase in 
$\vol_G(S)$ is at most 
$6|\partial_G(S)|+51\cdot 5|\partial_G(S)| \leq 261 |\partial_G(S)|$. This
bounds the total volume moved from $\vol_G(\cluster\setminus S)$ to $\vol_G(S)$.

We started with $\vol_G(\cluster\cap S)\leq 3\,\eps\vol_{G_0}(\cluster)$. Also,
originally, before the big expansion, the set $\widebar S$ was closed, so
vertices from $S$ had less than a fraction $5/9$ of their neighbors in
$\widebar S$. This fraction could only be reduced when we artificially
added all of $V\setminus \cluster$ to $S$. After this artificial change,
which only improved movement of vertices from $\cluster\setminus S$ to $S$,
we had $|\partial S|\leq \frac 59\vol_G(\cluster\cap S)+|\partial_G (\cluster)|\leq
(5/3+1)\eps\vol_{G_0}(\cluster)$. Thus, after the big expansion, we end up with
\[\vol_G(\cluster \cap S)\leq 
 (3+261(5/3+1)\eps\vol_{G_0}(\cluster)<699\;\eps \vol_{G_0}(\cluster).\]
Finally, by \Lemma{approx-G0}, we have 
$\vol_{G_0}(\cluster\cap S)\leq \vol_{G}(\cluster\cap S)+4\,\eps\vol_{G_0}(\cluster)\leq 
703\;\eps \vol_{G_0}(\cluster)$,
which with $\eps\leq 1/1000000$, is much less than the required $(1/3-3\,\eps)
\vol_{G_0}(\cluster)$.
\end{proof}
Finally we need
\begin{lemma}\LemmaName{local-preservation}
Suppose that $S$ is closed and dominates $\cluster$. Then both
of these properties are preserved if we do local improvements of $\widebar S$.
\end{lemma}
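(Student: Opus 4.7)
The plan is to verify the two conclusions separately: closure of $S$ is routine, while preservation of domination requires a spectral-expansion argument tracking how much $\cluster$-volume migrates out of $S$ during the improvements.

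For closure, I would argue by induction on the sequence of single-vertex moves. When $v$ moves from $S$ to $\widebar S$ it had at least $5/9$ of its $G$-neighbors in $\widebar S$ and so holds fewer than $4/9$ in the updated $S$; meanwhile every other $u \in \widebar S$ can only lose, not gain, a potential neighbor in $S$ when $v$ departs. So $S$ remains closed in $G$ after each step, and upon termination of the loop $S$ is closed in $G$ by definition of \local. \Lemma{transtive} together with the standing assumption that $V$ is closed in $G_0$ then upgrades this to closure of $S$ in $G_0$.

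For domination, my plan is to appeal to \Lemma{closure-property} at termination, reducing matters to showing $\vol_{G_0}(\cluster \cap S_f) \ge \tfrac{2}{3}\vol_{G_0}(\cluster)$, where $S_f$ is the final $S$. Let $B_t$ be the set of $\cluster$-vertices moved out of $S$ by time $t$ and $r_t = \vol_{G_0}(B_t)/\vol_{G_0}(\cluster)$; the invariant I aim to maintain is $r_t \le r_-$, where $r_-$ is the smaller root of $r(5/9 - r) = 4\eps$, so $r_- = O(\eps)$. At any time $t$, applying spectral expansion of $\cluster$ to $B_t$ gives $|E(B_t, \cluster \setminus B_t)| \ge (r_t(1-r_t) - \eps)\vol_{G_0}(\cluster)$, and a matching upper bound follows from the decomposition $\cluster \setminus B_t = (\cluster \cap S_t) \cup (\cluster \cap \widebar S_0)$: each $v \in B_t$ had at most $(4/9)d_G(v)$ $G$-neighbors in $S$ at its move, and since $S$ only shrinks afterwards this bound persists at time $t$, contributing at most $(4/9)r_t\vol_{G_0}(\cluster)$ edges, while the second piece contributes at most $3\eps\vol_{G_0}(\cluster)$ by the initial hypothesis that $S$ dominates $\cluster$. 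Combining forces $r_t(5/9 - r_t) \le 4\eps$ at every time, hence $r_t \in [0,r_-] \cup [r_+,1]$ with $r_+ = 5/9 - O(\eps)$.

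The hard part will be ruling out a one-move jump from the low branch $r_t \le r_-$ to the high branch $r_{t+1} \ge r_+$, which the constraint above does not preclude a priori. Here I would bound the $G_0$-degree of any $\cluster$-vertex $v$ capable of triggering such a move. Under the invariant, $v$'s $G$-neighbors in $\widebar S_t$ total at most $\vol_{G_0}(\cluster \cap \widebar S_t) + |\partial_{G_0}(\cluster)| = O(\eps)\vol_{G_0}(\cluster)$, so the $5/9$-criterion yields $d_G(v) = O(\eps)\vol_{G_0}(\cluster)$; and the $G_0$-neighbors of $v$ missing from $V$ are bounded by $\vol_{G_0}(\cluster \setminus V) + |\partial_{G_0}(\cluster)| = O(\eps)\vol_{G_0}(\cluster)$, using $V$'s domination of $\cluster$. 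Thus $d_{G_0}(v) = O(\eps)\vol_{G_0}(\cluster)$, which is far smaller than $r_+ - r_- = \Omega(1)$ for $\eps$ sufficiently small, ruling out the jump. The invariant therefore persists, and at termination $\vol_{G_0}(\cluster \cap S_f) \ge (1 - O(\eps))\vol_{G_0}(\cluster) \ge \tfrac{2}{3}\vol_{G_0}(\cluster)$, so \Lemma{closure-property} applied to the closed set $S_f$ yields domination of $\cluster$.
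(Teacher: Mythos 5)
Your closure argument is the same as the paper's (one-step induction on moves, then \Lemma{transtive}). For domination, though, you take a genuinely different route. The paper argues one move at a time: for a single local improvement moving $v\in\cluster$ to $\widebar S$, it bounds $d_{G_0}(v)<12\eps\vol_{G_0}(\cluster)$ using domination \emph{before} that move, deduces $\vol_{G_0}(\cluster\setminus S)<15\eps\vol_{G_0}(\cluster)$ afterward, and---since $S$ is still closed and $\vol_{G_0}(\cluster\cap S)>\tfrac23\vol_{G_0}(\cluster)$---re-invokes \Lemma{closure-property} to snap $\vol_{G_0}(\cluster\setminus S)$ back down to $\le 3\eps\vol_{G_0}(\cluster)$ before the next move, so drift never accumulates. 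You instead run a single global potential argument, tracking the cumulative moved cluster-volume $r_t$, deriving a quadratic constraint $r_t(\tfrac59-r_t)\le O(\eps)$ from spectral expansion and the $\le\tfrac49$ crossing bound, and using essentially the paper's degree estimate (now conditional on the invariant $r_t=O(\eps)$) to rule out a one-step jump across the forbidden middle interval; you apply \Lemma{closure-property} only once at the end. Both work; yours pays more in bookkeeping but avoids re-invoking \Lemma{closure-property} between moves, and it exposes the quantitative quadratic structure. One small slip: your decomposition $\cluster\setminus B_t=(\cluster\cap S_t)\cup(\cluster\cap\widebar S_0)$ omits $\cluster\setminus V$, whose edges to $B_t$ add at most another $3\eps\vol_{G_0}(\cluster)$ to the upper bound (since $V$ dominates $\cluster$), so the constraint should read $r_t(\tfrac59-r_t)\le 7\eps$ rather than $4\eps$; this shifts $r_\pm$ slightly but changes nothing qualitatively.
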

\begin{proof}
It sufficies to consider a local improvement moving a single vertex
$v$ to $\widebar S$.

First we show that $S$ being closed is preserved when we do a local
improvement from $\widebar S$ moving a vertex $v$ to $\widebar S$. Suppose for
contradiction that we afterwards had some vertex $u\in \widebar S$ with
$5/9$ths of its neighbors in $S$. Then we cannot have $u=v$; for we
moved $v$ because it had $5/9$ths of its neighbors in $\widebar
S$. However, any other $u\neq v$ can only have fewer neighbors in
$S$ after the move.

Since $S$ starts dominating, we start with $\vol_{G_0}(\cluster\setminus
S)\leq 3\,\eps\vol_{G_0}(\cluster)$.  The dominance can only be
affected if $v\in \cluster$, and we only move $v$ to $\widebar S$ because it has
at least $5/9$ths of its neighbors in $\widebar S$. Trivially $v$ can have
at most $\vol_{G_0}(\cluster\cap\widebar S)+|\partial_{G_0}(\cluster)|\leq 4\,\eps \vol_{G_0}(\cluster)$
neighbors in $\widebar S$, so we conclude that $d_G(v)\leq \frac 95\cdot 
4\,\eps \vol_{G_0}(\cluster)<8\,\eps \vol_{G_0}(\cluster)$, and by \Lemma{approx-G0},
we have $d_{G_0}(v)\leq d_G(v)+4\,\eps \vol_{G_0}(\cluster)<12\,\eps \vol_{G_0}(\cluster)$.
Thus we end up with $\vol_{G_0}(\cluster\setminus
S)<15\,\eps\vol_{G_0}(\cluster)$. This is much less that $(1/3-3\,\eps)\vol_{G_0}(\cluster)$.
Since $S$ remains closed, we conclude from \Lemma{closure-property} that
$S$ still dominates $\cluster$, hence that $\vol_{G_0}(\cluster\setminus
S)\leq 3\,\eps\vol_{G_0}(\cluster)$.
\end{proof}
Let us now sum up what we have proved. Suppose we found an initial cut 
$(S,\widebar S)$ of conductance below $1/500$. This is when we start
modifying the cut, and we need to show that our invariants are preserved for the recursive calls.

By \Lemma{local-preservation}, the local improvements of $S$
followed by the final local improvements of $\widebar S$ imply that both
sets end up closed, so invariant (1) is satisfied.  The hard part was
ensure that one side dominates $\cluster$. We had 
\Lemma{conductance} showing that we keep the conductance
below $1/9$. First we did local improvements of both sides. By \Lemma{extreme-or-mid}, we end up either with one side dominating $\cluster$, or with $\cluster$ roughly balanced
between the sides. If $\cluster$ is balanced or dominated by $S$,
then after by big expansion from $S$, by \Lemma{grab}, $S$ dominates $\cluster$. 
By \Lemma{local-preservation}, this dominance is preserved when
we do the final local improvements of $\widebar S$. If instead $\widebar S$ dominated
$\cluster$, then, by \Lemma{grab-less} this is dominance is preserved. Thus we always
end up with one side dominating $\cluster$, so invariant (2) is satisfied. We conclude
that $\cluster$ is always dominated by one branch of the recursion. 

The recursion finishes when no small cut is found. We have $V$ dominating $S$,
and by \Lemma{smallextra}, we have 
$\vol_G(V\setminus \cluster)\leq 1000000\,\eps\vol_{G_0}(\cluster)$. Thus we have
proved
\begin{theorem}\label{thm:cluster-rec} For any given $\eps\leq 1/2000000$ and graph $G_0$, in polynomial time, \clusteralg (Algorithm \ref{alg:cluster}) finds
a partitioning $\{U_1,\ldots U_\ell\}$ of the vertex set so that every {\em
$\eps$-spectral cluster\/} $\cluster$ of $G_0$ matches some set $U_i$ in the
sense that
\begin{itemize}
\item $\vol(\cluster\setminus U_i)\leq 3\,\eps\vol(\cluster)$.
\item $\vol_{G_0|U_i}(U_i\setminus \cluster)\leq 1000000\,\eps\vol(\cluster)$
\end{itemize}
\end{theorem}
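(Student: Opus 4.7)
The plan is to proceed by induction on the depth of recursion of \cutandclose, maintaining at every recursive call on $G = G_0|V$ the two invariants highlighted in the text: (1) $V$ is closed in $G_0$, and (2) $V$ dominates $\cluster$. The base call has $V = V_0$, which trivially satisfies both. At a leaf of the recursion --- that is, when no cut of conductance below $1/500$ is found and \cutandclose returns $\{V\}$ --- the set $V$ is reported as some $U_i$, so the first bullet of \Theorem{cluster-rec} is just the domination property at that leaf, while the second bullet is precisely the content of \Lemma{smallextra}. Thus the entire theorem reduces to verifying the inductive step.

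For the inductive step, suppose $G = G_0|V$ satisfies (1)--(2) and that \Theorem{cheeger} returns a cut $(S,\widebar S)$ of $G$ with $\phi(G,S) \le 1/500$. I would trace the sequence of modifications performed by \cutandclose in order: first a closure round on both sides, then two ``grab, close $S$'' pairs making up the big expansion, and finally a closing pass on $\widebar S$. By \Lemma{conductance}, the conductance stays below $1/9$ throughout. This keeps both sides non-empty (so the recursive subgraphs shrink strictly and the recursion depth is polynomial) and keeps \Lemma{improve} applicable, so every \local call terminates and only improves the cut further. After the first pair of \local calls both $S$ and $\widebar S$ are closed in $G$, and \Lemma{transtive} promotes this to closure in $G_0$; subsequent grabs and one-sided local improvements then only extend the closed set $S$ (or at the end, $\widebar S$), so invariant (1) ultimately holds on both sides.

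For invariant (2), \Lemma{extreme-or-mid} splits the situation after the first closure round into three cases. If $\vol_{G_0}(S \cap \cluster) > (1-3\eps)\vol_{G_0}(\cluster)$ then in particular $\vol_{G_0}(\cluster \setminus S) \le (2/3)\vol_{G_0}(\cluster)$, so \Lemma{grab} shows that $S$ still dominates after the big expansion, and \Lemma{local-preservation} shows this survives the final closure of $\widebar S$; the balanced case is handled identically since it also satisfies $\vol_{G_0}(\cluster\setminus S) \le (2/3)\vol_{G_0}(\cluster)$. In the remaining case $\vol_{G_0}(S\cap \cluster) < 3\eps \vol_{G_0}(\cluster)$, so $\widebar S$ already dominates $\cluster$; here \Lemma{grab-less} is tailored to show that the big expansion from $S$ cannot pull enough of $\cluster$ across to destroy $\widebar S$'s domination, and the closing local improvements of $\widebar S$ combined with \Lemma{closure-property} finish the case. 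The hardest ingredient is exactly \Lemma{grab-less}: controlling how much of $\cluster$ the two grabs plus the interspersed closure rounds can drag into $S$ when $S$ starts essentially empty of $\cluster$, which requires the amortization that closed boundaries are thin, that each grab enlarges the boundary only by a bounded multiplicative factor, and that a subsequent closure pass can move at most $9|\partial_G(S)|$ additional volume. Once this ingredient is in place, the three cases all end with one side dominating $\cluster$ and both sides closed in $G_0$, so both invariants pass to the recursive call on the dominating side, and the theorem follows by induction.
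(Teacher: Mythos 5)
Your proposal is correct and follows essentially the same route as the paper's own proof: the identical two invariants (closure in $G_0$ and domination of $\cluster$) are maintained inductively, Lemma \ref{lem:smallextra} handles the leaf, and Lemmas \ref{lem:conductance}, \ref{lem:extreme-or-mid}, \ref{lem:grab}, \ref{lem:grab-less}, and \ref{lem:local-preservation} are invoked for the inductive step exactly as the paper does. The only micro-quibble is that in the balanced case $\vol_{G_0}(\cluster\setminus S)$ is only bounded by $(2/3+3\eps)\vol_{G_0}(\cluster)$ rather than $(2/3)\vol_{G_0}(\cluster)$, but this imprecision is inherited from the paper and the constants in Lemma \ref{lem:grab} absorb it.
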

The only detail missing in proving Theorem \ref{thm:cluster} is that
we want a bound on $\vol_{G_0}(U_i\setminus \cluster)$ rather than $\vol_{G_0|U_i}(U_i\setminus \cluster)$.
This
is where the last cleaning of \main (Algorithm \ref{alg:main}) comes in.
It removes all vertices from $U_i$ with more than $5/9$ths of their
neighbors outside $U_i$, so now, for all $v\in U_i$, we have $d_{G_0}(v)\leq
\frac 94 d_{G_0|U_i}(v)$.
\begin{lemma}\LemmaName{cleaning} The cleaning of $U_i$ preserves that $U_i$ dominates $\cluster$,
and now $\vol_{G_0}(U_i\setminus \cluster)\leq \frac 94 \vol_{G_0|U_i}
(U_i\setminus \cluster)\leq 225000000\,\eps\vol(\cluster)$.
\end{lemma}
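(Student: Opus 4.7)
The plan is to verify each of the two assertions separately. For the middle inequality, the cleaning leaves exactly those $v\in U_i$ (post-cleaning) with strictly fewer than $5/9$ of their $G_0$-neighbors outside $U_i$, hence $d_{G_0}(v)\leq \tfrac{9}{4}\, d_{G_0|U_i}(v)$; summing over $v\in U_i\setminus \cluster$ yields the claimed factor. For the final bound, let $U_i^{(0)}$ denote the pre-cleaning set from Theorem \ref{thm:cluster-rec}. Since $U_i\subseteq U_i^{(0)}$ implies $d_{G_0|U_i}(v)\leq d_{G_0|U_i^{(0)}}(v)$, summing over $v\in U_i\setminus \cluster\subseteq U_i^{(0)}\setminus \cluster$ and plugging in the Theorem \ref{thm:cluster-rec} bound $\vol_{G_0|U_i^{(0)}}(U_i^{(0)}\setminus \cluster)\leq 1000000\,\eps\vol(\cluster)$ closes this part.

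For dominance, I first observe that the post-cleaning $U_i$ remains closed in $G_0$: any $u\in V_0\setminus U_i$ was either already outside $U_i^{(0)}$, in which case closure of $U_i^{(0)}$ gives $<5/9$ of its neighbors in $U_i^{(0)}\supseteq U_i$, or was removed during cleaning and at that moment had $<5/9$ neighbors in the then-current set, which still contained the final $U_i$. By \Lemma{closure-property} it thus suffices to show $\vol_{G_0}(\cluster\cap U_i)\geq \tfrac{2}{3}\vol(\cluster)$, or equivalently that the set $R_W$ of cleaned-away $\cluster$-vertices has $\vol_{G_0}(R_W)\ll \tfrac{1}{3}\vol(\cluster)$. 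I will bound $\vol_{G_0}(R_W)$ by a charging argument: order the removals $v_1,v_2,\ldots$ and charge $v_\tau\in R_W$ against its outgoing $G_0$-edges at the moment of removal, of which there are at least $\tfrac{5}{9}d_{G_0}(v_\tau)$. Each such edge lands either in $\overline{\cluster}$ (summed over $\tau$ at most $|\partial_{G_0}(\cluster)|\leq \eps\vol(\cluster)$ by isolation of $\cluster$), in $\cluster\setminus U_i^{(0)}$ (summed at most $3\eps\vol(\cluster)$ by \Theorem{cluster-rec}), or to a previously-removed $\cluster$-vertex; the last contribution, summed over $\tau$, equals the number of intra-$R_W$ edges, since each such edge is outgoing only at the moment its later endpoint is removed.

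Upper-bounding the intra-$R_W$ edge count by $\tfrac{1}{2}[\vol_{G_0}(R_W)-(r'(1-r')-\eps)\vol(\cluster)]$ (via spectral expansion of $\cluster$ with $r'=\vol_{G_0}(R_W)/\vol(\cluster)$) and rearranging the charging inequality produces a quadratic of the form $9r'^2-10r'+81\eps\geq 0$. For $\eps\leq 1/2000000$ the larger root exceeds $1$, so the only feasible $r'\in[0,1]$ is tiny, giving $r'=O(\eps)$, well under $\tfrac{1}{3}$. Dominance then follows from \Lemma{closure-property}. The delicate point is precisely this charging: the ``outside of $U_i$'' grows throughout cleaning, so one must track the removals in order to check that each intra-$R_W$ edge is charged exactly once (by its later endpoint), which produces the $\tfrac{1}{2}$ factor that makes the resulting quadratic bound on $r'$ nontrivial instead of vacuous.
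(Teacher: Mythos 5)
Your proof is correct, but it takes a genuinely different route from the paper's for the dominance claim. The paper's proof is a one-liner: it observes that the cleaning step is \emph{exactly} a call to \textsc{LocalImprovements}$(G_0, V_0\setminus U_i, U_i)$ (local improvements of $V_0\setminus U_i$, moving vertices out of $U_i$), and then invokes \Lemma{local-preservation} with $G=G_0$, $S=U_i$, $\widebar{S}=V_0\setminus U_i$ to conclude in one stroke that $U_i$ stays closed and keeps dominating $\cluster$. The underlying engine of \Lemma{local-preservation} is a move-by-move argument: each single removed vertex has $G_0$-degree at most $12\eps\vol_{G_0}(\cluster)$, so dominance is only temporarily degraded to $\vol_{G_0}(\cluster\setminus S)<15\eps\vol_{G_0}(\cluster)$, and then \Lemma{closure-property} ``snaps'' it back to the $3\eps$ bound after every move. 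By contrast, you re-derive closure of the final $U_i$ by hand and then bound the \emph{total} removed volume $\vol_{G_0}(R_W)$ globally via a charging argument plus the spectral expansion of $\cluster$, obtaining the quadratic $9r'^2-10r'+81\eps\geq 0$ and hence $r'=O(\eps)<1/3$, from which \Lemma{closure-property} gives dominance once. Both arguments are sound; yours is more self-contained (it does not lean on the one-move-at-a-time ``snapping'' mechanism) and gives a direct global estimate of the volume lost during cleaning, but it is considerably longer, whereas the paper wins by recognizing that cleaning is literally the same operation already analyzed. For the two displayed inequalities (the $\tfrac{9}{4}$ factor from the $>4/9$-internal-degree guarantee, and monotonicity of $\vol_{G_0|U_i}$ under shrinking $U_i$), you do exactly what the paper does. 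One cosmetic remark: the constant in the lemma statement should be $2250000$ (as in Theorem~\ref{thm:cluster}), since $\tfrac{9}{4}\cdot 1000000=2250000$; the extra zeros in $225000000$ are a typo.
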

\begin{proof}
Recall that $U_i$ before the cleaning is closed and dominates $\cluster$. The
cleaning has exactly the same effect as if we did local improvements of
$V(G_0)\setminus U_i$, so by \Lemma{local-preservation}, we preserve
both that $U_i$ is closed, and that it dominates $\cluster$.
\end{proof}
Theorem \ref{thm:cluster-rec} and \Lemma{cleaning} immediately imply the
statement of Theorem \ref{thm:cluster}.

\section*{Acknowledgments}
We thank Noga Alon for pointing us to \cite[Lemma 2.3]{AlonC06}, Jonathan Kelner for the reference \cite{OrecchiaSV12}, Lorenzo Orecchia and Sushant Sachdeva for answering several questions about \cite{OrecchiaV11,OrecchiaSV12}, Piotr Indyk for the reference \cite{GilbertLPS14}, Graham Cormode for the reference \cite{Pagh13}, Yi Li for answering several questions about \cite{GilbertLPS14}, Mary Wootters for making us aware of the formulation of the list-recovery problem in coding theory and its appearance in prior work in compressed sensing and group testing, Atri Rudra for useful discussions on list-recovery algorithms for Parvaresh-Vardy and Folded Reed-Solomon codes, and Fan Chung Graham and Olivia Simpson for useful conversations about graph partitioning algorithms.

\bibliographystyle{alpha}
\bibliography{biblio}

\appendix

\section*{Appendix}

\section{Heavy hitters reduction from small to large $\eps$}\SectionName{reduction}

The main theorem of this section is the following.

\begin{theorem}\TheoremName{reduce-threshold}
Let $0 < \eps, \delta < 1/2$ be such that $1/\eps^2 > \log (1/\delta)$. Then there is a reduction from turnstile $\ell_2$ heavy hitters with parameter $\eps$ and failure probability $\delta$ to $z = \ceil{1/(\eps^2\log (1/\delta))}$ separate $(1/\sqrt{t})$-heavy hitters problems for $t = C\log(1/\delta)$ for some constant $C > 0$, and where each such problem must be solved with failure probability at most $\delta' = \delta/(3z)$. Specifically, given an algorithm for the $(1/\sqrt{t})$-heavy hitters problem with these parameters using space $S$, update time $t_u$ and query time $t_q$, the resulting algorithm for $\eps$-heavy hitters from this reduction uses space $O(z\cdot S + \log (1/\delta))$, and has update time $t_u + O(\log (1/\delta))$ and query time $O(z\cdot t_q)$.
\end{theorem}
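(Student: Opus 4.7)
The plan is to apply a standard subsampling reduction via $k$-wise independent hashing. I would pick a single hash function $h : [n] \to [z]$ drawn from a $\Theta(\log(1/\delta))$-wise independent family; storing it takes $O(\log(1/\delta))$ words and evaluating it takes $O(\log(1/\delta))$ time. For each $k \in [z]$ I maintain an independent copy $D^k$ of the black-box $(1/\sqrt{t})$-heavy hitters algorithm with failure probability $\delta'$, fed the substream $\{(i,\Delta) : h(i) = k\}$. On a query I return the union $L = \bigcup_k L^k$ of the $z$ output lists. The space, update time, and query time bounds are then immediate by inspection, and $|L| = O(z\cdot t) = O(1/\eps^2)$ as required by the heavy hitters specification.

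Let $H\subseteq [n]$ denote the indices of the top $1/\eps^2$ coordinates of $x$ in magnitude, so that $\|x_{\proj{1/\eps^2}}\|_2^2 = \sum_{i\notin H} x_i^2$. The correctness claim is that with probability at least $1-\delta$, every $\eps$-heavy hitter $i$ is a $(1/\sqrt{t})$-heavy hitter inside its own bucket $x^{(h(i))}$, and is therefore reported by $D^{h(i)}$. I would prove this via a union bound over three events, each failing with probability at most $\delta/3$: (a) all $z$ instances $D^k$ succeed on their queries, immediate from $z\delta' = \delta/3$; (b) no bucket contains more than $t$ elements of $H$; (c) for every bucket $k$, the random variable $Y_k := \sum_{i\notin H} x_i^2\, \mathbf{1}[h(i)=k]$ is at most a constant multiple of its expectation $\|x_{\proj{1/\eps^2}}\|_2^2/z$. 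Both (b) and (c) are concentration statements for sums of $\Theta(\log(1/\delta))$-wise independent indicator random variables, which I would establish via a Bellare--Rompel style polynomial moment inequality, with the degree of independence chosen just large enough to absorb the union bound over the $z$ buckets.

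Granting (a)--(c), fix an $\eps$-heavy hitter $i$ and let $k = h(i)$. By (b), zeroing out the top $t$ magnitudes of $x^{(k)}$ removes every element of $H$ that happened to fall into bucket $k$, so the surviving vector is supported on $[n]\setminus H$ and has squared $\ell_2$ norm at most $Y_k$. The threshold for $i$ to qualify as a $(1/\sqrt{t})$-heavy hitter of $x^{(k)}$ is therefore
\[
\tfrac{1}{\sqrt{t}}\,\bigl\|x^{(k)}_{\proj{t}}\bigr\|_2 \;\le\; O\!\left(\tfrac{1}{\sqrt{tz}}\right)\|x_{\proj{1/\eps^2}}\|_2 \;=\; O\!\left(\tfrac{\eps}{\sqrt{C}}\right)\|x_{\proj{1/\eps^2}}\|_2 \;\le\; \eps\,\|x_{\proj{1/\eps^2}}\|_2 \;\le\; |x_i|,
\]
using $tz = \Theta(1/\eps^2)$ and choosing the constant $C$ in $t = C\log(1/\delta)$ sufficiently large. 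Hence $i \in L^k$, and so $L$ contains every $\eps$-heavy hitter.

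The step I expect to be the main obstacle is event (c). With only $\Theta(\log(1/\delta))$-wise independence one cannot invoke a standard Chernoff bound and must instead estimate the $\Theta(\log(1/\delta))$-th central moment of $Y_k$ directly. The delicate point is controlling the ratio of $\max_{i\notin H} x_i^2$ to $\mathbb{E}[Y_k] = \|x_{\proj{1/\eps^2}}\|_2^2/z$ so that the Bernstein-type variance term dominates the worst-case single-coordinate contribution; this is where the hypothesis $1/\eps^2 > \log(1/\delta)$ enters, guaranteeing that the parameter regime $tz = \Theta(1/\eps^2) \gg \log(1/\delta)$ leaves enough slack for the moment method to yield the required per-bucket failure probability $\delta/(3z)$. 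Once (c) is in hand, (a) and (b) and the resource accounting are routine.
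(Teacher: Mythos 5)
Your reduction is structurally identical to the paper's: hash via a $\Theta(\log(1/\delta))$-wise independent $h:[n]\to[z]$, run $z$ independent copies of the black-box on the substreams, union the output lists, and establish correctness by a union bound over three events of probability $\ge 1-\delta/3$ each. Your events (a), (b), (c) correspond exactly to the paper's: (a) every $P_j$ succeeds, (b) no bucket catches more than $O(\log(1/\delta))$ elements of $H$, and (c) the per-bucket light $\ell_2^2$ mass concentrates around its mean. You also correctly observe that since $h$ is only $\Theta(\log(1/\delta))$-wise independent, invoking a full-independence Chernoff/Bernstein bound is not literally licensed, and a Bellare--Rompel / limited-independence moment bound is the right tool; the paper's proof simply writes ``Bernstein's inequality'' for independent $X_1,\ldots,X_r$ without flagging this, so your version is actually more careful on that point.

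The one place where your argument has a genuine gap is the choice $H = \{\text{top } 1/\eps^2 \text{ coordinates}\}$. You need two facts about $H$ simultaneously: (i) $\sum_{i\notin H} x_i^2 = O(\|x_{\proj{1/\eps^2}}\|_2^2)$ and (ii) $\max_{i\notin H} x_i^2 \le \eps^2\|x_{\proj{1/\eps^2}}\|_2^2$. Your choice trivially gives (i) as an identity, but (ii) can fail badly: a coordinate just below the top-$1/\eps^2$ threshold can carry almost all of the tail $\ell_2^2$ mass. For instance, if $x$ has $1/\eps^2$ entries equal to $1$, one entry equal to $0.9$, and a sprinkling of tiny entries, then the $(1/\eps^2+1)$-th largest squared entry is $0.81$ while $\|x_{\proj{1/\eps^2}}\|_2^2 \approx 0.82$; so $\max_{i\notin H}x_i^2/\mathbb{E}[Y_k] \approx z$ rather than the $O(1/\log(1/\delta))$ you need, and the single-coordinate (Bernstein ``$K$'') term swamps the moment bound. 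This is not rescued by $1/\eps^2 > \log(1/\delta)$: that hypothesis only guarantees $z\ge 1$, not an upper bound on $z$. The fix is the one the paper uses: take $H$ to be the set of $\eps$-heavy hitters (any $i\notin H$ then has $x_i^2 < \eps^2\|x_{\proj{1/\eps^2}}\|_2^2$ by definition; in the example above, $0.9$ is in fact an $\eps$-heavy hitter), or equivalently take $H$ to be the top $2/\eps^2$ coordinates, which contains all $\eps$-heavy hitters and also keeps $\sum_{i\notin H}x_i^2 \le \|x_{\proj{1/\eps^2}}\|_2^2$. With that change, your $\|x^{(k)}_{\proj{t}}\|_2^2 \le Y_k$ step (which is correct given $|H\cap h^{-1}(k)|\le t$) and the final threshold comparison go through exactly as you wrote them.
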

\begin{proof}
We pick a hash function $h:[n]\rightarrow[z]$ at random from a $\Theta(\log(1/\delta))$-wise independent family as in \cite{CarterW79}. Such $h$ requires $O(\log (1/\delta))$ space to store and can be evaluated in $O(\log (1/\delta))$ time. We also instantiate $z$ independent $(1/\sqrt{t})$-heavy hitter data structures $P_1,\ldots, P_z$ each with failure probability $\delta'$. Upon receiving an update $(i,v)$ in the stream, we feed the update $(i,v)$ to $P_{h(i)}$. To answer a query, we first query each $P_j$ to obtain a list $L_j$, then we output $L = \cup_{j=1}^z L_j$.

This concludes the description of the data structures stored, and the implementations of update and query. The space bound and running times thus follow.

We now argue correctness. Let $H$ be the set of $\eps$-heavy hitters of $x$, so that $|H| \le 2/\eps^2$. It follows by the Chernoff bound and union bound over all $j\in[z]$ that, for some constant $C>0$,
\begin{equation}
\Pr_h(\exists j\in [z],\ |H \cap h^{-1}(j)| > C \log (1/\delta)) < \delta/3 . \EquationName{few-heavy}
\end{equation}
We next invoke Bernstein's inequality, which says there exists constant $c>0$ such that for independent $X_1,\ldots,X_r$ each bounded by $K$ in magnitude with $\sigma^2 = \sum_i \E(X_i - \E X_i)^2$, 
$$
\forall\lambda>0,\ \Pr(|\sum_i X_i - \E\sum_i X_i| > \lambda) \lesssim e^{-c\lambda^2/\sigma^2} + e^{-c\lambda/K} .
$$
We will consider the collection of random variables $X_i$ indexed by $i\in[n]\backslash H$ defined as follows (so $r = n - |H|$). Fix $j\in[z]$. Define $X_i = \mathbf{1}_{h(i)=j}\cdot x_i^2$. It then follows that $K < \eps^2\cdot\|x_{\proj{1/\eps^2}}\|_2^2$ and
$$
\sigma^2 = \sum_{i\notin H} x_i^4 (1/z - 1/z^2) \le \frac{\eps^2 }z\cdot \|x_{\proj{1/\eps^2}}\|_2^2\cdot\sum_{i\notin H} x_i^2 = \frac{\eps^2 }z\cdot \|x_{\proj{1/\eps^2}}\|_2^4
$$
It then follows by Bernstein's inequality and a union bound over all $j\in[z]$ that
\begin{equation}
\Pr_h(\exists j\in [z],\ \sum_{\substack{i\notin H\\h(i) =j}} x_i^2 > C\eps^2\log (1/\delta)\|x_{\proj{1/\eps^2}}\|_2^2) < \delta/3 .\EquationName{light-weight}
\end{equation}

Now, the hash function $h$ divides the input stream into $z$ separate substreams, with $x(j)$ denoting the vector updated by the substream containing $i$ with $h(i) = j$ (i.e.\ $(x(j))_i = \mathbf{1}_{h(i) = j}\cdot x_i$). Conditioned on the events of \Eqsub{few-heavy}, \Eqsub{light-weight} occurring, we have that any $i\in H$ satisfies $x_i^2\ge \eps^2\|x_{\proj{1/\eps^2}}\|_2^2 \ge 1/(C\log (1/\delta))\cdot \|x(h(i))_{\proj{C\log (1/\delta)}}\|_2^2$. Thus if we also condition on every $P_j$ succeeding, which happens with probability at least $1-z\delta' = 1-\delta/3$ by a union bound, then our output $L$ is correct. Thus our overall failure probability is at most $\delta$.
\end{proof}

\section{Partition heavy hitters and the $b$-tree}\SectionName{partition}

We first define a generalization of the heavy hitters problem that we call {\em partition heavy hitters}. The standard heavy hitters problem is simply the special case when $\mathcal{P} = \{ \{1\}, \ldots, \{n\} \}$, and the oracle $\oracle$ is the identity map.

\begin{definition}
In the {\em $\ell_2$ partition heavy hitters problem} there is some error parameter $\eps\in(0,1/2)$. There is also some partition $\mathcal{P} = \{S_1,\ldots,S_N\}$ of $[n]$, and it is presented in the form of an oracle $\oracle:[n]\rightarrow[N]$ such that for any $i\in[n]$, $\oracle(i)$ gives the index $j\in[N]$ such that $i\in S_j$. Define a vector $y\in\R^N$ such that for each $j\in[N]$,
$$
y_j = \sqrt{\sum_{i\in S_j} x_i^2} .
$$
The goal then is to solve the $\ell_2$ $\eps$-heavy hitters problem on $y$ subject to streaming updates to $x$. That is, $x$ is being updated in a stream, and we should at the end output a list $L\subset[N]$, $|L| = O(1/\eps^2)$, such that $L$ contains all the $\eps$-heavy hitters of $y$.

A related problem is the {\em $\ell_2$ partition point query problem}, in which queries takes as input some $j\in[N]$ and must output a value $\tilde{y}_j$ such that $\alpha_1 y_j - \eps\|y_{\proj{1/\eps^2}}\|_2 \le \tilde{y}_j \le \alpha_2 y_j + \eps\|y_{\proj{1/\eps^2}}\|_2$ for some approximation parameters $\alpha_1, \alpha_2, \eps$. We call such a solution an $(\alpha_1, \alpha_2, \eps)$-partition point query algorithm.
\end{definition}

We now describe the $b$-tree, introduced in \cite{CormodeH08} (and also called the \hcs there and in other parts of the literature) and analyze its performance for solving the partition heavy hitters problem. The $b$-tree was only suggested in \cite{CormodeH08} for the strict turnstile model, and a detailed analysis was not given. Here we give a fully detailed analysis showing that, in fact, that structure gives good guarantees even in the general turnstile model, and even for the partition heavy hitters problem. Before delving into the $b$-tree, we will analyze the performance of the \pcs, a very slight modification of the \cs, for the partition point query problem.

\subsection{\pcs}

In the idealized \pcs one chooses independent hash functions $h_1,\ldots,h_L:[N]\rightarrow[k]$ and $\sigma_1,\ldots,\sigma_L:[N]\rightarrow\{-1,1\}$ from pairwise independent families, and $\beta_1,\ldots,\beta_L:[n]\rightarrow\R$ independently from a $C_\alpha$-wise independent family for some small constant $\alpha>0$. The marginal distribution of $\beta_t(i)$ for any $i\in[n]$ and $t\in[L]$ is the standard gaussian $\mathcal{N}(0,1)$, and $C_\alpha$ is chosen so that for any $z\in\R^n$, $\sum_i z_i \beta_t(i)$ has Kolmogorov distance at most $\alpha$ from $\mathcal{N}(0,1)$ (see \Theorem{knw}).
One also initializes counters $C_{a,b}$ to $0$ for all $(a,b) \in [L]\times [k]$, each consuming one machine word. The space is then clearly $O(kL)$ machine words.

The update algorithm for $x_i\leftarrow x_i + \Delta$ performs the change $C_{t,h_t(\oracle(i))}\leftarrow C_{t,h_t(\oracle(i))} + \Delta\cdot \sigma_t(\oracle(i))\cdot \beta_t(i)$ for each $t\in[L]$. To answer an $\ell_2$ point query to index $i\in[N]$, one outputs $\tilde{y}_i$ being the median value of $|C_{t,h_t(i)}|$ across $t$. In the actual, non-idealized version of \pcs, the marginal distribution of each $\beta_t(i)$ is a discretized gaussian with precision fitting in a single machine word; see \cite[Section A.6]{KaneNW10} for details. We henceforth just discuss the idealized version. In fact, for the types of instances of partition heavy hitters that our heavy hitters algorithm needs to solve, the $\beta_t$ can be replaced with pairwise independent hash functions mapping to $\{-1,1\}$ (see \Remark{no-gaussian}), which would make an implementation less complicated. We instead have chosen to analyze the version with gaussians since doing so then does not require us to limit the types of partition heavy hitter instances we can handle, thus providing a more general guarantee that may prove useful as a subroutine in future works.

Before analyzing \pcs, we state a useful theorem from \cite{KaneNW10}.

\begin{theorem}{{\cite{KaneNW10}}}\TheoremName{knw}
Let $z\in\R^n$ and $0<\alpha<1$ be arbitrary. There exists a constant $C_\alpha$ depending on $\alpha$ such that if $\beta_1,\ldots,\beta_n$ are $C_\alpha$-wise independent standard gaussians and $g$ is standard gaussian, then
$$
\forall A < B\in\R,\ |\Pr(\sum_i \beta_i z_i \in[A, B]) - \Pr(g \in [A, B])| < \alpha .
$$
\end{theorem}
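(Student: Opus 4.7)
The plan is to combine moment-matching with \emph{sandwich polynomials} under the gaussian measure. Normalizing so $\|z\|_2 = 1$ (rescaling $[A,B]$ if necessary), the sum $\sum_i g_i z_i$ is exactly standard gaussian when the $g_i$ are truly independent, so it suffices to compare $Y := \sum_i \beta_i z_i$ against $g$ under this normalization. Let $k := C_\alpha$. The algebraic core is that for any univariate polynomial $p$ of degree at most $k$, $p(Y)$ expands as a polynomial of total degree at most $k$ in $\beta_1,\dots,\beta_n$, so $\E[p(Y)]$ depends only on joint moments $\E[\beta_{i_1}\cdots\beta_{i_j}]$ with $j\le k$; by $C_\alpha$-wise independence these moments agree with those of truly independent standard gaussians (via Isserlis/Wick), and therefore $\E[p(Y)] = \E[p(g)]$.

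Suppose next that we have polynomials $p_-, p_+$ of degree at most $k$ with $p_-(x) \le \mathbf{1}_{[A,B]}(x) \le p_+(x)$ for every $x\in\R$ and $\E[p_+(g)] - \E[p_-(g)] < \alpha$. Chaining pointwise sandwiching with moment-matching gives
\begin{align*}
\Pr(Y\in[A,B]) &\le \E[p_+(Y)] = \E[p_+(g)] \le \Pr(g\in[A,B]) + \alpha,\\
\Pr(Y\in[A,B]) &\ge \E[p_-(Y)] = \E[p_-(g)] \ge \Pr(g\in[A,B]) - \alpha,
\end{align*}
which is exactly the stated bound.

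The main obstacle, and what pins down $C_\alpha$ to some $\poly(1/\alpha)$, is constructing $p_\pm$. A standard blueprint is (i) truncate attention to $[-R,R]$ with $R = \Theta(\sqrt{\log(1/\alpha)})$, since the gaussian tail outside has mass $o(\alpha)$; (ii) on $[-R,R]$ uniformly approximate a smoothed version of each step function $\mathbf{1}_{x\le A}$ and $\mathbf{1}_{x\le B}$ by a polynomial of degree $\poly(R/\alpha) = \poly(1/\alpha)$ via Jackson/Chebyshev-type theorems, and difference the two to obtain an approximant of $\mathbf{1}_{[A,B]}$; (iii) upgrade the uniform approximation on $[-R,R]$ to a pointwise sandwich on all of $\R$ by adding a small constant offset together with a degree-$k$ envelope that dominates the approximating polynomial's growth outside $[-R,R]$ while integrating to $o(\alpha)$ against $g$. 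The delicate point is (iii): polynomials blow up outside $[-R,R]$ but $g$ still places exponentially small mass there, so the envelope must simultaneously dominate the truncation error inside and polynomial growth outside without inflating the expected gap under $g$. An alternative avenue is to Hermite-expand a smooth approximant of $\mathbf{1}_{[A,B]}$ and truncate at degree $k$, using that Hermite coefficients of a smooth approximant decay polynomially in degree, then convert the $L^2$ truncation error into a one-sided pointwise sandwich via a constant shift plus a small bump polynomial.
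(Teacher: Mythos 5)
First, note the paper does not prove this theorem --- it cites \cite{KaneNW10} and remarks that $C_\alpha = O(1/\alpha^2)$ follows by combining the FT-mollification construction of \cite{DiakonikolasKN10} with an argument from \cite{KaneNW10}. That route is \emph{not} a sandwich-polynomial argument: one FT-mollifies $\mathbf{1}_{[A,B]}$ to a smooth $\tilde{I}$ whose derivatives of every order are uniformly bounded, compares $\E[\tilde{I}(Y)]$ and $\E[\tilde{I}(g)]$ via a degree-$k$ Taylor expansion whose remainder is controlled by the bounded $k$th derivative (this is where bounded independence enters, not through pointwise domination), and then transfers from $\tilde{I}$ back to $\mathbf{1}_{[A,B]}$ using anticoncentration of $Y$ and $g$ near the endpoints $A,B$. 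No polynomial is ever required to dominate the indicator pointwise on all of $\R$.

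Your top-level structure --- moment matching plus a pointwise sandwich --- is a legitimate alternative used elsewhere in the bounded-independence-fooling literature, and the two reductions you write are correct: the moment-matching step holds because a monomial of total degree $\le k$ involves $\le k$ distinct $\beta_i$, which are jointly independent by $k$-wise independence, so its expectation factors and equals the fully independent value (you do not need Isserlis/Wick for this); and the sandwiching chain is correct given $p_\pm$. The gap is in your step (iii), and concretely in the parameters. A degree-$k$ polynomial bounded by $O(1)$ on $[-R,R]$ can grow like $(2|x|/R)^k$ for $|x|>R$ (Chebyshev comparison), and $\E\bigl[(2|g|/R)^k\,\mathbf{1}_{|g|>R}\bigr]$ is $o(\alpha)$ only when $R\gtrsim\sqrt{k}$; when $R\ll\sqrt{k}$ the integrand is maximized near $|x|\sim\sqrt{k}$, giving a contribution of order $(ck/R^2)^{k/2}\gg 1$. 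With $R=\Theta(\sqrt{\log(1/\alpha)})$ and $k=\poly(1/\alpha)$ you are squarely in this bad regime, so the envelope's expectation against $g$ is not $o(\alpha)$ but astronomically large, and the sandwich never closes. To repair it you would need $R=\poly(1/\alpha)$ with $R\gtrsim\sqrt{k}$, and then recheck that an $\alpha$-accurate uniform approximant of the $\Theta(\alpha)$-mollified step on the larger interval $[-R,R]$ still has degree $\poly(1/\alpha)$; this can be made to close, but not with the numbers you wrote, and it is exactly this balancing that determines the final $C_\alpha$. The Hermite variant you suggest hits the same obstruction: truncating the Hermite series gives $L^2$ control against the gaussian, and upgrading that to a one-sided pointwise bound requires adding a correction whose expectation against $g$ you must control --- which is precisely the part left unaddressed.
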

In fact $C_\alpha$ can be taken as some value in $O(1/\alpha^2)$ by combining the FT-mollification construction in \cite{DiakonikolasKN10} with Lemma 2.2 and the argument in the proof of Theorem 2.1 in \cite{KaneNW10}. Using the FT-mollification construction in \cite{KaneNW10} would imply the weaker statement that $C_\alpha = \Omega(\log^6(1/\alpha)/\alpha^2)$ suffices. In our current setting either is acceptable since $\alpha$ is some fixed constant.

The following lemma shows the correctness of \pcs, following an argument similar to that of \cite[Lemma 3]{CharikarCF04}. 

\begin{lemma}\LemmaName{pcs}
For $L \gtrsim \log(1/\delta)$ and $k\gtrsim 1/\eps^2$, 
\begin{equation}
\forall j\in[N],\ \Pr(\tilde{y}_j \notin [(1/20) y_j - \eps\|y_\projeps\|_2, 3 y_j + \eps\|y_\projeps\|_2]) < \delta . \EquationName{pcs-condition}
\end{equation}
That is, the \pcs with these parameters is an $(\alpha_1, \alpha_2, \eps)$-partition point query structure with failure probability $\delta$, for $\alpha_1 = 1/20, \alpha_2 = 3$.
\end{lemma}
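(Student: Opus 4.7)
The plan is to show that a single row produces an estimate in the target interval with constant probability strictly above $1/2$, and then amplify to failure probability $\delta$ by taking the median across the $L = \Theta(\log(1/\delta))$ independent rows via a Chernoff bound.

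Fix $j \in [N]$ and a row $t \in [L]$. The starting point is the decomposition
\[
C_{t, h_t(j)} \;=\; \sum_{i \in T_t} \sigma_t(\oracle(i))\, \beta_t(i)\, x_i,
\qquad T_t = \{\,i : h_t(\oracle(i)) = h_t(j)\,\}.
\]
Conditional on $(h_t,\sigma_t)$, this is a $C_\alpha$-wise independent linear combination of standard Gaussians whose squared coefficient sum is $V_t := \sum_{j' : h_t(j')=h_t(j)} y_{j'}^2 = y_j^2 + \nu_t^2$, where $\nu_t^2 := \sum_{j'\neq j,\ h_t(j')=h_t(j)} y_{j'}^2$ is the collision-noise variance. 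By \Theorem{knw}, $C_{t,h_t(j)}$ is within Kolmogorov distance $\alpha$ of $\mathcal{N}(0,V_t)$; passing to $|C_{t,h_t(j)}|$ costs only a factor of $2$. I would fix $\alpha$ to be a small absolute constant (say $0.01$), which is allowed since $C_\alpha = O(1)$.

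Next I would bound $\nu_t$ in standard Count-Sketch style. Let $H$ denote the indices $j'\neq j$ attached to the $1/\eps^2$ largest $y_{j'}$. By pairwise independence of $h_t$, some element of $H$ collides with $j$ with probability at most $|H|/k \le 1/C_0$ provided $k \ge C_0/\eps^2$. Conditional on no heavy collision, $\E\nu_t^2 \le \|y_{\projeps}\|_2^2/k$, so Markov gives $\nu_t^2 \le 10\|y_{\projeps}\|_2^2/k \le (10/C_0)\eps^2\|y_{\projeps}\|_2^2$ with probability at least $9/10$. Choosing $C_0$ sufficiently large makes $\mathcal{G}_t := \{\nu_t \le (\eps/10)\|y_{\projeps}\|_2\}$ hold with probability at least $0.85$. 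Conditional on $\mathcal{G}_t$ the Gaussian approximation then yields both tails: for the upper end, $\sqrt{V_t}\le y_j+\nu_t$, so $|C_{t,h_t(j)}|\le 3\sqrt{V_t}\le 3y_j+\eps\|y_{\projeps}\|_2$ whenever $|\mathcal{N}(0,1)|\le 3$ (probability $\ge 0.997-2\alpha$); for the lower end, if $y_j \le 20\eps\|y_{\projeps}\|_2$ the target $(1/20)y_j-\eps\|y_{\projeps}\|_2$ is non-positive and holds trivially, and otherwise $\nu_t\le y_j/200$ gives $\sqrt{V_t}\le 1.01\,y_j$, so $|C_{t,h_t(j)}|\ge (1/20)y_j-\eps\|y_{\projeps}\|_2$ whenever $|\mathcal{N}(0,1)|\ge 1/20$ (probability $\ge 0.96-2\alpha$). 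Intersecting everything leaves a single-row good probability strictly above $2/3$.

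Since the $L$ rows are fully independent, the number of good rows dominates $\mathrm{Bin}(L,p)$ with $p>1/2$, so a standard Chernoff bound gives that the median $\tilde y_j$ lies in the claimed interval with probability at least $1-\delta$ once $L \gtrsim \log(1/\delta)$. The main delicate point, and essentially the only obstacle, is the joint choice of the constants ($C_0$ for the bucket count, $\alpha$ for the Gaussian approximation, and the Markov slack) so that the three sources of single-row failure---heavy collision on $H$, Markov excess on the light tail, and Gaussian anti-/concentration slack---simultaneously leave a per-row success probability strictly above $1/2$.
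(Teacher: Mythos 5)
Your proof is correct, and it takes a genuinely different route from the paper's. The paper decomposes the counter into a signal term $(w_t)_j=\sum_{i\in S_j}\beta_t(i)x_i$ plus noise from colliding partitions; it applies \Theorem{knw} \emph{only} to the signal term to get anticoncentration (\Eqsub{anticonc}), bounds $|(w_t)_j|$ from above via Chebyshev on $\E (w_t)_j^2$ (\Eqsub{small-w}), and controls the noise entirely by first-moment Markov on the pairwise-independent light contributions together with isolation from the heavy set (\Eqsub{small-noise}, \Eqsub{isolate-j}); the interval \Eqsub{good-counter} then follows by a triangle inequality. You instead fold the \emph{entire} counter $C_{t,h_t(j)}$ into a single approximately-Gaussian quantity: conditional on $(h_t,\sigma_t)$ the $\beta_t(i)$ remain $C_\alpha$-wise independent, so \Theorem{knw} applies once to give $C_{t,h_t(j)}\approx \mathcal{N}(0,V_t)$ with $V_t=y_j^2+\nu_t^2$, and the noise shows up only as the excess variance $\nu_t^2$. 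After bounding $\nu_t$ by the standard CountSketch collision argument, both the upper tail ($|g|\le 3$) and the anticoncentration ($|g|\ge 1/20$) drop out simultaneously from the scale $\sqrt{V_t}\in[y_j,\, y_j+\nu_t]$. This is arguably a cleaner unification---one Gaussian approximation and one anticoncentration/tail pair rather than four separate events---at the cost of a conditional application of \Theorem{knw} (which is fine since $\beta_t$ is independent of $(h_t,\sigma_t)$) and a factor of two in the Kolmogorov error from passing to $|C_{t,h_t(j)}|$. One small presentational wrinkle: the phrase ``conditional on no heavy collision, $\E\nu_t^2\le\|y_\projeps\|_2^2/k$'' conflates a conditional with an unconditional expectation; it is cleaner to write $\nu_t^2=\nu_{t,H}^2+\nu_{t,L}^2$, bound $\Pr(\nu_{t,H}^2>0)\le|H|/k$ by a union bound, bound $\Pr(\nu_{t,L}^2 > 10\|y_\projeps\|_2^2/k)\le 1/10$ unconditionally by Markov, and then union-bound; but this does not affect correctness. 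You also correctly observe that the lower-end bound does not actually need $\mathcal{G}_t$ (since $\sqrt{V_t}\ge y_j$ always, $|g|\ge 1/20$ alone gives $|C_{t,h_t(j)}|\ge y_j/20$); only the upper end needs the bound on $\nu_t$.
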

\begin{proof}
For $t\in[L]$, define $w_t\in\R^N$ with $(w_t)_j = \sum_{i\in S_j} \beta_t(i) x_i$. Then by \Theorem{knw}, $\Pr((w_t)_j \in [-\alpha_1,\alpha_2]) \le \Pr(y_jg\in [-\alpha_1,\alpha_1]) + \alpha_1$ for $g$ a standard gaussian. Since $\Pr(|g| \le \epsilon) \le \epsilon \sqrt{2/\pi}$ for any $\epsilon>0$, 
\begin{equation}
\Pr(|(w_t)_j| > \alpha_1 y_j) \ge 1 - \alpha_1(1 + \sqrt{2/\pi}) > 8/9 . \EquationName{anticonc}
\end{equation}
Also by pairwise independence of $\beta_t$, $\E (w_t)_j^2 = y_j^2$, so
\begin{equation}
\Pr(|(w_t)_j| > \alpha_2y_j) < 1/\alpha_2^2 = 1/9 . \EquationName{small-w}
\end{equation}
Next, let $H\subset[N]$ be the set of indices of top $1/(4\eps^2)$ entries of $y$ in magnitude. Note
$$
\E_{h_t,\beta_t} |\sum_{\substack{h_t(\oracle(i)) = h_t(j)\\\oracle(i)\notin H\cup\{j\}}} \beta_t(i) x_i| \le \frac 1{\sqrt{k}}\cdot \|y_\projeps\|_2
$$
by pairwise independence of $\beta_t, h_t$. Thus
\begin{equation}
\Pr_{h_t,\beta_t}(|\sum_{\substack{h_t(\oracle(i)) = h_t(j)\\\oracle(i)\notin H\cup\{j\}}} \beta_t(i) x_i| > \eps\|y_\projeps\|_2) < 1/9 \EquationName{small-noise}
\end{equation}
by Markov's inequality. Also $\E_{h_t} |h_t^{-1}(h_t(j))\cap (H\backslash \{j\})| \le |H|/k < 1/9$, so
\begin{equation}
\Pr(\exists j'\in (H\backslash \{j\}) : h_t(j') = h_t(j)) < 1/9\EquationName{isolate-j}
\end{equation}
by Markov's inequality. Thus by a union bound, the conditions of \Eqsub{anticonc}, \Eqsub{small-w}, \Eqsub{small-noise}, and \Eqsub{isolate-j} all happen simultaneously with probability at least $5/9$. Now note when all these events occur, 
\begin{equation}
|C_{t, h_t(j)}| \in [\alpha_1 y_j - \eps\|y_\projeps\|_2, \alpha_2 y_j + \eps\|y_\projeps\|_2] \EquationName{good-counter}
\end{equation}
Thus by a chernoff bound, the probability that the median over $t\in[L]$ of $|C_{t,h_t(j)}|$ does not satisfy \Eqsub{good-counter} is $\exp(-\Omega(l)) < \delta$.
\end{proof}

\begin{remark}\RemarkName{no-gaussian}
\textup{
It is possible to obtain the main result of this paper, i.e.\ our bounds for (non-partition) heavy hitters, without ever making use of \Theorem{knw}. This might be useful for pedagogical reasons, since the proof of \Theorem{knw} is somewhat complicated. We sketch how to do so here. We chose to present the approach above since it divides our final algorithm into a more abstracted set of subproblems, which we felt made it easier to describe. Furthermore, the ``partition heavy hitters'' problem as formulated above seems quite clean and may find uses as a subroutine elsewhere.
}

\textup{
The main idea to avoid \Theorem{knw} is as follows. First, in our application of partition heavy hitters to solve the vanilla heavy hitters problem (see the proof of \Theorem{clustering-reduction}), we are only interested in the case where each partition $S_j$ in $\mathcal{P}$ contains at most one $\eps$-heavy hitter $i\in[n]$ with respect to $x$. Furthermore, the other items in the same partition will, with good probability over the choice of the hash function $h_j$ used in the \es, have much smaller total $\ell_2$ mass than $|x_i|$. These observations can be used to show that, for applying partition heavy hitters to heavy hitters as in \Theorem{clustering-reduction}, the $\beta_t$ above can be replaced with pairwise independent hash functions mapping to $\{-1,1\}$. This ensures that the ``noise'', i.e.\ the non-heavy indices in the same partition as $i$, with say $8/9$ probability do not subtract much away from $|x_i|$ so that $|(w_t)_j|$ will be $\Omega(|x_i|)$. Note that in the general partition heavy hitters problem the range of the $\beta_t$ cannot just be $\{-1,1\}$. The reason is that we want \Eqsub{anticonc}, \Eqsub{small-w}, \Eqsub{small-noise}, and \Eqsub{isolate-j} to hold simultaneously with some probability strictly larger than $1/2$, to allow invoking Chernoff. However, \Eqsub{anticonc} alone can fail with probability at least $1/2$ even when using fully independent $\beta_t:[n]\rightarrow\{-1,1\}$. For example, consider when a partition $j\in[N]$ contains exactly two elements with equal weight. Then with probability $1/2$ their signs are opposite, in which case $(w_t)_j$ equals zero. Note this is not an issue when $S_j$ contains a single heavy hitter that is much heavier than the total $\ell_2$ mass in the rest of the partition.
}

\textup{
The second issue is that \pcs is again used in the $b$-tree (see \Section{btree} below), which we use in our final heavy hitters algorithm. Unfortunately there, we do {\em not} have the guarantee that each partition contains one item much heavier than the combined mass of the rest. One way around this is to simply not use the $b$-tree at all, but to just use the \pcs itself. Doing so, though, would make our final query time blow up to $O(\eps^{-2}\log^C n)$ for a large constant $C$. This can be fixed, and one can achieve our same current query time, by instead implementing the data structures $P_j$ in \Section{turnstile} as themselves recursive instantiations of the \es data structure! Then after a constant number of levels of recursion (even one level down), one then implements the $P_j$ using the \pcs (see \Remark{recurse}). All in all, \cite{KaneNW10} is thus avoided. We do not delve further into the details since the bounds do not improve, and we feel that a fully detailed exposition is not worth the effort.
}
\end{remark}

We also record a lemma here which will be useful later.

\begin{lemma}\LemmaName{take-top}
Consider an instance of partition heavy hitters with partition $\mathcal{P} = \{S_1,\ldots,S_N\}$ and oracle $\oracle$. Suppose $A\subset[N]$, and $D$ is an $(\alpha_1, \alpha_2, \eps')$-partition point query structure for some $0 < \alpha_1 \le 1 \le \alpha_2$ which succeeds on point querying every $j\in A$, where $\eps' = (\alpha_1/3)\eps$. Then if $L\subseteq A$ is defined to be the $2/\eps''^2$ indices of $A$ with the largest point query results from $D$ for $\eps'' = \eps'/\alpha_2$, then $L$ contains all partition $\eps$-heavy hitters contained in $A$.
\end{lemma}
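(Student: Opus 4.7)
The plan is to show (i) that every $\eps$-heavy hitter in $A$ has a point-query estimate above a certain threshold $\tau$, and (ii) that the number of indices in $[N]$ whose point-query estimate exceeds $\tau$ is at most $|L|=2/\eps''^2$, so that all heavy hitters must end up in $L$.

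For step (i), fix an $\eps$-heavy hitter $j\in A$, so $y_j \geq \eps\|y_{\proj{1/\eps^2}}\|_2$. Since $\eps' = \alpha_1\eps/3 \leq \eps$, we have $\|y_{\proj{1/\eps'^2}}\|_2 \leq \|y_{\proj{1/\eps^2}}\|_2$, and the partition point query guarantee gives
\[
\tilde y_j \;\geq\; \alpha_1 y_j - \eps'\|y_{\proj{1/\eps'^2}}\|_2 \;\geq\; (\alpha_1\eps - \eps')\|y_{\proj{1/\eps^2}}\|_2 \;=\; \tfrac{2\alpha_1\eps}{3}\|y_{\proj{1/\eps^2}}\|_2 \;=:\;\tau.
\]

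For step (ii), let $T = \{j'\in[N] : \tilde y_{j'}\geq \tau\}$ (using the point-query estimates returned for whichever elements $D$ is consulted on, with the obvious extension that for indices outside $A$ we may compute a hypothetical estimate; all we actually need is a bound on $|T\cap A|$). If $\tilde y_{j'}\geq \tau$ and the point query for $j'$ succeeds, then by the upper direction of the point-query guarantee,
\[
\alpha_2 y_{j'} + \eps'\|y_{\proj{1/\eps^2}}\|_2 \;\geq\; \tilde y_{j'} \;\geq\; \tfrac{2\alpha_1\eps}{3}\|y_{\proj{1/\eps^2}}\|_2,
\]
so $y_{j'} \geq (\tfrac{2\alpha_1\eps}{3} - \eps')/\alpha_2\cdot\|y_{\proj{1/\eps^2}}\|_2 = \eps''\|y_{\proj{1/\eps^2}}\|_2$. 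Since $\eps''\leq \eps$ gives $\|y_{\proj{1/\eps^2}}\|_2 \geq \|y_{\proj{1/\eps''^2}}\|_2$, each such $j'$ satisfies $y_{j'} \geq \eps''\|y_{\proj{1/\eps''^2}}\|_2$, i.e.\ is an $\eps''$-heavy hitter of $y$. The standard counting argument (at most $1/\eps''^2$ indices can lie among the top $1/\eps''^2$, and among the remaining indices, at most another $1/\eps''^2$ can have squared value $\geq \eps''^2\|y_{\proj{1/\eps''^2}}\|_2^2$) bounds the number of $\eps''$-heavy hitters by $2/\eps''^2$. Hence $|T\cap A|\leq 2/\eps''^2$.

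Finally, suppose for contradiction some $\eps$-heavy hitter $j\in A$ is not in $L$. Then by definition of $L$ there exist $2/\eps''^2$ other indices $j'\in A$ with $\tilde y_{j'}\geq \tilde y_j\geq \tau$; together with $j$ itself these give $|T\cap A|\geq 2/\eps''^2+1$, contradicting the bound above. The main (minor) obstacle is just threading the constants so that the ``slack'' $\eps'$ in the lower estimate of $\tilde y_j$ and the division by $\alpha_2$ in the upper estimate of $\tilde y_{j'}$ conspire to reduce the competitor threshold exactly to $\eps''\|y_{\proj{1/\eps^2}}\|_2$; the choice $\eps' = (\alpha_1/3)\eps$ and $\eps''=\eps'/\alpha_2$ is exactly what makes this work.
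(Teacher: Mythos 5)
Your proof is correct and follows essentially the same approach as the paper's: lower-bound the estimate of every $\eps$-heavy hitter by the threshold $\tau=\tfrac{2}{3}\alpha_1\eps\|y_{\proj{1/\eps^2}}\|_2$, observe that any index whose estimate reaches $\tau$ must already be an $\eps''$-heavy hitter (you state this as the contrapositive of the paper's ``non-$\eps''$-heavy $\Rightarrow$ estimate $<\tau$''), and then use the fact that there are at most $2/\eps''^2$ such $\eps''$-heavy hitters. Your contradiction-based closing step and your explicit restriction to $T\cap A$ (so that only successfully-queried indices are counted) are presentational variants of the paper's final sentence; the key inequalities and constants are identical.
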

\begin{proof}
Let $\tilde{y}_j$ be the result of a point query of $j\in A$ using $D$. If $j\in A$ is a partition $\eps$-heavy hitter then
$$
\tilde{y}_j \ge \alpha_1\eps \|y_\projeps\|_2 - (\alpha_1/3)\eps\|y_{\projepstwo{\eps'}}\|_2 \ge (2/3)\alpha_1\eps\|y_{\projepstwo{\eps}}\|_2 .
$$
by \Equation{pcs-condition}. Meanwhile, if some $j'$ is not even an $\eps''$-partition heavy hitter, then
$$
\tilde{y}_{j'} < \alpha_2\eps''\|y_{\projepstwo{\eps'}}\|_2 + \eps'\|y_{\projepstwo{\eps'}}\|_2\le (2/3)\alpha_1\eps\|y_{\projepstwo{\eps'}}\|_2 .
$$
Thus every set which is not even a partition $\eps''$-heavy hitter has a point query value strictly less than every partition $\eps$-heavy hitter in $A$. Since there are at most $2/\eps''^2$ sets which are partition $\eps''$-heavy hitters, correctness follows, since $L$ by definition contains the $2/\eps''^2$ sets with largest point query values.
\end{proof}

\subsection{$b$-tree}\SectionName{btree}

Here we describe the $b$-tree for partition heavy hitters, which is a slight modification of the \hcs from \cite{CormodeH08} for non-partition strict turnstile heavy hitters. We assume $2\le b\le N$ is an integer power of $2$.

For what follows, we assume vector entries in both $x\in\R^n$ and $y\in\R^N$ are $0$-based indexed, e.g.\ $x = (x_0,\ldots,x_{n-1})^T$ and $y = (y_0,\ldots,y_{N-1})^T$. Conceptually we create virtual streams $J_0,\ldots,J_{{\log_b N}}$. For the stream $J_r$ we solve an instance of partition point query with $b^r$ sets in the partition defined as follows. At the level $r = \log_b N$, the partition is $\mathcal{P}_{\log_b N} = \mathcal{P} = \{S_0,\ldots,S_{N-1}\}$, with oracle $\oracle_{\log_b N} = \oracle$. For $0 \le r < \log_b N$, the partitions are defined inductively: we imagine a complete $b$-ary tree on $\{0,\ldots,N-1\}$ where the $j$th leaf node denotes $S_j$, and for $r < \log_b N$ each internal node represents a set equal to the union of the sets at its children. Said succinctly, for $r<\log_b N$, the oracle evaluation $\oracle_r(i)$ equals $\oracle(i) / b^{\log_b N - r}$. Note that since $b = 2^\ell$ is a power of $2$, $\oracle_r$ can be implemented in constant time by bitshifting right $\ell(\log_b N - r)$ positions ($\log_b N$ is a fixed value that can be computed once and stored during initialization).

When an update $(i,v)$ is seen in the actual stream, it is fed into a \pcs $P_r$ at level $r$ for each $0\le r\le \log_b N$, with partition $\mathcal{P}_r$ and oracle $\oracle_r$. Each $P_r$ is chosen to be a $(1/20, 3, \eps/60)$-partition point query structure, i.e.\ the error parameter is $\eps' = \eps/60$. Set $Q = 2b(\log_b N)/\eps''^2$, which is an upper bound on the total number of point queries we make to all $P_r$ combined. We set the failure probability for each $P_r$ to be $\eta = \delta/Q = \Theta(\eps^2\delta / (b\log_b N))$. 

To answer a heavy hitters query, we walk down the tree as follows (our goal is to have, at all times $0\le r\le \log_b N$, a list $L_r$ of size $O(1/\eps^2)$ which contains all the $\eps$-heavy hitter partitions at level $r$). First we set $L_0 = \{0\}$. Then for $r=1$ to $\log_b N$, we point query every child of an element of $L_{r-1}$. We then set $L_r$ to be the nodes at level $r$ whose point query result were amongst the $2/\eps''^2$ highest for $\eps'' = \eps'/3 = \eps/180$. We finally return $L = L_{\log_b N}$ as our set of partition heavy hitters.

\begin{theorem}\TheoremName{btree}
Suppose $1/N^c \le \eps < 1/2$ and $\delta\in(0,1/2)$. Then the $b$-tree as described above produces a correct output $L$ for partition $\eps$-heavy hitters with probability at least $1-\delta$ for any $2\le b\le N$. Furthermore, for any constant $c'\ge 1$ and any constant $0 < \gamma < \gamma_0$ where $\gamma_0$ depends on $c,c'$, for any failure probability $1/N^{c'} \le \delta < 1/2$, there is a choice of $b$ so that the $b$-tree uses space $O(\eps^{-2} \log N)$, has update time $O(\log N)$, and query time $O( (\eps^{-2}\log N)((\log N)/(\eps\delta))^{\gamma})$. The constants in the big-Oh depend on $1/\gamma$.
\end{theorem}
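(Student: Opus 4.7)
The plan has two parts: correctness for any $2 \le b \le N$, followed by tuning $b$ to achieve the quantitative bounds.

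For correctness, I would first union-bound over all point queries that the query procedure ever issues. At level $r$ only the (at most) $b|L_{r-1}| = O(b/\eps^2)$ children of the surviving candidates are examined, so the total number of point queries across all $\log_b N + 1$ levels is at most $Q = 2b(\log_b N)/\eps''^2$. Since each $P_r$ is tuned so that an individual point query fails with probability at most $\eta = \delta/Q$, a union bound gives success of every query with probability at least $1-\delta$; I condition on this event. I would then prove by induction on $r$ that $L_r$ contains the level-$r$ ancestor of every $\eps$-heavy hitter of $y = y^{(\log_b N)}$. The base case $r=0$ is trivial. For the inductive step I need two ingredients. First, the ancestor $a$ of a leaf-level $\eps$-heavy hitter $j$ is itself an $\eps$-heavy hitter at its level: $(y^{(r)}_a)^2 \ge y_j^2$ since $(y^{(r)}_a)^2$ sums $x_i^2$ over a superset of $\{i : \oracle(i) = j\}$, and $\|y^{(r)}_{\proj{1/\eps^2}}\|_2 \le \|y_{\proj{1/\eps^2}}\|_2$ because the top $1/\eps^2$ level-$r$ nodes collectively absorb at least the mass of the top $1/\eps^2$ leaves (obtained by comparing against the set of level-$r$ ancestors of the top leaves, which has size $\le 1/\eps^2$). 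Second, by induction every such ancestor $a$ is a child of some element of $L_{r-1}$, hence lies in the candidate set $A$ of size $O(b/\eps^2)$ that is point-queried at level $r$. Applying \Lemma{take-top} with $\alpha_1 = 1/20$, $\alpha_2 = 3$, $\eps' = \eps/60$, and thus $\eps'' = \eps/180$, certifies that the top $2/\eps''^2$ candidates by point-query value include every partition $\eps$-heavy hitter in $A$, in particular every ancestor of interest. Taking $r = \log_b N$ yields the promised $L$; the bound $|L| = O(1/\eps^2)$ is automatic from the truncation rule.

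For the resource bounds, each $P_r$ is a \pcs with $k = \Theta(1/\eps^2)$ rows and height $L = \Theta(\log(1/\eta)) = \Theta(\log b + \log(1/(\eps\delta)))$, consuming $O(\eps^{-2}L)$ space, $O(L)$ update time, and $O(L)$ time per point query. Summing over $\log_b N + 1$ levels gives space $O(\eps^{-2} L \log_b N)$, update time $O(L \log_b N)$, and query time $O(b\eps^{-2} L \log_b N)$. I would then set $b = \lceil ((\log N)/(\eps\delta))^\gamma \rceil$, so $\log b = \Theta(\gamma \log((\log N)/(\eps\delta)))$. Since $\gamma < 1$ and $\log(1/(\eps\delta)) \le \log((\log N)/(\eps\delta))$, the dominant contribution to $L$ is $\log(1/(\eps\delta))$, giving $L = O(\log((\log N)/(\eps\delta)))$ and hence $L\log_b N = O((\log N)\log((\log N)/(\eps\delta))/\log b) = O(\log N / \gamma)$. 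This immediately yields the claimed $O(\eps^{-2}\log N)$ space, $O(\log N)$ update time, and $O(\eps^{-2}\log N \cdot ((\log N)/(\eps\delta))^\gamma)$ query time, with hidden constants depending on $1/\gamma$. The constraint $b \le N$ reduces to $\gamma(\log\log N + \log(1/(\eps\delta))) \le \log N$; for $\eps \ge 1/N^c$ and $\delta \ge 1/N^{c'}$, this holds for sufficiently large $N$ whenever $\gamma < \gamma_0 := 1/(c+c'+1)$, which defines the permissible range of $\gamma$.

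The conceptual heart is the induction together with the ancestor-is-heavy fact, which requires the (easy but not entirely obvious) monotonicity $\|y^{(r)}_{\proj{1/\eps^2}}\|_2 \le \|y_{\proj{1/\eps^2}}\|_2$; the rest is bookkeeping. The main technical obstacle is choosing a single $b$ that works uniformly across the whole range of $\eps, \delta$: when $\eps\delta$ is $\Omega(1)$, $b$ must be a (small) power of $\log N$ so that the extra levels do not inflate the space, while when $\eps\delta$ is as small as $1/\poly(N)$, $b$ must be a power of $N$ so that only $O(1)$ levels are used; the parameterization $b = ((\log N)/(\eps\delta))^\gamma$ interpolates between these extremes, and it is precisely this interpolation that forces $\gamma_0$ to depend on $c$ and $c'$.
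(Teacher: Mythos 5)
Your proof is correct and follows essentially the same route as the paper's: union bound over the $Q = 2b(\log_b N)/\eps''^2$ point queries, induction down the tree using \Lemma{take-top}, and then the same choice $b = \lceil((\log N)/(\eps\delta))^\gamma\rceil$ with the same bookkeeping for the space, update, and query bounds. The one place you go further than the paper is in actually proving the monotonicity $\|y^{(r)}_{\proj{1/\eps^2}}\|_2 \le \|y_{\proj{1/\eps^2}}\|_2$ (by comparing the top $1/\eps^2$ level-$r$ nodes against the set of level-$r$ ancestors of the top $1/\eps^2$ leaves), which the paper invokes implicitly via the one-line assertion ``any ancestor of a partition $\eps$-heavy hitter is itself an $\eps$-partition heavy hitter'' without spelling out the argument; that is a worthwhile detail to make explicit.
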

\begin{proof}
We have $|L| \le 2/\eps''^2 = O(1/\eps^2)$ by definition of the algorithm. Note that we make at most $Q$ point queries since each $L_r$ has size at most $2/\eps''^2$, and each element of $L_r$ has at most $b$ children. Also, there are $\log_b N$ values of $r$ for which we query children (since the leaves have no children). Thus the total number of queries is at most $2b(\log_b N)/\eps''^2 = Q$. Now we show that, conditioned on the event that all $Q$ point queries succeed, correctness holds by induction on $r$. This would complete the correctness analysis, since the probability that any point query fails at all is at most $Q\eta = \delta$, by a union bound. For the induction, our inductive hypothesis is that $L_r$ contains every $\eps$-partition heavy hitter at level $r$. This is true for $r = 0$ since $L_0 =\{0\}$, and the root of the tree contains only a single partition. Now, assume $L_{r-1}$ satisfies the inductive hypothesis. Since any ancestor of a partition $\eps$-heavy hitter is itself an $\eps$-partition heavy hitter, it follows that every partition heavy hitter at level $r$ is the child of some set in $L_{r-1}$. Thus if we let $A$ be the collection of children of sets in $L_{r-1}$, \Lemma{take-top} implies that $L_r$ also contains all the partition $\eps$-heavy hitters

As for space and time bounds, we choose $b$ to be $((\log N)/(\eps\delta))^\gamma$, rounded up to the nearest integer power of $2$. This is at most $N$ for $\gamma$ sufficiently small since $\eps,\delta > 1/\poly(N)$, and at least $2$ for $N$ larger than some constant depending on $\gamma$ (which we can assume without loss of generality, by padding the partition with empty sets). By \Lemma{pcs} and choice of $b$, the space is asymptotically
$$
\frac 1{\eps^2}\cdot \frac{\log N}{\log b}\log(1/\eta) \simeq \frac 1{\eps^2}\cdot \frac{\log N}{\log b} \cdot (\log(1/\delta) + \log b + \log\log N + \log(1/\eps)) \simeq \eps^{-2}\log N .
$$
The update time is asymptotically
$$
\frac{\log N}{\log b}\log(1/\eta) \simeq \frac{\log N}{\log b} \cdot (\log(1/\delta) + \log b + \log\log N + \log(1/\eps)) \simeq \log N .
$$
The query time is asymptotically
$$
\frac b{\eps^2}\cdot \frac{\log N}{\log b}\log(1/\eta) \simeq \frac b{\eps^2}\cdot \frac{\log N}{\log b} \cdot (\log(1/\delta) + \log b + \log\log N + \log(1/\eps)) \simeq \eps^{-2}\log N \cdot (\frac{\log N}{\eps\delta})^\gamma.
$$
\end{proof}

\begin{corollary}\CorollaryName{btree}
For any constant $0<\gamma<1/2$, (non-partition) heavy hitters with error parameter $\eps\in(0,1/2)$ and failure probability $1/\poly(n) < \delta < 1/2$ can be solved with space $O(\eps^{-2}\log n)$, update time $O(\log n)$, query time $O(\eps^{-2}\log n ((\log n)/(\eps\delta))^{\gamma})$, and failure probability $\delta$.
\end{corollary}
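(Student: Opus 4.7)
The plan is to derive the corollary directly from Theorem~\ref{thm:btree} by recognizing ordinary heavy hitters as the degenerate case of partition heavy hitters. Specifically, I would instantiate the partition heavy hitters machinery with $N = n$, partition $\mathcal{P} = \{\{1\},\{2\},\ldots,\{n\}\}$, and oracle $\oracle:[n]\to[n]$ given by the identity $\oracle(i)=i$. Under this choice, the vector $y\in\R^N$ from the partition heavy hitters definition satisfies $y_j = \sqrt{x_j^2} = |x_j|$, so $\|y_{\proj{1/\eps^2}}\|_2 = \|x_{\proj{1/\eps^2}}\|_2$ and the set of $\eps$-partition heavy hitters of $y$ coincides exactly with the set of $\eps$-heavy hitters of $x$. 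The oracle can be evaluated in $O(1)$ time and stored in $O(1)$ space, so it contributes no overhead.

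Next I would verify the hypothesis $1/N^c \le \eps$ needed by Theorem~\ref{thm:btree}: since we are assuming $\delta > 1/\poly(n)$ and we want a solution for ordinary heavy hitters, we may assume without loss of generality that $\eps > 1/\poly(n)$ as well (otherwise the trivial algorithm storing the whole stream explicitly fits the stated bounds after adjusting constants), so the hypothesis is satisfied with $N=n$. Likewise $\delta > 1/N^{c'}$ for some constant $c'$, as required. Given a target $\gamma \in (0,1/2)$, I would then choose $\gamma' < \gamma_0$ sufficiently small (where $\gamma_0$ is the constant produced by Theorem~\ref{thm:btree} for the given $c,c'$) with $\gamma' \le \gamma$, and invoke Theorem~\ref{thm:btree} with that $\gamma'$.

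Plugging $N=n$ into Theorem~\ref{thm:btree}'s bounds gives space $O(\eps^{-2}\log n)$, update time $O(\log n)$, and query time $O\bigl(\eps^{-2}\log n \cdot ((\log n)/(\eps\delta))^{\gamma'}\bigr) \le O\bigl(\eps^{-2}\log n \cdot ((\log n)/(\eps\delta))^{\gamma}\bigr)$ (the inequality uses $(\log n)/(\eps\delta) \ge 1$ and $\gamma' \le \gamma$), with failure probability at most $\delta$, which is exactly the claimed bound.

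There is no substantive obstacle here; the entire statement is a syntactic specialization of Theorem~\ref{thm:btree}. The only mildly subtle point to be careful about is matching the constant $\gamma$ in the corollary's statement with the range of admissible $\gamma'$ in Theorem~\ref{thm:btree} (whose upper bound $\gamma_0$ depends on the polynomial bounds on $\eps$ and $\delta$), which is handled by simply choosing a smaller $\gamma'$ if necessary and observing that the query-time bound only gets better when $\gamma'$ decreases.
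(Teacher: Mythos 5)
Your argument is correct and is essentially identical to the paper's proof: specialize \Theorem{btree} to the singleton partition $\mathcal{P}=\{\{1\},\ldots,\{n\}\}$ with the identity oracle, and handle the small-$\eps$ regime by the trivial explicit-storage algorithm so that the hypothesis $\eps \ge 1/N^c$ of \Theorem{btree} is satisfied. The paper's writeup is terser (it just says "assume $\eps > 1/\sqrt{n}$ WLOG") but the substance, including the handling of $\gamma$, is the same.
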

\begin{proof}
We can assume $\eps > 1/\sqrt{n}$ without loss of generality, since otherwise the trivial solution of keeping $x$ in memory explicitly suffices. Otherwise, the claim follows from \Theorem{btree} by considering the partition $\mathcal{P} = \{\{1\},\ldots,\{n\}\}$ with $N=n$ and $\oracle(i) = i$.
\end{proof}

\section{Connection between heavy hitters and list-recoverable codes}\SectionName{list-recovery}

Our approach is somewhat related to list-recoverable codes, which were first used in group testing in \cite{Cheraghchi09,IndykNR10} and in compressed sensing in \cite{NgoPR12} (and also in subsequent compressing works, e.g. \cite{GilbertNPRS13,GilbertLPS14}). We say a code $\mathcal{C}\subset [q]^m$ is {\em $(\epsilon, \ell, L)$-list recoverable} if for any sequence of lists $L_1,\ldots,L_m \subset[q]$ with $|L_j| \le \ell$ for all $j\in[m]$, there are at most $L$ codewords in $\mathcal{C}$ whose $j$th symbol appears in at least a $(1-\epsilon)$-fraction of the lists $L_j$. A decoding algorithm is then given these $m$ lists and must find these at most $L$ codewords. To see the connection to our current heavy hitters problem, it is known (via the probabilistic method) that such codes exist with $|\mathcal{C}| \ge n$, $q = \poly(\log n)$, $m = O(\log n/\log\log n)$, and $\ell, L = O(\log n)$, where $\epsilon$ can be made an arbitrarily small constant (see for example the first row of \cite[Figure 1]{HemenwayW15}). Suppose we had such a code $\mathcal{C}$ with encoding function $\enc:[n]\rightarrow \mathcal{C}$. Then our heavy hitters algorithm could follow the scheme of \Figure{basic}. That is, we would instantiate $m$ $b$-trees $P_1,\ldots,P_m$ for partition heavy hitters with $\oracle_j(i) = \enc(i)_j$. By picking constants appropriately in the parameter settings, one can ensure that whp at most an $\epsilon$-fraction of the $P_j$ fail. Thus, whp every heavy hitter appears in a $(1-\epsilon)$-fraction of the lists $L_j$, and we would then perform list-recovery decoding to find all the heavy hitters. The trouble with this approach is that there currently are no {\em explicit} codes achieving these parameter settings, let alone with linear time encoding and fast decoding. Indeed, this is the source of slight suboptimality in the $\ell_1/\ell_1$ ``for all'' compressed sensing scheme of \cite{GilbertLPS14}.

The key to our progress is to sidestep the lack of explicit optimal list-recoverable codes with linear-time encoding and fast decoding by realizing that list-recoverability is stronger than what we actually need. First of all, list-recovery says that {\em for all} choices of lists $L_1,\ldots,L_m$, some condition holds. In our case the lists are random (symbols in it contain concatenations with random hash evaluations $h_j$), since our heavy hitters algorithm is allowed to be a randomized algorithm. Secondly, our decoding algorithm does not need to handle arbitrary lists, but rather lists in which each symbol has a distinct name (recall from \Section{overview} that the ``name'' of $z(i)_j$ is $h_j(i)$). This is because, in our problem, with good probability any list $L_j$ has the property that for each heavy hitter $i$ with $z(i)_j \in L_j$, $z(i)_j$ will be reported by a partition point query data structure to have much heavier weight than any other $z\in L_j$ with the same name. Thus our post-processing step using $Q_j$ allows us to only have to perform decoding from lists with a certain structure which does not exist in the general list-recovery problem, namely that no two symbols in $L_j$ agree on the first few (namely $O(\log\log n)$) bits.

\section{A simpler query algorithm for strict turnstile $\ell_1$ heavy hitters}\SectionName{strict}
In this section, we describe our simpler algorithm for strict turnstile $\ell_1$ heavy hitters. The high-level idea is the following: Recall the general turnstile algorithm from \Section{turnstile}. There we end up having to find $\eps$-spectral clusters. This subproblem arose because several data structures $P_{j}^k$ may err and thereby insert spurious edges into the graph $G$. Now in the strict turnstile case, we can roughly ensure that we can recognize when a $P_j^k$ errs. By simply deleting all edges returned by such a $P_j^k$, the remaining graph $G$ has each heavy hitter corresponding to a number of connected components that are completely disjoint from the subgraphs corresponding to other heavy hitters and noise from light elements. Furthermore, for each heavy hitter, one of these connected components has at least 90\% of the corresponding codeword $\enc(i)$. We can thus replace the cluster finding algorithm by a simple connected components algorithm and thereby obtain a much simpler algorithm. This section gives all the details of the above.

Our solution needs the more standard definition of heavy hitters: On a query, we must return a set $L$ containing all indices $i$ where $x_i \geq \eps\|x\|_1$ and no indices $j$ such that $x_j < (\eps/2)\|x\|_1$. 

The first thing our solution in the strict turnstile case needs, is the fact that one can maintain $\|x\|_1$ exactly in $O(1)$ space and with $O(1)$ update time. This follows trivially by maintaining a counter $C$ that is initialized as $0$, and upon every update $(i, \Delta)$, we update $C \gets C+\Delta$. Since we are in the strict turnstile model, we will always have $C = \|x\|_1$. 

The second thing is a variant of partition point queries that have been adapted to the $\ell_1$ strict turnstile case.

\begin{definition}
In the $\ell_1$ \emph{strict turnstile partition point query problem} there is some error parameter $\eps \in (0,1/2)$. There is also some partition $\mathcal{P} = \{S_1,\ldots,S_N\}$ of $[n]$, and it is presented in the form of an oracle $\oracle:[n]\rightarrow[N]$ such that for any $i\in[n]$, $\oracle(i)$ gives the index $j\in[N]$ such that $i\in S_j$. Define a vector $y\in\R^N$ such that for each $j\in[N]$,
$$
y_j = \sum_{i\in S_j} x_i.
$$
On a query for index $j\in[N]$, we must output a value $\tilde{y}_j$ such that $y_j \le \tilde{y}_j \le y_j + \eps\|y_{\proj{1/\eps}}\|_1$.
\end{definition}

\begin{lemma}
\LemmaName{l1partquery}
For any failure probability $1/\poly(n) < \delta < 1/2$, there is a solution to the strict turnstile partition point query problem with space $O(\eps^{-1} \log(1/\delta))$ words, update time $O(\log (1/\delta))$ and query time $O(\log(1/\delta))$. Furthermore, the solution guarantees that even when it errs, the returned estimate $\tilde{y}_j$ is at least $y_j$. 
\end{lemma}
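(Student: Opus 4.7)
The plan is to adapt the standard Count--Min sketch to partitions. Instantiate $L = \Theta(\log(1/\delta))$ independent hash functions $h_1,\ldots,h_L : [N] \to [k]$ from a pairwise independent family with $k = C/\eps$ for a suitable constant $C$, and maintain counters $\{C_{t,a}\}_{(t,a)\in[L]\times[k]}$, all initialized to $0$. On update $(i,\Delta)$, we set $C_{t,h_t(\oracle(i))} \gets C_{t,h_t(\oracle(i))} + \Delta$ for each $t\in[L]$. On query $j\in[N]$, return $\tilde y_j = \min_{t\in[L]} C_{t,h_t(j)}$. Space is clearly $O(kL) = O(\eps^{-1}\log(1/\delta))$ words; update and query both take $O(L) = O(\log(1/\delta))$ time.

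The deterministic ``always at least $y_j$'' guarantee is immediate from strict turnstile: since $x_i \ge 0$ at all times, each bucket $C_{t,h_t(j)}$ equals $y_j$ plus a nonnegative sum over other partitions hashed there. Hence $C_{t,h_t(j)} \ge y_j$ for every $t$ deterministically, and therefore $\tilde y_j \ge y_j$ always, even under hash collisions and even conditioned on failure events.

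For the upper bound, let $H\subseteq[N]\setminus\{j\}$ be the indices of the top $1/\eps$ entries of $y$. Fix $t$. By pairwise independence, $\Pr[\exists j'\in H : h_t(j')=h_t(j)] \le |H|/k \le 1/C$. Also, letting $R = [N]\setminus(H\cup\{j\})$, pairwise independence gives
\[
\E \sum_{\substack{i\in R\\ h_t(i)=h_t(j)}} y_i \;=\; \frac{\|y_R\|_1}{k} \;\le\; \frac{\|y_{\proj{1/\eps}}\|_1}{k},
\]
so by Markov, with probability at least $3/4$ this sum is bounded by $4\|y_{\proj{1/\eps}}\|_1/k$. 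Choosing $C$ large enough that $1/C \le 1/4$ and $4/C \le 1$, a union bound yields that with probability at least $1/2$, $C_{t,h_t(j)} \le y_j + \eps\|y_{\proj{1/\eps}}\|_1$.

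The key point, and why Count--Min's $\min$ estimator works, is that because each $C_{t,h_t(j)} \ge y_j$ deterministically, the minimum over $t$ is bounded above by any single good estimate. Thus $\tilde y_j$ exceeds $y_j + \eps\|y_{\proj{1/\eps}}\|_1$ only if every one of the $L$ trials is bad, which by independence has probability at most $2^{-L} \le \delta$ for $L = \lceil\log_2(1/\delta)\rceil$. I do not expect a genuine obstacle here; the only minor subtlety is isolating the top $1/\eps$ partitions to convert the $\ell_1$ mass into the tail quantity $\|y_{\proj{1/\eps}}\|_1$, which is handled by the collision bound on $H$ above.
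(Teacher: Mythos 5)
Your proof is correct and matches the paper's approach: the paper's proof of this lemma is simply ``implement the \cm sketch of \cite{CormodeM05} on the vector $y$,'' and what you have written out is exactly the standard \cm analysis applied to $y$, including the tail-form error bound obtained by separating the top $O(1/\eps)$ partitions. The one-sided guarantee $\tilde y_j \ge y_j$ in the strict turnstile model is indeed the crucial deterministic property that makes the $\min$ estimator work, and your independence-over-levels argument for boosting to failure probability $\delta$ is the intended one.
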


\begin{proof}
Simply implement the \cm sketch of \cite{CormodeM05} on the vector $y$. Thus on an update $(i,\Delta)$, feed the update $(\oracle(i),\Delta)$ to the \cm sketch and on a query for $y_j$, return the estimate for $j$ in the \cm sketch.
\end{proof}

Using \Lemma{l1partquery}, we can also modify the $b$-tree from \Section{btree} to obtain a no \emph{false negatives} guarantee. More specifically, define a new version of heavy hitters called \emph{threshold heavy hitters with no false negatives}. We say that an algorithm solves threshold $\ell_1$ heavy hitters with no false negatives and failure probability $\delta$, if it supports taking a query value $\phi>0$. If the query value $\phi$ is less than $\eps \|x_{\proj{1/\eps}}\|_1$, then the algorithm may return an arbitrary answer. If $\phi \geq \eps \|x_{\proj{1/\eps}}\|_1$, then with probability at least $1-\delta$, it must return a set $L$ that has size $O(\phi^{-1}\|x\|_1)$ and contains all indices $i$ with $x_i \geq \phi$ and no indices $j$ with $x_j \leq \phi-(\eps/2)\|x_{\proj{1/\eps}}\|_1$. With the remaining probability at most $\delta$, it returns the empty set.

Such algorithms can thus recognize when they err (indicated by an empty return set) on queries for sufficiently large $\phi$. The following lemma essentially follows from the $b$-tree solution in \Theorem{btree} combined with \Lemma{l1partquery}:
\begin{lemma}
\LemmaName{thresholdnoneg}
Suppose $1/n^c \le \delta < 1/2$ where $c>0$. Let $\gamma$ be any constant satisfying $0 < \gamma < \gamma_0$ for some fixed constant $\gamma_0$. Then there is a modification of the $b$-tree described in \Section{btree} and a choice of $b$, which in the strict turnstile $\ell_1$ setting solves threshold heavy hitters with no false negatives and failure probability at most $\delta$. Furthermore, the $b$-tree uses space $O(\eps^{-1} \log n)$, has update time $O(\log n)$, and query time $O( (\eps^{-1}\log n)((\log n)/(\eps\delta))^{\gamma})$. The constants in the big-Oh depend on $1/\gamma$. 
\end{lemma}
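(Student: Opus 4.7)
The plan is to adapt the $b$-tree of Theorem~\ref{thm:btree} with two modifications suited to the strict turnstile $\ell_1$ setting. First, maintain $\|x\|_1$ exactly in a single counter, which is possible because all updates preserve $x \ge 0$. Second, replace each per-level \pcs by the CountMin-based strict turnstile partition point query of \Lemma{l1partquery} with error parameter $\eps' = \eps/2$ and per-query failure probability $\eta = \delta/Q$, where $Q = O(b\log_b n/\eps)$ is the analytic cap on total point queries issued during one query. The traversal rule changes to exploit the threshold $\phi$: starting from $L_0 = \{0\}$, for $r = 1,\ldots,\log_b n$ I point-query every child of every node in $L_{r-1}$ and put into $L_r$ exactly those children with estimate $\tilde{y}_j \ge \phi$; if at any level $|L_r|$ exceeds a fixed $C/\eps$, abort and return $\emptyset$; otherwise return $L_{\log_b n}$.

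Correctness rests on the two properties of \Lemma{l1partquery}. The deterministic bound $\tilde{y}_j \ge y_j$ implies that every ancestor of a $\phi$-heavy hitter is retained at every level, so whenever the algorithm does not abort, $L_{\log_b n}$ contains every index with $x_i \ge \phi$ (no false negatives). The high-probability upper bound is what controls candidate counts and the no-false-positive clause. Conditioning on the good event $\mathcal{E}$ that all $Q$ point queries satisfy their upper guarantee (probability $\ge 1 - Q\eta = 1 - \delta$ by a union bound), a retained node at level $r$ satisfies $y^{(r)}_j \ge \phi - \eps'\|y^{(r)}_{\proj{1/\eps}}\|_1$. The key elementary step is that in strict turnstile, collapsing non-negative coordinates into partitions only shrinks the $\ell_1$ tail, so $\|y^{(r)}_{\proj{1/\eps}}\|_1 \le \|x_{\proj{1/\eps}}\|_1$; combined with the promise $\phi \ge \eps\|x_{\proj{1/\eps}}\|_1$ and $\eps' = \eps/2$, the retention threshold is at least $\phi/2$ and at least $(\eps/2)\|y^{(r)}_{\proj{1/\eps}}\|_1$. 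Splitting retained partitions into the top $1/\eps$ (at most $1/\eps$ of them) versus the rest (whose $\ell_1$ tail is at most $\|y^{(r)}_{\proj{1/\eps}}\|_1$, contributing at most $2/\eps$) yields $|L_r| \le 3/\eps$ on $\mathcal{E}$, so no spurious abort fires. At the leaf level each surviving $i$ then has $x_i \ge \phi - (\eps/2)\|x_{\proj{1/\eps}}\|_1$ (no false positives) and $x_i \ge \phi/2$ (hence $|L| \le 2\|x\|_1/\phi$), meeting the specification.

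The resource calculation copies that of Theorem~\ref{thm:btree} after replacing the CountSketch width $1/\eps^2$ by the CountMin width $1/\eps$. With $b = ((\log n)/(\eps\delta))^\gamma$, one obtains $\log_b n = O((\log n)/\log((\log n)/(\eps\delta)))$ and $\log(1/\eta) = O(\log((\log n)/(\eps\delta)))$, so per-level CountMin space $O(\eps^{-1}\log(1/\eta))$ telescopes across the levels to $O(\eps^{-1}\log n)$, the update time becomes $O(\log n)$, and the query time is $O(Q\log(1/\eta)) = O(\eps^{-1}\log n \cdot ((\log n)/(\eps\delta))^\gamma)$. The principal subtlety, and the place the proof really uses strict turnstile rather than general turnstile, is that the no-false-negative guarantee is deterministic rather than high-probability: a CountSketch-style estimator could lose a heavy hitter to a single sign-cancellation-induced underestimate, but CountMin in the strict turnstile model never can, so the only way the algorithm outputs an incorrect list is via the $\delta$-probability failure of the CountMin upper bound, which the $|L_r| > C/\eps$ trip-wire is calibrated to catch.
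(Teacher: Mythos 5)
Your proposal is correct and follows essentially the same route as the paper's proof: swap the per-level CountSketch for the CountMin-based strict-turnstile point query with $\eps'=\eps/2$, traverse by threshold $\phi$, abort and return $\emptyset$ if any $L_r$ exceeds $O(1/\eps)$, deduce no false negatives from the deterministic lower bound $\tilde y_j\ge y_j$, bound $|L_r|$ on the good event via the tail inequality $\|y^{(r)}_{\proj{1/\eps}}\|_1\le\|x_{\proj{1/\eps}}\|_1$, and set per-query failure $\eta=\delta/Q$ with $Q=O(b\log_b n/\eps)$. The only cosmetic difference is that you spell out the tail-shrinking observation and the $3/\eps$ counting argument in slightly more detail than the paper does.
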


\begin{proof}
Take the $b$-tree solution as described in \Section{btree}. Replace each \pcs for $P_r$ with the algorithm from \Lemma{l1partquery} with approximation factor $\eps'=\eps/2$. This means that at each level of the $b$-tree, we can do a point query for the total mass of a subtree. By \Lemma{l1partquery} these estimates are never underestimates, even when the algorithm errs. And when the algorithm does not err, the returned estimate is at most $\eps'\|y_{\proj{1/\eps}}\|_1 \leq \eps'\|x_{\proj{1/\eps}}\|_1 = (\eps/2)\|x_{\proj{1/\eps}}\|_1$ too high. We set the failure probability for each data structure from \Lemma{l1partquery} to $O(\eps \delta /( b \lg_b n))$.

On a query for all indices $i$ with $x_i \geq \phi$, we traverse the tree roughly as described in \Section{btree}. The main difference is that when choosing the set $L_r$ after having point queried every child of nodes in $L_{r-1}$, we choose $L_r$ to be all nodes where the point query returned an estimate of at least $\phi$. In case $|L_r| > 3 \eps^{-1}$ at any level, we abort and return $L = \emptyset$.

Now observe that if no query to the children of nodes in $L_{r-1}$ err, then for $\phi  \geq \eps \|x_{\proj{1/\eps}}\|_1$, there can be no more than $3\eps^{-1}$ nodes for which the estimate is at least $\phi$. This follows since if the point queries did not err, then the true mass of those subtrees must be at least $\phi-(\eps/2) \|x_{\proj{1/\eps}}\|_1 \geq (\eps/2) \|x_{\proj{1/\eps}}\|_1$. There can be at most $3 \eps^{-1}$ such subtrees. Thus conditioned on not making an error in any point query on the first $r-1$ levels of the $b$-tree, we will visit at most $3 \eps^{-1}$ nodes on level $r$. It follows that we only abort and return an empty list in case some query errs. And if no query errs, we will ask only $O(\eps^{-1} b \lg_b n)$ queries. Since we set the failure probability in each algorithm from \Lemma{l1partquery} to $O(\eps \delta /( b \lg_b n))$, the lemma follows.
\end{proof}

We only need one more ingredient in our solution, namely a property of edge-expander graphs. We have not been able to find a reference for this result, but suspect it is known.

\begin{lemma}\LemmaName{big-component}
If $G  = (V,E)$ is a $d$-regular $\delta$-edge-expander, then for any constant $0<\mu<1$, if we remove any set of at most $(\delta \mu /2)|V|$ vertices in $G$ and their incident edges, then there remains a connected component of size at least $(1-\mu)|V|$.
\end{lemma}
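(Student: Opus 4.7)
The plan is to prove the contrapositive: assume every connected component of $G' := G[V\setminus R]$ has size strictly less than $(1-\mu)|V|$, and derive $|R| > (\delta\mu/2)|V|$. (I am using the standard normalization under which $|\partial_G S| \ge \delta d |S|$ for every $S$ with $|S| \le |V|/2$, so $\delta \le 1$.) The engine is a single structural observation: if $A \subseteq V \setminus R$ is any union of connected components of $G'$ with $|A| \le |V|/2$, then edge expansion of $G$ gives $|\partial_G A| \ge \delta d |A|$, while every $G$-edge leaving $A$ must land in $R$ --- an edge to $(V\setminus R)\setminus A$ would be an edge of $G'$ between two distinct components, which does not exist. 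Since $R$ is incident to at most $d|R|$ edges in total, $\delta d |A| \le d|R|$, i.e.\ $|A| \le |R|/\delta$.

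I would then split on the size of a largest component $C^*$ of $G'$. If $|C^*| \ge |V|/2$, take $A = (V\setminus R) \setminus C^*$; using $|C^*| < (1-\mu)|V|$ strictly gives $|A| > \mu|V| - |R|$, and combining with $|A| \le |R|/\delta$ yields $|R| > \delta\mu|V|/(1+\delta) \ge (\delta\mu/2)|V|$ for $\delta \le 1$. If instead $|C^*| \in ((\mu/2)|V|,\,|V|/2)$, apply the observation directly to $A = C^*$ to obtain $|R| \ge \delta|C^*| > (\delta\mu/2)|V|$. Both subcases contradict the hypothesis $|R| \le (\delta\mu/2)|V|$ immediately.

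The genuinely tricky regime is when $|C^*| \le (\mu/2)|V|$, so every component of $G'$ is small. A natural attempt is to greedily glue components into a single $A$ with $|A| \in ((\mu/2)|V|,\, |V|/2]$ and apply the observation, but when $\mu$ is close to $1$ this greedy can overshoot $|V|/2$, so the single-set approach breaks. My idea for this case is to sum the expansion inequalities $|\partial_G C_i| \ge \delta d|C_i|$ over \emph{all} components $C_i$ simultaneously. The boundaries $\partial_G C_i$ are pairwise disjoint and are all contained in $E(V\setminus R, R)$, whose size is at most $d|R|$, so $\delta d|V\setminus R| \le d|R|$. Rearranging gives $|R| \ge \delta|V|/(1+\delta) \ge (\delta/2)|V| > (\delta\mu/2)|V|$ using $\mu < 1$.

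The main obstacle I expect is precisely this small-components case: one is tempted to mimic the first two cases by producing a single well-sized union of components, but no such choice works uniformly over all $\mu \in (0,1)$. The decisive move is to aggregate the expansion inequality over every component simultaneously rather than trying to isolate a single favorable subset; once that step is in place, the three cases together yield the contradiction, and contrapositively the desired giant component of size at least $(1-\mu)|V|$ must exist.
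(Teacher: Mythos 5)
Your proof is correct and relies on the same engine as the paper's: expansion of $G$ forces any union $A$ of $G'$-components with $|A|\le|V|/2$ to have $|\partial_G A|\ge\delta d|A|$, all of $\partial_G A$ lands in the removed set $R$, and $R$ touches at most $d|R|$ edges. The organizational difference is that you case-split on the largest component $C^*$ (pass to the complement inside $V\setminus R$ when $|C^*|\ge|V|/2$, use $C^*$ alone in the middle regime, and sum over all components when everything is small), whereas the paper does a single unified summation: for each $C_i$ it takes $C_i^*:=\min(C_i,V\setminus C_i)$, observes $|C_i^*|\ge\min\{1,\tfrac{\mu}{1-\mu}\}\,|C_i|\ge\mu|C_i|$ directly from $|C_i|<(1-\mu)|V|$, and sums $|\partial_G C_i|\ge\delta d|C_i^*|\ge\delta\mu d|C_i|$ over all $i$ to get $\delta\mu d(|V|-|R|)>d|R|$ edges touching $R$ when $|R|\le(\delta\mu/2)|V|$. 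Your split buys simpler individual steps; the $\min$-trick buys a uniform one-line aggregate at the price of the short preliminary bound on $|C_i^*|$. One small nit in your Case 1: the chain $\mu|V|-|R|<|A|\le|R|/\delta$ is only informative when $|R|<\mu|V|$, but if $|R|\ge\mu|V|$ then $|R|>(\delta\mu/2)|V|$ is immediate since $\delta<2$, so that subcase is covered either way.
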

\begin{proof}
Let $S \subseteq V$ be any set of at most $(\delta \mu /2)|V|$ vertices in $G$. The number of edges with an endpoint in $S$ is at most $d|S|$ since $G$ is $d$-regular. Let $C_1,\dots,C_m \subseteq V$ be the connected components in $G_{V \setminus S}$, where $G_{V \setminus S}$ is the graph obtained from $G$ by removing the vertices $S$ and all edges incident to $S$. Assume for contradiction that $|C_i| < (1-\mu)|V|$ for all $C_i$. For each $C_i$, let $C_i^*$ denote the smaller of $C_i$ and $V \setminus C_i$. It follows from $|C_i|$ being less than $(1-\mu)|V|$ that 
$$\frac{\mu}{1-\mu}\cdot |C_i| < |C_i^*| \leq \min\left\{\frac{|V|}2,|C_i|\right\} .$$
Since $G$ is a $\delta$-edge-expander and $|C_i^*| \leq |V|/2$, there are at least 
$$\delta d |C_i^*| \geq \delta \min\left\{1,\frac{\mu}{1-\mu}\right\}\cdot d|C_i| \geq \delta \min\{1,\mu\}d|C_i| = \delta \mu d |C_i|$$
edges across the cut $(C_i^*, V \setminus C_i^*)$ in $G$. This cut is just $(C_i, V \setminus C_i)$. But $C_i$ is a connected component in $G_{V \setminus S}$ and thus all edges leaving $C_i$ in $G$ must go to a node in $S$. Therefore, the number of edges incident to a node in $S$ is at least $\sum_i \delta \mu d|C_i| = \delta \mu d(|V|-|S|)$. But $|S| \leq (\delta \mu/2)|V|$ and hence 
$$|V|-|S| \geq (1-\delta \mu/2)|V| \geq \frac{1-\delta \mu/2}{\delta \mu /2}\cdot |S| .$$
We thus have at least 
$$\delta \mu \frac{1-\delta \mu/2}{\delta \mu /2}\cdot d|S|= 2 (1-\delta \mu/2) d |S| > d|S|$$
edges incident to $S$. But this contradicts that there are only $d|S|$ edges incident to $S$.
\end{proof}

The following result due to Alon et al.~\cite{alon:expanders} shows that there exists efficient explicit edge-expanders:
\begin{theorem}[Alon et al.~\cite{alon:expanders}]
\TheoremName{edgeexpander}
There exists a constant $\delta>0$, such that for any positive integer $n$, there exists a $12$-regular $\delta$-edge-expander $G=(V,E)$ with $n/2 \leq |V| \leq n$. Furthermore, $G$ is constructable in time $O(\poly(n))$.
\end{theorem}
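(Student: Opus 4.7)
My plan is to reduce the problem to an explicit spectral-expander construction and then convert spectral expansion into edge expansion, handling the flexibility in $n$ by a simple density argument.

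First, I would start from an explicit family $\{H_k\}$ of constant-degree spectral expanders on a known sequence of sizes, such as the Margulis--Gabber--Galil family (which produces $8$-regular graphs on $k^2$ vertices whose second adjacency eigenvalue in magnitude is bounded by some absolute constant $\lambda_0 < 8$), or any Cayley-graph-based construction with bounded spectral radius. Each $H_k$ can be written down explicitly in time polynomial in its size. To reach exactly degree $12$, I would either superimpose $H_k$ with another explicit low-degree expander on the same vertex set (e.g.\ a suitable Cayley graph of $\mathbb{Z}_k \times \mathbb{Z}_k$), or more crudely pad with self-loops; in the first case the resulting graph $G_k$ retains a positive spectral gap below $12$.

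Second, I would translate spectral expansion into the edge expansion used in the paper via the expander mixing lemma: for a $d$-regular graph whose adjacency matrix has second eigenvalue at most $\lambda$, we have $|\partial S| \ge (d-\lambda)(1 - |S|/|V|)|S|$, so for every $S$ with $|S| \le |V|/2$, $|\partial S| \ge \delta d|S|$ with $\delta = (d-\lambda)/(2d) > 0$. This makes $G_k$ a $\delta$-edge-expander for a uniform $\delta > 0$ depending only on $\lambda_0$.

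Third, I would handle the fact that $n$ is arbitrary, not just of the form $k^2$. Since $(k+1)^2/k^2 \to 1$, for all sufficiently large $n$ there is some $k$ with $k^2 \in [n/2, n]$, so we output $G_k$. For the finitely many small $n$ left over, one simply hard-codes an appropriate $12$-regular expander of the right size (any $12$-regular connected graph of bounded size suffices after absorbing the constants into $\delta$). The overall running time is $\poly(n)$ because each $G_k$ has a polynomial-size description and the right $k$ is found by a trivial search.

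The main obstacle is reconciling the three demands simultaneously: exact $12$-regularity, a single uniform $\delta > 0$ across the whole family, and coverage of every integer $n$ to within a factor of two. The canonical algebraic constructions are rigid in their parameter sets, so the real technical content lies in interpolating between their natural sizes and adjusting the degree to exactly $12$ without collapsing the spectral gap. A cleaner alternative that sidesteps these issues is the zig-zag product of Reingold--Vadhan--Wigderson, which is flexible enough to realize essentially arbitrary constant degrees and whose recursive doubling of sizes automatically guarantees that the sequence of available graph sizes is dense enough to always contain a member within the window $[n/2, n]$.
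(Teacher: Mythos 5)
This theorem is quoted directly from Alon et al.\ \cite{alon:expanders} and the paper gives no inline proof, so there is no in-paper argument to compare your proposal against. Your sketch nonetheless follows a sound and fairly standard route: start from an explicit spectral-expander family, convert the spectral gap to edge expansion via the expander mixing lemma (the same inequality as the paper's \Lemma{alonc}), and cover all $n$ by a density argument over the available sizes plus hard-coding of small cases.

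Two points need repair before this is an actual proof. First, the fallback claim that the zig-zag product's ``recursive doubling of sizes automatically guarantees'' a graph in every window $[n/2,n]$ is incorrect: in the Reingold--Vadhan--Wigderson iteration, each step multiplies the vertex count by a fixed constant equal to the size of the auxiliary graph (typically $D^4$ for some constant degree $D$), which is far larger than $2$, so the natural sizes are much sparser than a factor of two apart and a separate interpolation device is still required. Your Margulis--Gabber--Galil route is actually the cleaner one for the density step, since perfect squares are dense enough that for every $n$ past a small threshold there is an integer $k$ with $k^2 \in [n/2,n]$. Second, raising the degree from $8$ to exactly $12$ is glossed over: the paper declares all graphs simple, so padding with self-loops is not allowed, and superimposing a second $4$-regular graph on the same vertex set can create parallel edges, which you would have to rule out. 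The constant $12$ in the theorem statement reflects the Alon--Schwartz--Shapira elementary replacement-product construction, where it arises by design rather than by post hoc adjustment. Neither gap is deep, but both need to be closed for the argument to go through.
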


We have set the stage for the description of our simpler $\ell_1$ strict turnstile solution. Our new algorithm is described in the following:

\paragraph{Strict Turnstile Algorithm.}
If $\eps = o(1/\lg n)$, we re-execute the reduction in \Section{reduction}, modified to the $\ell_1$ setting. More specifically, we hash to $z = \Theta(1/(\eps \lg(1/n)))$ subproblems. Each subproblem is solved with a threshold heavy hitters structure with $\eps = O(1/\lg n)$ and failure probability $1/\poly(n)$. Thus on a query to report all heavy hitters, we use the fact that we know $\|x\|_1$ exactly in order to query each threshold heavy hitters structure with $\phi = \eps \|x\|_1$. By an analysis similar to \Section{reduction}, with probability at least $1-1/\poly(n)$, this will ensure $\phi \geq \eps\|y_{\proj{1/\eps}}\|_1$ where $y$ denotes the vector corresponding to any of these subproblem. If $\eps$ is $\Omega(1/\lg n)$, we can also reduce to threshold heavy hitters with the same $\eps$ and failure probability $1/\poly(n)$ simply by maintaining one structure and querying it with $\phi=\eps\|x\|_1$. Thus we have now reduced the problem to solving threshold heavy hitters with $\eps = \Omega(1/\lg n)$ in the $\ell_1$ strict turnstile case, where we are guaranteed that the query value $\phi$ is at least $\eps\|x_{\proj{1/\eps}}\|_1$.

We now focus on solving threshold heavy hitters with $\eps = \Omega(1/ \lg n)$ and failure probability $1/\poly(n)$ for a query value $\phi \geq \eps\|x_{\proj{1/\eps}}\|_1$. Following \Section{turnstile}, we construct the 12-regular $\delta$-edge-expander from \Theorem{edgeexpander} with $m= \Theta(\lg n/\lg \lg n)$ vertices. Call it $F$. This takes $\poly \lg n$ time. We store it in adjacency list representation, consuming $O(\lg n/\lg \lg n)$ words of memory.

We then pick hash functions $h_1,\dots,h_m : [n] \to [\poly(\lg n)]$ independently from a pairwise independent family and instantiate data structures $D_1,\dots,D_m$. Each $D_i$ is the threshold heavy hitters with no false negatives from \Lemma{thresholdnoneg}. The failure probability of each structure is $1/\poly(\lg n)$ and the approximation factor is $\eps$. 

On an update $(i,\Delta)$, we compute in $O(\lg n)$ time an encoding $\enc(i)$ of $i$ into a string of length $T = O(\lg n)$ bits by an error-correcting code with constant rate that can correct an $\Omega(1)$-fraction of errors (as in \Section{turnstile}). We again partition $\enc(i)$ into $m$ contiguous bitstrings each of length $t = T/m = \Theta(\lg \lg n)$. We let $\enc(i)_j$ denote the $j$th bitstring. For each $j=1,\dots,m$, we feed the update $(h_j(i) \circ \enc(i)_j \circ h_{\Gamma(j)_1} \circ \cdots \circ h_{\Gamma(j)_{12}}, \Delta)$ to $D_j$ (similarly to \Section{turnstile}).

On a query with threshold $\phi >0$, we query each $D_j$ with $\phi$. Let $L_j$ be the output of $D_j$. If $L_j$ is non-empty, we go through all returned values and check if there are two with the same name (same $h_j(i)$). If so, we overwrite $L_j \gets \emptyset$ and say that $D_j$ failed.  Now in all remaining $L_j$, all names are unique. From these lists, we construct the same layered graph $G$ as described in \Section{overview}. Recall that we only add edges if both endpoints suggest it. Now the crucial observation which makes the $\ell_1$ strict turnstile easier to solve, is that the $D_j$'s guarantee no false negatives. Thus if some $D_j$ did not err and did not fail, then no node in $G$ corresponding to a name $h_j(i)$ returned by $D_j$, can be connected by an edge to a node $h_{j'}(i')$ returned by a $D_{j'}$ for any $i'\neq i$. To see this, notice that the actual index $h_{j'}(i) \circ \enc(i)_{j'} \circ h_{\Gamma(j')_1} \circ \cdots \circ h_{\Gamma(j')_{12}}$ corresponding to $h_{j'}(i)$ must not be in $L_{j'}$ for this to happen. But we are guaranteed no false negatives, so if $D_{j'}$ returns anything, it must have $h_{j'}(i) \circ \enc(i)_{j'} \circ h_{\Gamma(j')_1} \circ \cdots \circ h_{\Gamma(j')_{12}}$ in its output. To summarize, for every index $i$ with $x_i > \phi$, all $D_j$ that did not err and did not fail produce nodes corresponding to names $h_j(i)$, and these nodes are not connected to nodes of name $h_{j'}(i')$ for an $i'\neq i$. Furthermore, for two $D_j$ and $D_{j'}$ that both did not err and fail, the edges between the returned nodes are in the graph $G$. But the hash function $h_j$ have collisions with probability $1/\poly(\log n)$ and the data structures $D_j$ have failure probability $1/\poly(\lg n)$. Thus with probability $1-1/\poly(n)$, there are at most a $\gamma$-fraction of the $D_j$ that either fail or err. Here we can choose $\gamma$ as an arbitrarily small constant by increasing the polylogs in $1/\poly(\log n)$ failure probability. By \Lemma{big-component}, we get that with probability $1-1/\poly(n)$, each index $i$ with $x_i \geq \phi$ has a connected component in $G$ of size at least $0.9m$. We now finish off by computing all connected components in $G$. For each component of size at least $0.9 m$, we stitch together the pieces $\enc(i)_j$ to obtain 90\% of the codeword $\enc(i)$. We decode $i$ in linear time in its bit length, which is $O(\lg n)$ time. Finally, we verify all these returned indices by doing a point query against a separate \cm structure with failure probability $1/\poly(n)$ and the same $\eps$. This also costs $O(\lg n)$ time per index.

Since we are guaranteed that each returned list $L_j$ has size at most $\phi^{-1}\|x\|_1$, the query time is $O(\phi^{-1}\|x\|_1 \lg n)$ for decoding codewords and for verifying the recovered indices (the norm of the vector represented by each $D_j$ is precisely the same as the norm of $x$). Now $\phi$ is $\eps \|x\|_1$, so the time spent is $O(\eps^{-1} \lg n)$. For querying the structures $D_j$, \Lemma{thresholdnoneg} gives us a query time of $O( (\eps^{-1}\log \log n)((\log \log n \poly(\log n))/\eps))^{\gamma})$ per $D_j$. Since $\eps = \Omega(1/\log n)$, we can choose $\gamma$ sufficiently small to obtain a query time of $\eps^{-1}\log^{\gamma'} n$ for any constant $\gamma'>0$. Since we have $m=O(\lg n/\lg \lg n)$, the total time for querying all $D_j$ is $O(\eps^{-1}\lg^{1+\gamma}n)$ for any constant $\gamma>0$.

For the update time, observe that each $D_j$ has update time $O(\lg \lg n)$ and there are $O(\lg n/\lg \lg n)$ such structures to be updated, for a total of $O(\lg n)$ time. Computing $\enc(i)$ also takes $O(\lg n)$ time. The space is optimal $O(\eps^{-1} \lg n)$ words.

\begin{theorem}\TheoremName{l1strictresult}
For any constant $0<\gamma<1/2$, heavy hitters with error parameter $\eps\in(0,1/2)$ and failure probability $1/\poly(n) < \delta < 1/2$ can be solved in the strict $\ell_1$ turnstile setting with optimal $O(\eps^{-1}\log n)$ space, update time $O(\log n)$, query time $O(\eps^{-1}\log^{1+\gamma} n)$, and failure probability $\delta$.
\end{theorem}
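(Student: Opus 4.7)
The plan is to formalize the algorithm and analysis already sketched in the preceding paragraphs, so the proof is essentially a careful bookkeeping exercise tying together the ingredients established earlier in the paper. First I would reduce to the threshold-heavy-hitters problem with $\eps' = \Omega(1/\log n)$ and a query threshold $\phi \ge \eps' \|x_{\proj{1/\eps'}}\|_1$. This reduction exploits two features unavailable in the general-turnstile case: $\|x\|_1$ can be tracked exactly with a single counter in the strict model, so we may query at $\phi = \eps'\|x\|_1$; and when $\eps = o(1/\log n)$ we first hash into $z = \Theta(1/(\eps\log n))$ substreams via a $\Theta(\log n)$-wise independent $h$, applying an $\ell_1$-version of \Theorem{reduce-threshold} (Chernoff on the number of heavy hitters landing in each bucket, plus a Bernstein-type tail bound to control the light mass) to conclude that with probability $1 - 1/\poly(n)$ every $\eps$-heavy hitter of $x$ is $\eps'$-heavy in its substream.

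Next, for the resulting threshold instance, I would instantiate the construction from the paragraph preceding the theorem: a $12$-regular $\delta$-edge-expander $F$ on $m = \Theta(\log n/\log\log n)$ vertices from \Theorem{edgeexpander}; pairwise independent $h_1,\ldots,h_m:[n]\to[\poly(\log n)]$; a linear-time constant-rate error-correcting code $\enc:[n]\to\{0,1\}^T$ with $T = O(\log n)$ broken into $m$ chunks of length $t = \Theta(\log\log n)$; and $m$ copies $D_1,\ldots,D_m$ of the no-false-negatives threshold structure of \Lemma{thresholdnoneg}, each with failure probability $1/\poly(\log n)$ and error parameter $\eps'/2$. On an update $(i,\Delta)$ we feed $D_j$ the tagged key $h_j(i)\circ \enc(i)_j \circ h_{\Gamma(j)_1}\circ\cdots\circ h_{\Gamma(j)_{12}}$. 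The space and update-time bounds follow by summing over the $m$ levels: space $m\cdot O(\eps'^{-1}\log\log n) = O(\eps'^{-1}\log n)$ words, update time $m\cdot O(\log\log n) = O(\log n)$.

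The heart of the correctness argument is the query phase. After querying each $D_j$ at $\phi$ and filtering each returned list to unique names, I build the layered graph $G$ by inserting only mutually suggested edges, exactly as in \Section{overview}. The key observation, which is the payoff of strict turnstile, is that \Lemma{thresholdnoneg} produces no false negatives: whenever $D_j$ returns anything (rather than $\emptyset$), its output contains the true chunk of every index with $x_i \ge \phi$. Therefore a node in layer $j$ coming from a heavy hitter $i$ cannot be connected to a node in layer $j'$ coming from a different heavy hitter $i'$, because that would require $D_{j'}$ to have emitted output while omitting $i$'s chunk. Let $\gamma_0$ be an arbitrarily small constant; by tuning polylog parameters, hash collisions among the $O(1/\eps')$ heavy hitters occur at any level $j$ with probability $O(1/\log^C n)$, and each $D_j$ fails with probability $O(1/\log^C n)$, so a Chernoff bound over $j\in[m]$ gives at most a $\gamma_0$-fraction of bad levels with probability $1 - 1/\poly(n)$. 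Applying \Lemma{big-component} to the expander $F$ with $\mu$ a small constant then yields a connected component of size $\ge 0.9m$ in $G$ for each heavy hitter, vertex-disjoint from all others. A standard connected-components sweep, followed by codeword reassembly, decoding of the $\Omega(1)$-error code, and a final \cm point-query verification to eliminate spurious candidates, produces the correct output list.

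The main obstacle I anticipate is not any single step but the careful alignment of failure probabilities and parameters. In particular I must (i) show the $\ell_1$ version of \Theorem{reduce-threshold} gives $\phi \ge \eps'\|x_{\proj{1/\eps'}}\|_1$ on each substream whp; (ii) verify that the three independent randomness sources (the outer hash $h$, the inner hashes $h_j$, and the internal randomness of the $D_j$) combine, via Chernoff plus union bounds over $O(1/\eps)$ candidate heavy hitters, to an overall failure probability $\delta$; and (iii) confirm that Lemma~\ref{lem:big-component} applies with the available $\gamma_0$ given the expansion constant $\delta$ of Alon et al. The timing is then routine: querying all $D_j$'s costs $m\cdot O(\eps'^{-1}\log^{\gamma''}\log n)\cdot \poly\log\log n = O(\eps^{-1}\log^{1+\gamma}n)$ for any desired constant $\gamma>0$ by choosing the internal exponent of \Lemma{thresholdnoneg} small enough, and the decoding plus verification cost $O(\eps^{-1}\log n)$ in total, matching the claimed query bound.
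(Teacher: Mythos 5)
Your proposal matches the paper's proof essentially step for step: the same reduction to threshold heavy hitters exploiting exact $\|x\|_1$ tracking, the same no-false-negatives $b$-tree variant from \Lemma{thresholdnoneg}, the same expander tagging, the same key observation that no false negatives rules out spurious cross-heavy-hitter edges so connected components suffice (via \Lemma{big-component}), and the same CM-sketch verification. The only cosmetic wrinkle is that the paper discards an entire level's list when two returned values share a name (marking that $D_j$ as failed) rather than ``filtering to unique names,'' but this does not change the analysis since both count such a level toward the small fraction of bad levels.
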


\section{Expected Time for Strict Turnstile $\ell_1$}\SectionName{expected-time}
If we are satisfied with expected query time, then there is an even simpler and very practical algorithm for the strict turnstile $\ell_1$ case. The algorithm is based on the $b$-tree, but with $b=2$, i.e. a binary tree. As in \Section{strict}, we solve the more standard variant of heavy hitters in which we must return a set $L$ containing all indices $i$ where $x_i \geq \eps\|x\|_1$ and no indices $j$ such that $x_j < (\eps/2)\|x\|_1$. 

For what follows, we assume vector entries in $x\in\R^n$ are $0$-based indexed, e.g.\ $x = (x_0,\ldots,x_{n-1})^T$. Conceptually we create virtual streams $J_0,\ldots,J_{{\log_2 n}}$. For the stream $J_r$ we solve an instance of partition point query with $2^r$ sets in the partition defined as follows. At the level $r = \log_2 n$, the partition is $\mathcal{P}_{\log_2 n} = \mathcal{P} = \{S_0,\ldots,S_{n-1}\}$, with oracle $\oracle_{\log_2 n} = \oracle$. For $0 \le r < \log_2 n$, the partitions are defined inductively: we imagine a complete binary tree on $\{0,\ldots,n-1\}$ where the $j$th leaf node denotes $S_j$, and for $r < \log_2 n$ each internal node represents a set equal to the union of the sets at its children. Said succinctly, for $r<\log_2 n$, the oracle evaluation $\oracle_r(i)$ equals $\oracle(i) / 2^{\log_2 n - r}$. Note that $\oracle_r$ can be implemented in constant time by bitshifting right $\ell(\log_2 n - r)$ positions ($\log_2 n$ is a fixed value that can be computed once and stored during initialization).

When an update $(i,v)$ is seen in the actual stream, it is fed into a \pcs $P_r$ at level $r$ for each $0\le r\le \log_2 n$, with partition $\mathcal{P}_r$ and oracle $\oracle_r$. Each $P_r$ is chosen to be the partition point query structure from \Lemma{l1partquery} with error parameter $\eps' = \eps/2$ and failure probability $\delta = 1/4$. This has each $P_r$ using $O(\eps^{-1})$ words of space, supporting updates and queries in $O(1)$ time. We also maintain one \cm sketch on $x$ with error parameter $\eps' = \eps/2$ and failure probability $1/\poly(n)$. 

The space is $O(\eps^{-1}\lg n)$ and the update time is $O(\lg n)$. 

To answer a heavy hitters query, we walk down the tree as follows (our goal is to have, at all times $0\le r\le \log_2 n$, a list $L_r$ of size $O(1/\eps^{-1})$ which contains all the $\eps$-heavy hitter partitions at level $r$). First we set $L_0 = \{0\}$. Then for $r=1$ to $\log_2 n$, we point query every child of an element of $L_{r-1}$. We then set $L_r$ to be the nodes at level $r$ whose point query result was at least $\eps\|x\|_1$. Recall from \Section{strict} that in the strict turnstile $\ell_1$ setting, we can maintain $\|x\|_1$ exactly. Once we have computed $L_{\log_2 n}$, we point query each element in the list against the \cm sketch and filter out those indices where the returned value is less than $\eps \|x\|_1$.

The key insight is that the data structure from \Lemma{l1partquery} never returns an underestimate. Thus the list $L_{\log_2 n}$ will always contain all indices $i$ with $x_i \geq \eps\|x\|_1$. The only issue is that we may consider more candidate heavy hitters than really exist, which might increase the query time $t_q$. We show that in fact $t_q$ remains small {\em in expectation}.

Let the $2n-1$ nodes in the conceptual binary tree be called $u_1,\ldots,u_{2n-1}$. During query, we traverse this tree starting at the root. While visiting a node $u$ at level $r$, we query it in $P_r$ and only recurse to its children if the subtree corresponding to $u$ is declared to have mass at least $\eps\|x\|_1$ at level $r$. For $j\in[2n-1]$ let $Y_j$ be an indicator random variable for whether $P_r$ declares $u_j$ as having mass at least $\eps\|x\|_1$. Note the runtime of a query is $\sum_j Y_j$. Thus the expected runtime satisfies
$$
t_q = \E \sum_{j=1}^{2n-1} Y_j  = \sum_j \E Y_j .
$$

Now, what is $\E Y_j$? Call a node $u_j$ which {\em actually} corresponds to a subtree of mass at least $(\eps/2)\|x\|_1$ \emph{heavy}. Then for heavy $u_j$, we use the simple upper bound $\E[Y_j] \leq 1$; there are at most $2\eps^{-1}\lg n$ such heavy $j$ summed over the entire tree. Otherwise, if $d$ is the distance to the lowest ancestor $v$ of $u_j$ which is heavy, then for $Y_j$ to be $1$, it must be that every point query to an ancestor of $u_j$ below $v$ must fail. This happens with probability $1/4^{d-1}$ since the data structures are independent across the levels of the tree. However, if we fix $d$ and ask how many nodes in the entire tree have shortest distance to a heavy ancestor equal to $d$, then that is at most $(2\eps^{-1}\lg n )\cdot 2^{d-1}$.

Therefore the expected query time $t_q$ for traversing the tree is:
$$
\sum_j \E Y_j \le (\eps^{-1}\lg n) \cdot \sum_{r=0}^\infty \left(\frac 24\right)^r = O(\eps^{-1}\lg n) .
$$
Finally, we also have $|L_{\lg_2 n}|$ queries to a \cm sketch with failure probability $1/\poly(n)$. These queries cost $O(\lg n)$ time each. Thus we need to bound $O(\lg n) \cdot \E |L_{\lg_2 n}|$. By an argument similar to the above, consider a leaf $u_j$ corresponding to an index $j$ with $x_j \leq (\eps/2)\|x\|_1$. Let $d$ denote the distance from $u_j$ to its nearest heavy ancestor. For $u_j$ to be in $L_{\lg_2 n}$, every point query on this path must fail. This happens with probability $1/4^{d-1}$. Now how many \emph{leaves} can have shortest distance to a heavy ancestor equal to $d$? Since we are only considering leaves, ancestors of distance $d$ are at the exact same level for all leaves. But a level has at most $2 \eps^{-1}$ heavy nodes, and thus there are at most $(2 \eps^{-1}) \cdot 2^{d-1}$ leaves with distance exactly $d$ to their nearest heavy ancestor. By an argument similar to the above, we get that $\E |L_{\lg_2 n}| = O(\eps^{-1})$. We conclude:
\begin{theorem}\TheoremName{l1strictexpected}
For any constant $0<\gamma<1/2$, heavy hitters with error parameter $\eps\in(0,1/2)$ and failure probability $1/\poly(n) < \delta < 1/2$ can be solved in the strict $\ell_1$ turnstile setting with optimal $O(\eps^{-1}\log n)$ space, update time $O(\log n)$, expected query time $O(\eps^{-1}\log n)$, and failure probability $1/\poly(n)$.
\end{theorem}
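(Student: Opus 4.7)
The plan is to verify the four parameters (space, update time, expected query time, failure probability) in turn. The structure under analysis is the binary version of the $b$-tree sketched just above: a conceptual binary tree on $[n]$, with a partition point query sketch $P_r$ from \Lemma{l1partquery} at each level $r$ (constant failure probability $1/4$, error parameter $\eps/2$), plus an auxiliary \cm sketch on $x$ itself with error $\eps/2$ and failure probability $1/\poly(n)$ used as a final filter. Since each $P_r$ uses $O(\eps^{-1})$ words and supports $O(1)$ updates and queries, the $\log_2 n$ levels plus the auxiliary \cm sketch give the claimed $O(\eps^{-1}\log n)$ space and $O(\log n)$ update time essentially for free.

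For correctness I would use the fact, emphasized in \Lemma{l1partquery}, that the strict turnstile $\ell_1$ partition point query structure \emph{never underestimates}. Thus, for every true heavy hitter $i$ with $x_i\ge \eps\|x\|_1$, all of its ancestors are reported with mass $\ge \eps\|x\|_1$ at every level, so $i\in L_{\log_2 n}$ with probability $1$. The only way to fail is for the final \cm verification step to spuriously certify some $j$ with $x_j<(\eps/2)\|x\|_1$ as heavy; since \cm has failure probability $1/\poly(n)$ and $|L_{\log_2 n}|\le \poly(n)$, a union bound gives overall failure probability $1/\poly(n)$.

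For the expected query time, I would set $Y_j$ to be the indicator that node $u_j$ of the conceptual binary tree is visited during the query, so the traversal cost is $\sum_j Y_j$. Call a node \emph{heavy} if its subtree actually has mass $\ge (\eps/2)\|x\|_1$; there are at most $2\eps^{-1}$ heavy nodes per level, and hence $O(\eps^{-1}\log n)$ heavy nodes in total, contributing $O(\eps^{-1}\log n)$ to $\E\sum Y_j$ via the trivial bound $\E Y_j\le 1$. For a non-heavy $u_j$ at distance $d$ below its nearest heavy ancestor, $Y_j=1$ only if every point query along the $d-1$ strictly non-heavy ancestors overestimates the subtree mass to at least $\eps\|x\|_1$; since \Lemma{l1partquery} never underestimates, each such overestimate requires that sketch to fail, which by independence across levels happens with probability at most $4^{-(d-1)}$. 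The number of nodes with nearest heavy ancestor at distance exactly $d$ is at most $(2\eps^{-1}\log n)\cdot 2^{d-1}$, so summing the geometric series $\sum_d 2^{d-1}/4^{d-1}$ yields an $O(\eps^{-1}\log n)$ bound on the expected traversal cost.

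Finally I would apply the same argument to the expected size of $L_{\log_2 n}$: restricted to leaves, a level has at most $2\eps^{-1}$ heavy nodes (not $2\eps^{-1}\log n$), so $\E|L_{\log_2 n}|=O(\eps^{-1})$, and the $O(\log n)$-time \cm verification of each returned leaf contributes $O(\eps^{-1}\log n)$ more in expectation. The main obstacle is getting this last accounting right: one has to notice that at the leaf level the heavy-node bound does not pick up an extra $\log n$ factor (because all leaves are at the same depth, so only one level's worth of heavy ancestors is available at any fixed distance $d$), which is precisely what keeps the \cm verification cost at $O(\eps^{-1}\log n)$ rather than $O(\eps^{-1}\log^2 n)$. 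Combining the two expectations yields the stated $O(\eps^{-1}\log n)$ expected query time.
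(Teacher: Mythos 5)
Your proposal is correct and follows the paper's proof essentially verbatim: same binary $b$-tree with per-level sketches from \Lemma{l1partquery} at constant failure probability, same decomposition of $\E\sum_j Y_j$ into heavy nodes plus a geometric series over distance $d$ to the nearest heavy ancestor, and the same observation that at the leaf level only $O(\eps^{-1})$ (not $O(\eps^{-1}\log n)$) heavy ancestors are available at any fixed $d$, which is what keeps the \cm verification cost at $O(\eps^{-1}\log n)$. The only cosmetic difference is that you define $Y_j$ as the visitation indicator rather than the ``declared heavy'' indicator, which is if anything the cleaner formulation of the same bound.
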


\end{document}